\pgfplotsset{compat=1.17, width=7cm, height=6cm}
\newtheorem{theorem}{Theorem}[section]
\newtheorem{lemma}{Lemma}[section]
\theoremstyle{definition}
\newtheorem{defi}{Definition}[section]
\title{Finding a Lower Bound for k-Unbounded Hamiltonian Cycles}
\author{Albert Jiang}
\begin{document}

\maketitle

\begin{abstract}
Methods to determine the existence of Hamiltonian Cycles in graphs have been extensively studied. However, little research has been done following cases when no Hamiltonian Cycle exists. Let a vertex be ``unbounded" if it is visited more than once in a path. Furthermore, let a k-Unbounded Hamiltonian Cycle be a path with finite length that visits every vertex, has adjacent start and end vertices, and contains k unbounded vertices. We consider a novel variant of the Hamiltonian Cycle Problem in which the objective is to find an m-Unbounded Hamiltonian Cycle where m is the minimum value of k such that a k-Unbounded Hamiltonian Cycle exists. We first consider the task on well-known non-Hamiltonian graphs. We then provide an exponential-time brute-force algorithm for the determination of an m-Unbounded Hamiltonian Cycle and discuss approaches to solve the variant through transformations to the Hamiltonian Cycle Problem and the Asymmetric Traveling Salesman Problem. Finally, we present a polynomial-time heuristic for the determination of an m-Unbounded Hamiltonian Cycle that is also shown to be an effective heuristic for the original Hamiltonian Cycle Problem.
\end{abstract}

\BlankLine

\section{Introduction}
One of the most explored NP-complete problems is the Hamiltonian Cycle Problem (HCP), in which the objective is to identify a simple cycle that passes through every vertex of an undirected graph. The HCP is a subproblem of the more famous Traveling Salesman Problem (TSP), where the objective is to find a Hamiltonian Cycle with minimum length. However, both of these problems are closely related and have attracted large bodies of research.

Due to its NP-complete nature, the HCP can generally be divided into two different approaches: exponential-time algorithms, in which a solution is guaranteed to be found if one exists, and polynomial-time heuristics, in which a solution can be found in the vast majority of cases with greatly reduced runtimes. The Held-Karp Algorithm \cite{Held-Karp} utilizes a dynamic programming approach to solve both the HCP and TSP and has served as a benchmark for the time complexity of subsequent algorithms in general graphs. Björklund \cite{Held-KarpOptimization} proposed a Monte Carlo algorithm that is able to determine Hamiltonicity in a probability exponentially small in the number of vertices with a smaller exponential base than the Held-Karp Algorithm. Martello \cite{Backtrack} proposed an algorithm for the HCP in digraphs utilizing backtracking and low degree vertex selection that greatly optimized the exhaustive search method. Pósa's idea of a \textit{rotation} \cite{Posa-Rotation} and Komlós and Szemerédi's notion of a \textit{cycle extension} \cite{Cycle-Extension} revolutionized HCP heuristics due to their ability to easily modify the end vertices of a path. Examples of heuristics utilizing these techniques include HAM \cite{HAM}, SparseHAM \cite{Sparse-HAM}, and the most recent HybridHAM \cite{HybridHAM}. The Lin-Kernighan Heuristic \cite{Lin-Kernighan} utilizes \textit{k-opt transformations} and is one of the most effective approaches for the TSP. Its complex implementations have also been shown to be efficient HCP solvers \cite{LKH-Implementation}. The idea of k-opt transformations inspired the deterministic Snakes and Ladders Heuristic \cite{SnakesLadders}, which has been shown to reliably find Hamiltonian Cycles in difficult instances. Heuristics have also been developed to determine Hamiltonian Cycles in special graphs, including k-connected graphs \cite{SpecialGraph1}, solid grid graphs \cite{SpecialGraph2}, graphs of bounded tree-width \cite{SpecialGraph3}, etc. 

A less studied area of research explores variations of the Hamiltonian Cycle Problem (excluding the TSP and its variants). In the Minimum Labeling Hamiltonian Cycle Problem (MLHCP), a color is assigned to every edge of a graph and the objective is to find a Hamiltonian Cycle using the minimum number of distinct colors. Jozefowiez et al. proposed a branch-and-cut algorithm \cite{MLHCP2} and Cerulli et al. described several heuristic approaches for this problem \cite{MLHCP}. Another interesting variant is the Parity Hamiltonian Cycle Problem (PHCP), in which the objective is to find a closed walk that visits every vertex an odd number of times allowing for the revisiting of vertices and edges. Nishiyama et al. introduce and provide a comprehensive analysis of the PHCP, including a linear-time algorithm to solve the problem while visiting each edge no greater than 4 times \cite{PHCP}. Other variations include the Rainbow, Minimum Flow Cost, and Multi-Objective Hamiltonian Cycle Problems \cite{rainbowHCP, minimumFlowHCP, multiObjectiveHCP}.

In this paper, we introduce the notion of a \textit{k-Unbounded} Hamiltonian Cycle. For the remainder of the paper, a \textit{path} will allow for edges and vertices to appear multiple times. If an unbounded vertex is a vertex that is visited more than once, then a k-Unbounded Hamiltonian Cycle is a closed path containing exactly k unbounded vertices. In this context we propose the \textit{k-Unbounded Hamiltonian Cycle Problem}, which is to find an \textit{m-Unbounded} Hamiltonian Cycle in an undirected graph where m is the minimum value of k such that a k-Unbounded Hamiltonian Cycle exists. The k-Unbounded HCP is NP-Complete, as it is necessary to determine the existence of a Hamiltonian Cycle (or a ``0-Unbounded" Hamiltonian Cycle)  before finding an m-Unbounded Hamiltonian Cycle. Thus, the k-Unbounded HCP may be considered a more difficult extension of the original HCP. 

We first consider the k-Unbounded HCP in two well-known non-Hamiltonian classes of graphs: subcases of the generalized Petersen graphs and trees. For the former, we analyze the bounds of $m$ and propose two constructions, and for the latter, we establish linear-time algorithms to identify k-Unbounded Hamiltonian Cycles and k-Unbounded Hamiltonian Paths. We then provide an exponential-time brute force algorithm inspired by the Held-Karp Algorithm that utilizes a multi-source BFS to account for the existence of unbounded vertices. We also propose a transformation in which the HCP may be applied to solve the decision variant of the problem and a transformation in which the Asymmetric Traveling Salesman Problem (ATSP), or the TSP with directed edges, may be applied to find an m-Unbounded Hamiltonian Cycle. 

Finally, we propose the deterministic Unbounded Heuristic with an effective cubic time complexity that incorporates rotations and the novel idea of a preemptive cycle check into a shortest path algorithm. Experimental results on randomly generated graphs show that while the heuristic is unable to identify m-Unbounded Hamiltonian Cycles in every instance in sparse graphs, it is able to find m-Unbounded Hamiltonian Cycles in the \textit{vast majority} of cases and with a very low margin of error. We also propose a variation called the Fast Unbounded Heuristic that runs several times faster in random graphs in exchange for suboptimal accuracy. Both heuristics are also shown to be competitive solvers of the original HCP, and we compare both accuracy and runtime to state-of-the-art HCP heuristic(s) through established benchmarks.

Besides addressing an interesting and intuitive variant of the HCP, the proposed variant may also have real-world applications. A potential implementation of the k-Unbounded HCP may be found in minimizing the traffic caused by delivery services. Oftentimes, networks connecting points of interest do not contain Hamiltonian Cycles, so the Unbounded Heuristic may be implemented as a routing algorithm to direct delivery trucks to a minimum number of ``hubs" of traffic so that resources can be efficiently invested into a small number of regions. We suspect that the k-Unbounded HCP may be able to address problems that were never considered simply because a Hamiltonian Cycle does not exist as well as serve as an extension to existing applications of the HCP.    

\section{Preliminaries}
\subsection{Definitions}
\begin{defi}[Unbounded Vertex]
An \textit{unbounded vertex} is a vertex that can appear an arbitrarily large number of times in a path
\end{defi}
\noindent A \textit{bounded} vertex is a vertex that appears exactly once in a path.
\begin{defi}[k-Unbounded Hamiltonian Path]
A \textit{k-Unbounded Hamiltonian Path} is a path with finite length that visits every vertex and contains k unbounded vertices. 
\end{defi}

\begin{defi}[k-Unbounded Hamiltonian Cycle]
A \textit{k-Unbounded Hamiltonian Cycle} is a k-Unbounded Hamiltonian Path with adjacent starting and ending vertices.
\end{defi}

\noindent Throughout the paper, $m$ will refer to the minimum value of k such that a k-Unbounded Hamiltonian Path or Cycle exists.

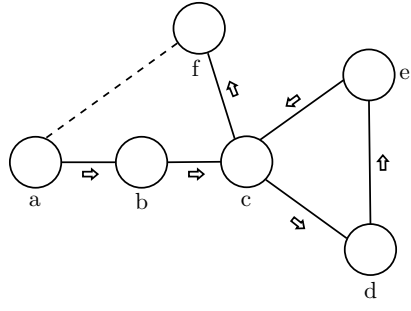
\begin{figure}[htbp] \label{figHcycle}
    \centering
    \tikzset{every picture/.style={line width=0.75pt}} 
\resizebox{5.5cm}{!}{
    \begin{tikzpicture}[x=0.75pt,y=0.75pt,yscale=-1,xscale=1]
    
\draw   (270.29,155) .. controls (270.29,146.94) and (276.83,140.4) .. (284.89,140.4) .. controls (292.95,140.4) and (299.49,146.94) .. (299.49,155) .. controls (299.49,163.07) and (292.95,169.6) .. (284.89,169.6) .. controls (276.83,169.6) and (270.29,163.07) .. (270.29,155) -- cycle ;
\draw   (330.5,155) .. controls (330.5,146.94) and (337.04,140.4) .. (345.1,140.4) .. controls (353.17,140.4) and (359.7,146.94) .. (359.7,155) .. controls (359.7,163.07) and (353.17,169.6) .. (345.1,169.6) .. controls (337.04,169.6) and (330.5,163.07) .. (330.5,155) -- cycle ;
\draw   (390.22,154.72) .. controls (390.22,146.66) and (396.76,140.12) .. (404.82,140.12) .. controls (412.88,140.12) and (419.42,146.66) .. (419.42,154.72) .. controls (419.42,162.78) and (412.88,169.32) .. (404.82,169.32) .. controls (396.76,169.32) and (390.22,162.78) .. (390.22,154.72) -- cycle ;
\draw   (363.23,78.29) .. controls (363.23,70.23) and (369.77,63.69) .. (377.83,63.69) .. controls (385.9,63.69) and (392.43,70.23) .. (392.43,78.29) .. controls (392.43,86.35) and (385.9,92.89) .. (377.83,92.89) .. controls (369.77,92.89) and (363.23,86.35) .. (363.23,78.29) -- cycle ;
\draw   (459.65,105) .. controls (459.65,96.94) and (466.18,90.4) .. (474.25,90.4) .. controls (482.31,90.4) and (488.85,96.94) .. (488.85,105) .. controls (488.85,113.07) and (482.31,119.6) .. (474.25,119.6) .. controls (466.18,119.6) and (459.65,113.07) .. (459.65,105) -- cycle ;
\draw   (460.5,205) .. controls (460.5,196.94) and (467.04,190.4) .. (475.1,190.4) .. controls (483.17,190.4) and (489.7,196.94) .. (489.7,205) .. controls (489.7,213.07) and (483.17,219.6) .. (475.1,219.6) .. controls (467.04,219.6) and (460.5,213.07) .. (460.5,205) -- cycle ;
\draw    (299.49,155) -- (330.5,155) ;
\draw    (359.7,155) -- (390.22,154.72) ;
\draw    (412.43,142.14) -- (459.89,107.89) ;
\draw    (415.67,165) -- (461.67,198.56) ;
\draw    (398,141.5) -- (382.78,92.56) ;
\draw    (474.25,119.6) -- (475.1,190.4) ;
\draw (311.8,160.4) -- (316.84,160.4) -- (316.84,159) -- (320.2,161.8) -- (316.84,164.6) -- (316.84,163.2) -- (311.8,163.2) -- cycle ;
\draw   [dashed] (365.44,86.78) -- (290.33,141.44) ;
\draw   (372,160.4) -- (377.04,160.4) -- (377.04,159) -- (380.4,161.8) -- (377.04,164.6) -- (377.04,163.2) -- (372,163.2) -- cycle ;
\draw   (431.8,186.48) -- (435.73,189.63) -- (436.61,188.54) -- (437.48,192.82) -- (433.11,192.91) -- (433.98,191.82) -- (430.05,188.67) -- cycle ;
\draw   (480.87,159.42) -- (480.79,154.38) -- (479.39,154.41) -- (482.13,151) -- (484.99,154.31) -- (483.59,154.34) -- (483.67,159.38) -- cycle ;
\draw   (435,118.81) -- (430.96,121.82) -- (431.8,122.95) -- (427.43,122.71) -- (428.45,118.46) -- (429.29,119.58) -- (433.33,116.57) -- cycle ;
\draw   (397.08,117.22) -- (395.19,112.55) -- (393.89,113.07) -- (395.22,108.91) -- (399.08,110.97) -- (397.78,111.5) -- (399.67,116.17) -- cycle ;

\draw (279.56,173) node [anchor=north west][inner sep=0.75pt]   [align=left] {a};
\draw (340,171) node [anchor=north west][inner sep=0.75pt]   [align=left] {b};
\draw (400,173) node [anchor=north west][inner sep=0.75pt]   [align=left] {c};
\draw (470,222) node [anchor=north west][inner sep=0.75pt]   [align=left] {d};
\draw (490,100) node [anchor=north west][inner sep=0.75pt]   [align=left] {e};
\draw (372.06,95) node [anchor=north west][inner sep=0.75pt]   [align=left] {f};

\end{tikzpicture}
}
    \caption{$P = (a, b, c, d, e, c, f)$ is a valid 1-Unbounded Hamiltonian Path with $c$ unbounded. Edge $af$ makes $P$ a 1-Unbounded Hamiltonian Cycle}
\end{figure}

\begin{defi}[Tree]
A \textit{tree} is an acyclic graph in which there is a unique path between every pair of vertices.
\end{defi}
\noindent A \textit{leaf} is a vertex of a tree with degree $1$.

\begin{defi}[Component]
A \textit{component} is a maximal induced subgraph of a graph $G$ where there exists a path between any pair of vertices of the subgraph.
\end{defi}
\noindent $C_v$ denotes the component containing vertex $v$ unless specified as otherwise.

\begin{defi}[Cut Vertex]
A \textit{cut vertex} is a vertex that when removed, increases the number of components.
\end{defi}

\begin{defi}[Degree]
A \textit{degree} of a vertex $v$, denoted as $d_v$, will be defined as the number of unvisited neighbours of $v$. 
\end{defi}

\begin{defi}[Distance]
The \textit{distance} between two vertices is the length of the shortest path between them. If the graph is weighted, then the distance refers to the least weighted distance.
\end{defi}

\noindent An undirected graph $G$ is characterized by a set of vertices $V$ and a set of edges $E$. Throughout the paper, $N=|V|$ and $E$ is used interchangeably with $|E|$. 
\subsection{Theorems}
\begin{theorem} \label{thm1}
 If there exists a bounded cut vertex, a k-Unbounded Hamiltonian Cycle does not exist
\end{theorem}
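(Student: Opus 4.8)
The plan is to argue by contradiction, following the same spirit as the classical fact that a graph with a cut vertex admits no Hamiltonian cycle; the only new feature is that I must track that the cut vertex is visited exactly once. Suppose a $k$-Unbounded Hamiltonian Cycle $P$ exists in which some cut vertex $v$ is bounded, and write $P$ as a cyclic sequence of vertices $w_0, w_1, \dots, w_{L-1}$ with each $w_i$ adjacent to $w_{i+1}$, indices taken modulo $L$ so that $w_{L-1}$ is adjacent to $w_0$. Because $v$ is bounded it occurs exactly once along $P$; say $v = w_j$. Since $v$ is a cut vertex, $G - v$ splits into at least two components, and I would fix two of them, $C_1$ and $C_2$, each nonempty and each containing vertices that $P$ must visit (as $P$ visits every vertex).

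The core step is to delete the unique occurrence of $v$ from the cyclic walk and read off what remains. Erasing $w_j$ leaves a single linear walk $W = w_{j+1}, w_{j+2}, \dots, w_{j-1}$ whose edges are exactly the edges of $P$ other than the two edges $\{w_{j-1}, v\}$ and $\{v, w_{j+1}\}$ incident to $v$. Hence every edge of $W$ is an edge of $G$ that is \emph{not} incident to $v$, so each such edge lies inside a single component of $G - v$. I would then propagate this along $W$: its first vertex $w_{j+1}$ lies in some component $C_a$, and since every consecutive pair of vertices of $W$ is joined by an edge that stays within one component, an easy induction shows that all of $w_{j+1}, \dots, w_{j-1}$ lie in $C_a$.

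This yields the contradiction. The walk $W$ contains every vertex of $G$ except $v$, in particular vertices of both $C_1$ and $C_2$; but I have just shown $W$ is confined to the single component $C_a$, which cannot contain vertices of two distinct components of $G - v$. Therefore no such $P$ exists, which proves the theorem. I expect the only delicate point to be the bookkeeping around the cyclic-to-linear conversion: making sure the two deleted edges are precisely those incident to $v$, and checking that the degenerate cases (for instance when $v$ is the designated start or end of $P$, or when $w_{j-1}$ and $w_{j+1}$ coincide or are themselves unbounded) do not disrupt the component-propagation argument. None of these should cause real trouble, since the argument depends only on $v$ occurring once and on every non-$v$ edge remaining within a component, but it is worth stating the indexing carefully.
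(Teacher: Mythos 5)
Your proof is correct and follows essentially the same route as the paper's: both remove the unique occurrence of the bounded cut vertex from the cycle and observe that the remaining vertices stay connected via the surviving walk, contradicting the definition of a cut vertex. Your version merely spells out the component-propagation argument that the paper leaves implicit.
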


\begin{proof}
We will prove the contrapositive: ``If a k-Unbounded Hamiltonian Cycle exists, then there does not exist a bounded cut vertex." Let the cycle be $x_1, x_2,..., x_n$ and $x_v$ be a bounded cut vertex. Since $x_v$ is bounded, if vertex $x_v$ is removed, it will be the only instance removed. The remaining vertices will remain connected through $x_{v+1},..., x_n, x_1,..., x_{v-1}$. Since a cut vertex must increase the number of components when removed, there is a contradiction and bounded cut vertices cannot exist. 
\end{proof}

\begin{theorem} \label{thm2}
 If there exists a bounded cut vertex $v$ that when removed, results in k components where $k\ge3$, a k-Unbounded Hamiltonian Path does not exist
\end{theorem}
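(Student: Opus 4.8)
The plan is to argue by contradiction in the same spirit as Theorem \ref{thm1}, but now tracking how many connected components a single occurrence of a bounded vertex can serve. Suppose, for contradiction, that a k-Unbounded Hamiltonian Path $P$ exists in which the cut vertex $v$ is bounded, and suppose the components of $G - v$ are $C_1, \ldots, C_k$ with $k \ge 3$. Since $v$ is bounded, it occurs exactly once along $P$.

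First I would delete that unique occurrence of $v$ from $P$. This severs $P$ into \emph{at most two} contiguous subpaths: two when $v$ lies strictly between the endpoints of $P$, and one when $v$ is itself an endpoint of $P$. This is precisely the point at which the path-versus-cycle distinction matters: for a cycle, deleting one vertex leaves a single arc, whereas for a path it leaves at most two pieces, which is what eventually raises the component threshold from $2$ to $3$.

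The key step is to observe that each of these subpaths avoids $v$ entirely, so it is a walk in $G - v$, and any walk in $G - v$ must stay within a single connected component of $G - v$ — this is exactly where the cut-vertex hypothesis is used. Because our notion of a path permits repeated vertices and edges, a subpath may revisit vertices, but revisiting cannot move it between components, so the conclusion is unaffected. Hence the at-most-two subpaths collectively meet at most two of the components $C_1, \ldots, C_k$.

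Finally, since $P$ visits every vertex of $G$, every component $C_i$ must contain a vertex lying on some subpath, forcing all $k$ components to be met; with $k \ge 3$ this contradicts the bound of two components from the previous step. The main obstacle is making the ``each subpath lies in one component'' step airtight alongside a clean case split on whether $v$ is interior to $P$ or an endpoint; once those are pinned down the counting closes immediately. This argument also recovers Theorem \ref{thm1} as the cycle analogue, where deletion of a bounded vertex leaves a single arc and the forbidden threshold correspondingly drops to $k \ge 2$.
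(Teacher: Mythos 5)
Your proof is correct, but it runs the argument in the opposite direction from the paper. The paper's proof works forward along the path: it orders the components of $G-v$ by the time of their first visit as $q_1, q_2, q_3, \ldots$, notes that each transition $q_n \rightarrow q_{n+1}$ forces an occurrence of $v$ (since $v$ is a cut vertex, there are no edges between distinct components), and concludes that $k \ge 3$ components require at least two transitions, hence at least two occurrences of $v$, contradicting boundedness. You instead delete the unique occurrence of $v$ and count what remains: at most two walks, each confined to a single component of $G-v$, so at most two components can be covered, contradicting Hamiltonicity when $k \ge 3$. The two arguments are duals of one another --- the paper counts the crossings $v$ must supply, you count the segments one occurrence of $v$ can supply --- and both are airtight. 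What your formulation buys is uniformity: it is literally the same deletion argument the paper uses for Theorem \ref{thm1} (where the adjacency of the endpoints glues the pieces into a single arc, dropping the threshold to $k \ge 2$), so the two theorems become one argument with a different segment count; your step ``a walk in $G-v$ cannot change components'' is also stated more carefully than the paper's informal ``it is necessary to visit $v$ when transitioning.'' What the paper's version buys is a slightly more direct picture of \emph{why} $v$ must be revisited, without needing the case split on whether $v$ is an endpoint of $P$.
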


\begin{proof}
Without loss of generality, denote the components and the order that they are visited for the first time as $q_1, q_2, q_3...$ As $v$ is a cut vertex, it is necessary to visit $v$ when transitioning from $q_n \rightarrow q_{n+1}$, otherwise, they would be part of the same component. Since there are at least $2$ transitions from $q_1 \rightarrow q_2$ and $q_2 \rightarrow q_3$, $v$ is visited more than once, contradicting the definition of a bounded vertex.
\end{proof}

\subsection{Graph Traversal Algorithms}
\subsubsection{Breadth First Search (BFS)} \label{BFS}
If a BFS begins at a start vertex $S$, a vertex will only be visited until all vertices with a smaller distance from $S$ are visited. In unweighted graphs, a BFS can be used to find the distance from one vertex to all other vertices, and Dijkstra's \cite{Dijkstras}, a variation of BFS, is similarly able to find the shortest path from a vertex in weighted graphs by prioritizing the extension from vertices with the smallest distance from $S$. 

In a \textbf{multisource-BFS}, multiple starting vertices are initially pushed into $Q$: this can also be viewed as creating a new vertex $z$ adjacent to the set of starting vertices and running a BFS from $z$. If $d[v] \ne K$, $v$ can be reached by one of the starting vertices. Furthermore, $d[v]$ is the shortest distance of any starting vertex to $v$. 

Every vertex is visited exactly once, since once $v$ is pushed into $Q$, d[v] is the final distance between $S$ and $v$ and there exists no path with distance less than d[v]. All edges to adjacent vertices are considered at every visit, so the time complexity is $O(N+E) = O(E)$.

\subsubsection{Depth First Search (DFS)}
A DFS traverses a graph by beginning at a root node and traveling as long as possible until a vertex is reached with no unvisited neighbours. The search then backtracks to an ancestor and continues the process. A DFS is typically implemented with recursion alongside a visited array. Similar to the BFS, the time complexity is O(E).

\section{Non-Hamiltonian Graphs}
\subsection{Generalized Petersen Graphs}
The generalized Petersen graph $G(n, k)$ for integers $n$ and $k$ and $2\le 2k < n$ is defined by Watkins' \cite{generalized-Petersen} as a graph with vertex set $V(G(n, k))=\{u_0, u_1,..., u_{n-1}, v_0, v_1,..., v_{n-1}\}$ and the edge-set consisting of all edges in the form $u_{i}u_{i+1}$, $u_{i}v_{i}$, $v_{i}v_{i+k}$, with subscripts reduced modulo $n$. There are two well-known non-Hamiltonian instances of generalized Petersen graphs: $\{n \equiv 5 \pmod{6}, k=2\}$ or $\{4 | n, n\ge 8, k=\frac{n}{2}\}$. The former is known to be hypohamiltonian, or a non-Hamiltonian graph such that $G-v$ is Hamiltonian for $v\in V$. Consider the following lemma:

\begin{lemma} \label{lemma1-Unbounded}
For a graph G, if there exists a vertex v such that $G - v$ contains a Hamiltonian Cycle, then G contains a 1-Unbounded Hamiltonian Cycle 
\end{lemma}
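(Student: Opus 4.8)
The plan is to take the Hamiltonian Cycle guaranteed to exist in $G - v$ and splice in the single omitted vertex $v$ by way of a short detour that revisits exactly one existing vertex. First I would write the Hamiltonian Cycle of $G - v$ as $x_1, x_2, \ldots, x_{N-1}, x_1$; since this cycle passes through every vertex of $G$ except $v$, the list $x_1, \ldots, x_{N-1}$ is precisely $V \setminus \{v\}$. Because the graphs of interest are connected (and in any case a closed path visiting all of $V$ forces $v$ to have a neighbour), $v$ is adjacent to some cycle vertex $x_i$.

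Next I would construct the closed path
$$P = (v, x_i, x_{i+1}, \ldots, x_{N-1}, x_1, \ldots, x_{i-1}, x_i),$$
which begins at $v$, moves to $x_i$, and then traverses the entire cycle starting and ending at $x_i$. This path is legal in $G$ because every consecutive pair of vertices is either an edge of the cycle (hence an edge of $G - v \subseteq G$) or the edge $v x_i$.

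Then I would verify the three defining properties of a $1$-Unbounded Hamiltonian Cycle directly against the definitions: (i) $P$ visits every vertex, since it contains all of $x_1, \ldots, x_{N-1}$ together with $v$; (ii) exactly one vertex is unbounded, namely $x_i$, which appears twice while every other vertex appears exactly once; and (iii) the starting vertex $v$ and the ending vertex $x_i$ are adjacent, so $P$ closes into a cycle. Together these exhibit a $1$-Unbounded Hamiltonian Cycle in $G$, as required.

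I expect no serious obstacle here, since the argument is constructive. The only point requiring care is the bookkeeping that the detour introduces exactly one unbounded vertex rather than two: $x_i$ is deliberately the sole repeated vertex, and the appended edge reuses $x_i$ instead of any new vertex, so the unbounded count is precisely $1$. A secondary subtlety is the implicit hypothesis that $v$ is not isolated, which is automatic whenever $G$ is connected and in particular holds for the generalized Petersen graph instances under consideration.
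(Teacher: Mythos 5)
Your proposal is correct and is essentially the same argument as the paper's: both use connectivity to find a neighbour $x_i$ of $v$ on the Hamiltonian Cycle of $G-v$ and splice $v$ in so that $x_i$ is the unique repeated (unbounded) vertex. Your closed path starting at $v$ and ending at $x_i$ is just a rotation of the paper's path $(a_0,\ldots,a_i,v,a_i,\ldots,a_k)$, so the two constructions coincide.
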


\begin{proof}
Consider the Hamiltonian Cycle $P=(a_0, a_1,..., a_k)$ in $G-v$. Since $G$ is connected, there exists a neighbour $a_i$ of $v$ on $P$. It follows that the path $P'=(a_0, a_1,..., a_i, v, a_i,..., a_k)$ exists in $G$. Since all vertices are visited and $a_i$ is the only vertex that appears more than once, $P'$ is a 1-Unbounded Hamiltonian Cycle.
\end{proof}

Since the condition for the existence of a 1-Unbounded Hamiltonian Cycle in Lemma \ref{lemma1-Unbounded} is a relaxed version of hypohamiltonicity, all hypohamiltonian graphs have $m=1$ as the lower bound. For the second set of non-Hamiltonian graphs, I claim that $m\ge 2$, or equivalently, that there does not exist a 1-Unbounded Hamiltonian Cycle.

\begin{figure}[htbp] \label{figGP}
    \centering

    \tikzset{every picture/.style={line width=0.75pt}} 
    \resizebox{4.5cm}{!}{
    \begin{tikzpicture}[x=0.75pt,y=0.75pt,yscale=-1,xscale=1]
    
\draw   (220.2,69.81) .. controls (220.2,64.41) and (224.58,60.03) .. (229.99,60.03) .. controls (235.39,60.03) and (239.77,64.41) .. (239.77,69.81) .. controls (239.77,75.22) and (235.39,79.6) .. (229.99,79.6) .. controls (224.58,79.6) and (220.2,75.22) .. (220.2,69.81) -- cycle ;
\draw   (280.61,69.98) .. controls (280.61,64.58) and (284.99,60.2) .. (290.4,60.2) .. controls (295.8,60.2) and (300.18,64.58) .. (300.18,69.98) .. controls (300.18,75.39) and (295.8,79.77) .. (290.4,79.77) .. controls (284.99,79.77) and (280.61,75.39) .. (280.61,69.98) -- cycle ;
\draw   (180.4,109.61) .. controls (180.4,104.21) and (184.78,99.83) .. (190.19,99.83) .. controls (195.59,99.83) and (199.97,104.21) .. (199.97,109.61) .. controls (199.97,115.02) and (195.59,119.4) .. (190.19,119.4) .. controls (184.78,119.4) and (180.4,115.02) .. (180.4,109.61) -- cycle ;
\draw   (180.07,169.95) .. controls (180.07,164.54) and (184.45,160.16) .. (189.85,160.16) .. controls (195.26,160.16) and (199.64,164.54) .. (199.64,169.95) .. controls (199.64,175.35) and (195.26,179.73) .. (189.85,179.73) .. controls (184.45,179.73) and (180.07,175.35) .. (180.07,169.95) -- cycle ;
\draw   (219.73,209.28) .. controls (219.73,203.87) and (224.12,199.49) .. (229.52,199.49) .. controls (234.93,199.49) and (239.31,203.87) .. (239.31,209.28) .. controls (239.31,214.68) and (234.93,219.07) .. (229.52,219.07) .. controls (224.12,219.07) and (219.73,214.68) .. (219.73,209.28) -- cycle ;
\draw   (280.07,210.08) .. controls (280.07,204.67) and (284.45,200.29) .. (289.85,200.29) .. controls (295.26,200.29) and (299.64,204.67) .. (299.64,210.08) .. controls (299.64,215.48) and (295.26,219.87) .. (289.85,219.87) .. controls (284.45,219.87) and (280.07,215.48) .. (280.07,210.08) -- cycle ;
\draw   (320.73,169.95) .. controls (320.73,164.54) and (325.12,160.16) .. (330.52,160.16) .. controls (335.93,160.16) and (340.31,164.54) .. (340.31,169.95) .. controls (340.31,175.35) and (335.93,179.73) .. (330.52,179.73) .. controls (325.12,179.73) and (320.73,175.35) .. (320.73,169.95) -- cycle ;
\draw   (320.07,109.95) .. controls (320.07,104.54) and (324.45,100.16) .. (329.85,100.16) .. controls (335.26,100.16) and (339.64,104.54) .. (339.64,109.95) .. controls (339.64,115.35) and (335.26,119.73) .. (329.85,119.73) .. controls (324.45,119.73) and (320.07,115.35) .. (320.07,109.95) -- cycle ;
\draw   (236.4,100.95) .. controls (236.4,95.54) and (240.78,91.16) .. (246.19,91.16) .. controls (251.59,91.16) and (255.97,95.54) .. (255.97,100.95) .. controls (255.97,106.35) and (251.59,110.73) .. (246.19,110.73) .. controls (240.78,110.73) and (236.4,106.35) .. (236.4,100.95) -- cycle ;
\draw   (268.9,100.7) .. controls (268.9,95.29) and (273.28,90.91) .. (278.69,90.91) .. controls (284.09,90.91) and (288.47,95.29) .. (288.47,100.7) .. controls (288.47,106.1) and (284.09,110.48) .. (278.69,110.48) .. controls (273.28,110.48) and (268.9,106.1) .. (268.9,100.7) -- cycle ;
\draw   (216.07,125.61) .. controls (216.07,120.21) and (220.45,115.83) .. (225.85,115.83) .. controls (231.26,115.83) and (235.64,120.21) .. (235.64,125.61) .. controls (235.64,131.02) and (231.26,135.4) .. (225.85,135.4) .. controls (220.45,135.4) and (216.07,131.02) .. (216.07,125.61) -- cycle ;
\draw   (214.4,159.28) .. controls (214.4,153.87) and (218.78,149.49) .. (224.19,149.49) .. controls (229.59,149.49) and (233.97,153.87) .. (233.97,159.28) .. controls (233.97,164.68) and (229.59,169.07) .. (224.19,169.07) .. controls (218.78,169.07) and (214.4,164.68) .. (214.4,159.28) -- cycle ;
\draw   (238.33,179.28) .. controls (238.33,173.87) and (242.72,169.49) .. (248.12,169.49) .. controls (253.53,169.49) and (257.91,173.87) .. (257.91,179.28) .. controls (257.91,184.68) and (253.53,189.07) .. (248.12,189.07) .. controls (242.72,189.07) and (238.33,184.68) .. (238.33,179.28) -- cycle ;
\draw   (273.4,179.61) .. controls (273.4,174.21) and (277.78,169.83) .. (283.19,169.83) .. controls (288.59,169.83) and (292.97,174.21) .. (292.97,179.61) .. controls (292.97,185.02) and (288.59,189.4) .. (283.19,189.4) .. controls (277.78,189.4) and (273.4,185.02) .. (273.4,179.61) -- cycle ;
\draw   (292.07,156.95) .. controls (292.07,151.54) and (296.45,147.16) .. (301.85,147.16) .. controls (307.26,147.16) and (311.64,151.54) .. (311.64,156.95) .. controls (311.64,162.35) and (307.26,166.73) .. (301.85,166.73) .. controls (296.45,166.73) and (292.07,162.35) .. (292.07,156.95) -- cycle ;
\draw   (290.73,126.95) .. controls (290.73,121.54) and (295.12,117.16) .. (300.52,117.16) .. controls (305.93,117.16) and (310.31,121.54) .. (310.31,126.95) .. controls (310.31,132.35) and (305.93,136.73) .. (300.52,136.73) .. controls (295.12,136.73) and (290.73,132.35) .. (290.73,126.95) -- cycle ;
\draw    (239.77,69.81) -- (280.61,69.98) ;
\draw    (298,76.3) -- (325.33,101.3) ;
\draw    (220.33,72.97) -- (194,100.3) ;
\draw    (190.19,119.4) -- (189.85,160.16) ;
\draw    (189.85,179.73) -- (219.73,209.28) ;
\draw    (239.31,209.28) -- (280.07,210.08) ;
\draw    (299.64,210.08) -- (330.52,179.73) ;
\draw    (329.85,119.73) -- (330.52,160.16) ;
\draw    (234.67,78.63) -- (243.33,91.97) ;
\draw    (199.33,113.3) -- (217,120.3) ;
\draw    (199.33,166.97) -- (214.33,161.97) ;
\draw    (233.67,200.77) -- (243.29,188.2) ;
\draw    (289.85,200.29) -- (286.27,188.83) ;
\draw    (322.27,165.83) -- (311.05,160.78) ;
\draw    (321.05,113.03) -- (308.3,120.78) ;
\draw    (287.8,79.78) -- (283.55,91.78) ;
\draw    (250.05,109.48) -- (280.55,169.98) ;
\draw    (277.05,109.78) -- (253.44,171.27) ;
\draw    (290.55,129.78) -- (233.05,154.38) ;
\draw    (235.3,127.13) -- (293.05,153.38) ;

\draw (242.29,175.4) node [anchor=north west][inner sep=0.75pt]  [font=\scriptsize] [align=left] {$\displaystyle v_{0}$};
\draw (277.89,176.2) node [anchor=north west][inner sep=0.75pt]  [font=\scriptsize] [align=left] {$\displaystyle v_{1}$};
\draw (295.69,152.8) node [anchor=north west][inner sep=0.75pt]  [font=\scriptsize] [align=left] {$\displaystyle v_{2}$};
\draw (294.69,122.8) node [anchor=north west][inner sep=0.75pt]  [font=\scriptsize] [align=left] {$\displaystyle v_{3}$};
\draw (272.49,96.4) node [anchor=north west][inner sep=0.75pt]  [font=\scriptsize] [align=left] {$\displaystyle v_{4}$};
\draw (240.29,96.6) node [anchor=north west][inner sep=0.75pt]  [font=\scriptsize] [align=left] {$\displaystyle v_{5}$};
\draw (219.29,121.6) node [anchor=north west][inner sep=0.75pt]  [font=\scriptsize] [align=left] {$\displaystyle v_{6}$};
\draw (218.5,155) node [anchor=north west][inner sep=0.75pt]  [font=\scriptsize] [align=left] {$\displaystyle v_{7}$};
\draw (223,205.2) node [anchor=north west][inner sep=0.75pt]  [font=\scriptsize] [align=left] {$\displaystyle u_{0}$};
\draw (284,206.24) node [anchor=north west][inner sep=0.75pt]  [font=\scriptsize] [align=left] {$\displaystyle u_{1}$};
\draw (324.09,165.84) node [anchor=north west][inner sep=0.75pt]  [font=\scriptsize] [align=left] {$\displaystyle u_{2}$};
\draw (323.49,106.24) node [anchor=north west][inner sep=0.75pt]  [font=\scriptsize] [align=left] {$\displaystyle u_{3}$};
\draw (284,66) node [anchor=north west][inner sep=0.75pt]  [font=\scriptsize] [align=left] {$\displaystyle u_{4}$};
\draw (223,65.76) node [anchor=north west][inner sep=0.75pt]  [font=\scriptsize] [align=left] {$\displaystyle u_{5}$};
\draw (184,105.16) node [anchor=north west][inner sep=0.75pt]  [font=\scriptsize] [align=left] {$\displaystyle u_{6}$};
\draw (184,166) node [anchor=north west][inner sep=0.75pt]  [font=\scriptsize] [align=left] {$\displaystyle u_{7}$};

    \end{tikzpicture}
    }
    \caption{G(8, 4)}
\end{figure}

If there exists a 1-Unbounded Hamiltonian Cycle, the unbounded vertex is either $v_i$ or $u_i$. We first consider the case of $v_i$. Without loss of generality, let the unbounded vertex be $v_0$. Furthermore, let $P$ be a 1-Unbounded Hamiltonian Cycle such that the start vertex is not adjacent to $u_0$ and $v_{\frac{n}{2}}$ (any such cycle may always be rotated to satisfy this condition). It follows that $v_0$ may not be the last vertex. Thus, the first instance of $v_0$ in $P$ is either $(...u_0, v_0, v_{\frac{n}{2}}, x...)$ or $(...v_{\frac{n}{2}}, v_0, u_0, x...)$. If $x$ is $v_0$, it is impossible to continue the path since all of the neighbours of $v_0$ have already been visited. For any other vertex $x$, $v_0$ cannot be unbounded since it has only been visited once but all of its neighbours have already been visited. Thus, there is a contradiction and $v_i$ cannot be unbounded.

Similarly, without loss of generality, let $u_0$ be unbounded and let $P$ be a 1-Unbounded Hamiltonian Cycle such that the start vertex is not adjacent to $v_0, u_1$, and $u_{n-1}$ ($u_0$ cannot be the last vertex). For some permutation $(p_1, p_2, p_3)$ of these three vertices, there exists two possibilities for $P$: $(...p_1, u_0, p_2,..., p_3, u_0,...)$ or $(...p_1, u_0, p_2, u_0, p_3,...)$. The former is not possible since there are no unvisited neighbours of $u_0$ after its second appearance. If the latter is rotated to the path $(p_3,..., p_1, u_0, p_2, u_0)$, then $P$ is a 1-Unbounded Hamiltonian Cycle if and only if there exists a Hamiltonian Path from $p_3$ to $p_1$ in the graph $G-\{u_0, p_2\}$. If $p_2=u_1$, a Hamiltonian Path does not exist from $v_0$ to $u_{n-1}$ in the resulting graph since there exists the sequence $(...,v_{\frac{n}{2}+1}, v_1, v_{\frac{n}{2}+1}...)$ in $P$, making $v_{\frac{n}{2}+1}$ unbounded. A similar argument applies when $p_2=u_{n-1}$. If $p_2=v_0$, a Hamiltonian Path does not exist from $u_1$ to $u_{n-1}$ since there exists the sequence $(...,u_{\frac{n}{2}}, v_{\frac{n}{2}}, u_{\frac{n}{2}},...)$ in $P$, making $u_{\frac{n}{2}}$ unbounded. Thus, there does not exist a vertex $v_i$ or $u_i$ that is unbounded and $m\ge 2$.

It remains to show some upper bound for $m$. We propose the following construction: the path begins at $u_i$ for $i=0$, travels to $u_{i+4\pmod{n}}$ along the shortest path, then travels to $u_{i+5\pmod{n}}$. This process is repeated until the last unvisited vertex $u_{\frac{n}{2}-1}$ is reached. If $n=8$, the last vertex is $u_3$ and a 2-Unbounded Hamiltonian Cycle may be formed by appending $u_2$ and $u_1$ to the path. Otherwise, the path continues by traveling along the shortest path to $u_{n-1}$, which results in a 3-Unbounded Hamiltonian Cycle with vertices $v_{\frac{n}{2}-1}, v_{n-1}$, and $u_{n-1}$ unbounded. It follows that $m=2$ for $n=8$ and $2\le m\le 3$ for $n>8$.

\subsection{Trees}
We consider the problem on trees. When $n=2$, no unbounded vertices are necessary, Otherwise, the tree is rooted at a vertex $R$ with $d_R\ge2$. Furthermore, let the simple path $P$ between the start and end vertices be $(v_0, v_1,..., v_k)$. The problem can be rephrased as the maximization of the number of bounded vertices in a tree where every vertex is unbounded. 
\begin{lemma} \label{lemma1}
An m-Unbounded Hamiltonian Cycle in a tree will have every leaf unbounded
\end{lemma}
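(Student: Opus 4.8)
The plan is to first pin down which vertices are \emph{forced} to be unbounded and then to show that a leaf cannot escape this. The engine is Theorem~\ref{thm1}: a $k$-Unbounded Hamiltonian Cycle admits no bounded cut vertex. In a tree with $n \ge 3$ every internal (non-leaf) vertex has degree at least $2$ and is therefore a cut vertex, so I immediately obtain that every internal vertex is unbounded. This reduces the lemma entirely to the leaves: the only vertices whose status is still in question are the degree-$1$ vertices, and so the whole statement becomes a question of how a leaf can sit inside the closed walk.

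Next I would exploit the pendant structure of a leaf. A leaf $\ell$ has a unique neighbour $p$, so every maximal visit to $\ell$ is flanked by $p$ on both sides, forcing the local pattern $(\dots, p, \ell, p, \dots)$; and if $\ell$ is taken as the start of the underlying path, then the adjacent-endpoints condition in the definition of a cycle forces the other endpoint to be $p$ as well, since $p$ is the only vertex adjacent to $\ell$. From here I would try to show that a single such visit is incompatible with the walk covering the entire tree and closing up at its start. The natural comparison object is the canonical closed walk obtained from an Euler tour of the tree with every edge doubled, against which I would measure any candidate minimum cycle and argue that the minimum-unbounded cycle must pass through each leaf more than once.

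The main obstacle is precisely the step that excludes a once-visited leaf. The local picture $(\dots, p, \ell, p, \dots)$ is by itself perfectly legal, so the content of the lemma is not local — it has to come from the global requirements that the walk be closed, that its two endpoints be adjacent, and that it be minimal. I expect the decisive ingredient to be a counting or exchange argument that tracks the number of \emph{distinct} revisited vertices while rerouting the walk through the leaves, and the delicate point will be carrying this out without disturbing the already-forced unboundedness of the internal vertices or breaking the adjacency of the two endpoints. This is the step I would expect to absorb essentially all of the difficulty, and it is where I would concentrate the argument.
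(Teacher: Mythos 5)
There is a genuine and decisive problem here, but it is not a missing step: the statement you set out to prove is false as literally written, and the paper's own proof establishes the opposite. Take the star with center $c$ and leaves $a$, $b$, $d$. The closed walk $(c,a,c,b,c,d)$ visits every vertex, its endpoints $c$ and $d$ are adjacent, and only $c$ appears more than once; since a tree contains no Hamiltonian cycle, $m=1$, so this is an $m$-Unbounded Hamiltonian Cycle in which \emph{every leaf is bounded}. Consequently, the counting or exchange argument you hope will ``exclude a once-visited leaf'' cannot exist. The lemma's wording contains a typo (``unbounded'' should read ``bounded''): the paper's proof of this lemma concludes that ``any leaf can be bounded,'' and every subsequent use reads it that way --- Section 3.2.1 cites it as ``every leaf may be bounded,'' and the proof of Lemma \ref{lemma3} invokes ``all leaves are bounded from Lemma \ref{lemma1}.'' The intended content is that a leaf never needs to be revisited, so that in a cycle with the \emph{minimum} number of unbounded vertices every leaf appears exactly once.

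Ironically, you had the correct proof in hand and discarded it. Your observation that the pattern $(\dots,p,\ell,p,\dots)$ is ``perfectly legal'' and that nothing local forces a second visit is precisely the paper's argument: after a leaf $\ell$ is visited, the only possible continuation is its parent $p$, and since no unvisited vertex lies beyond $\ell$, the walk never has a reason to return to $\ell$; hence $\ell$ can always be visited exactly once, and minimality of $m$ then forces every leaf to be bounded, the unbounded vertices being exactly the internal ones. Your first paragraph --- internal vertices are cut vertices, hence unbounded by Theorem \ref{thm1} --- is correct, but it is not part of this lemma; it is the other half of the section's argument (carried out in Section 3.2.1 via Lemma \ref{lemma2} and Theorem \ref{thm1}). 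The lemma itself is the easy, local half, not the global impossibility you were searching for.
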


\begin{proof}
After a leaf $l$ is first visited in an m-Unbounded Hamiltonian Path or Cycle, the only option for the next vertex in the path is the parent of $l$ (unless $l$ is the last vertex, in which case $l$ is visited once). Since there are no vertices that have $l$ as a parent from the definition of a leaf, there is no need to travel back to $l$ to reach unvisited vertices and it is always possible for $l$ to be visited one time. Unbounded vertices are only necessary for vertices that are visited more than once, meaning that any leaf can be bounded.
\end{proof}

\begin{lemma} \label{lemma2}
The number of components in a tree when a vertex $v$ is removed is $d_v$
\end{lemma}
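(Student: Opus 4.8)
The plan is to set up a correspondence between the connected components of $T - v$ and the neighbours of $v$, so that counting one counts the other; since $d_v$ is exactly the number of neighbours of $v$, this yields the claim. I would prove two matching bounds separately: that $T-v$ has at least $d_v$ components and at most $d_v$ components.

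First I would show that distinct neighbours of $v$ always land in distinct components. Suppose two neighbours $a \ne b$ of $v$ were connected in $T-v$. Then there would be a path from $a$ to $b$ avoiding $v$, and appending the edges $av$ and $bv$ would close this into a cycle in $T$, contradicting the fact that a tree is acyclic. Hence the $d_v$ neighbours occupy $d_v$ pairwise-distinct components, so $T-v$ has at least $d_v$ components.

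Next I would show that no further components can appear, using the uniqueness of paths in a tree. For any vertex $w \ne v$, consider the unique $w$--$v$ path; its second-to-last vertex is some neighbour $a$ of $v$, and deleting $v$ leaves the $w$--$a$ portion of this path intact, so $w$ remains connected to $a$ in $T-v$. Thus every vertex of $T-v$ shares a component with one of the $d_v$ neighbours, giving at most $d_v$ components. Together the two bounds force the count to equal $d_v$.

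The argument is short, so the only real subtlety---and the step I would watch most carefully---is this second direction: one must invoke the defining tree property (a unique path between any pair of vertices, equivalently connectivity together with acyclicity) both to rule out stray components and to guarantee each $w$ attaches to a well-defined neighbour of $v$. As a cleaner alternative that sidesteps the case analysis entirely, I could argue by edge counting instead: $T-v$ is a forest on $N-1$ vertices obtained by deleting the $d_v$ edges incident to $v$, so it has $(N-1)-d_v$ edges, and since a forest with $V$ vertices and $E$ edges has $V-E$ components, the count is $(N-1)-\big((N-1)-d_v\big)=d_v$.
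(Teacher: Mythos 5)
Your proof is correct, but it takes a genuinely different route from the paper's. The paper argues \emph{incrementally}: it views the removal of $v$ as first deleting the $d_v$ edges incident to $v$ one at a time, using the uniqueness of the path $(v,k)$ to argue that each deletion cuts off one new neighbour $k$, so that after all deletions there are $d_v+1$ components, and discarding the now-isolated $v$ leaves $d_v$. Your argument instead sets up a \emph{static correspondence} between the components of $T-v$ and the neighbours of $v$: acyclicity gives the lower bound (two neighbours sharing a component would close a cycle through $v$), and uniqueness of $w$--$v$ paths gives the upper bound (every vertex of $T-v$ remains attached to the penultimate vertex of its path to $v$). This makes the ``at most $d_v$'' direction explicit, whereas the paper's sequential count delivers it only implicitly, via the tacit fact that deleting a single edge can create at most one new component. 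Your edge-counting alternative is slicker still: $T-v$ is a forest on $N-1$ vertices with $(N-1)-d_v$ edges, and the standard fact that a forest with $V$ vertices and $E$ edges has $V-E$ components immediately yields $d_v$; this sidesteps all case analysis, at the cost of invoking a formula the paper never states.
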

\begin{proof}
Removing $v$ can be viewed as first removing all edges to neighbours of $v$ before removing $v$. Consider a neighbour $k$ of $v$. The unique path from $v$ to $k$ is trivially ($v$, $k$). However, if edge $vk$ is removed, there is no longer a path connecting $v$ and $k$. This signifies the formation of a new component $C_k$. Repeating this process to all neighbours of $v$ results in $d_v+1$ components, and since $v$ is removed at the end, there will be $d_v$ components in the end.
\end{proof}
\subsubsection{m-Unbounded Hamiltonian Cycle}
As shown in Lemma \ref{lemma1}, every leaf may be bounded. Every other vertex $v$ has $d_v\ge 2$, otherwise $v$ would be a leaf. Lemma \ref{lemma2} shows that $v$ must be a cut vertex, since the number of components increases from $1$ to $k$ for $k\ge 2$ when $v$ is removed. It follows from theorem \ref{thm1} that $v$ must be unbounded. Thus, the optimal arrangement is to have every vertex that is not a leaf as unbounded. A DFS (i.e Euler Tour) from any vertex is sufficient to recreate the m-Unbounded Hamiltonian Cycle, so the time complexity will be $O(N)$.

\subsubsection{m-Unbounded Hamiltonian Path}

\BlankLine

\begin{lemma} \label{lemma3}
The only unbounded vertices in an m-Unbounded Hamiltonian Path are those with $degree\le 2$ on $P$ as well as every leaf. 
\end{lemma}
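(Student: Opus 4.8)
The plan is to classify every vertex of the tree as bounded or unbounded according to two attributes, its degree $d_v$ and whether it lies on the spine $P=(v_0,\dots,v_k)$, and to show that the vertices that can be bounded are exactly the leaves together with the degree-$\le 2$ vertices lying on $P$ (equivalently, every other vertex is forced to be unbounded). Since leaves, vertices of degree at least $3$, and vertices of degree exactly $2$ behave differently, I would break the argument into these groups; the first two are essentially settled by earlier results, so the real content is the dichotomy among the degree-$2$ vertices, split by whether they lie on $P$.

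First I would dispose of the forced-unbounded vertices. For any $v$ with $d_v\ge 3$, Lemma~\ref{lemma2} says that deleting $v$ produces $d_v\ge 3$ components, so Theorem~\ref{thm2} immediately forbids $v$ from being bounded; hence every degree-$\ge 3$ vertex is unbounded, regardless of $P$. Next consider a degree-$2$ vertex $v\notin P$ with neighbours $a,b$, so that both incident edges are non-spine edges. Because $v$ is not on the path $P$ joining $v_0$ and $v_k$, both endpoints lie in the \emph{same} component of $T-v$; by Lemma~\ref{lemma2} the second component is nonempty, and since the walk starts and ends in the component containing $v_0$ and $v_k$, it must enter the other component and return, crossing $v$ at least twice. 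Thus every off-spine degree-$2$ vertex is unbounded. This cut argument, which mirrors the proof of Theorem~\ref{thm2} but with both endpoints forced onto one side, is the step I expect to be the crux.

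It then remains to show the remaining vertices can all be bounded simultaneously. Leaves are handled by Lemma~\ref{lemma1}, whose proof applies verbatim to paths and shows that a leaf never needs to be revisited. For a degree-$2$ vertex lying on $P$, its only neighbours are its two spine-neighbours, so no subtree hangs off it and it is traversed exactly once as the walk moves along the spine. To see that these choices are jointly realizable, I would exhibit an explicit optimal walk: traverse the spine once from $v_0$ to $v_k$, and at each spine vertex perform an Euler tour of every hanging subtree before continuing. In this walk the only revisited vertices are the spine vertices that carry a hanging subtree (necessarily of degree $\ge 3$) and the non-leaf vertices inside the subtrees (the off-spine vertices of degree $\ge 2$), which is exactly the forced-unbounded set identified above.

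Finally I would combine the two directions: every vertex the constructed walk revisits was shown to be \emph{necessarily} unbounded, so the walk attains the minimum number $m$ of unbounded vertices, and its bounded set is precisely the leaves together with the degree-$\le 2$ vertices on $P$. Because any $m$-Unbounded path must already leave the forced set unbounded and has total unbounded count $m$, it can bound no further vertices, which gives the ``only'' clause of the statement. The one subtlety I would flag is the treatment of the endpoints $v_0,v_k$: a degree-$2$ endpoint is in fact visited twice, so the clean characterization requires the optimal path to use \emph{leaf} endpoints. I would dispatch this by noting that choosing leaf endpoints is always at least as good (a non-leaf endpoint forfeits the single ``free'' odd crossing an endpoint provides), which removes the only edge case and completes the classification.
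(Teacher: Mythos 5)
Your proposal is correct (reading the statement, as you implicitly do, as characterizing the \emph{bounded} vertices --- the lemma's wording has bounded/unbounded swapped, which both the paper's proof and your argument confirm), but it takes a genuinely different route from the paper. For the forced direction, the paper handles off-spine non-leaf vertices by a rearrangement argument (visits to a hanging component can be made contiguous) followed by a cycle contradiction showing no spine vertex can lie in that component; your cut-crossing argument --- both endpoints lie in one component of $T-v$, the other component is nonempty, so the walk must pass through $v$ at least twice --- reaches the same conclusion more directly and, importantly, without any appeal to optimality or rearrangement. For the positive direction, the paper argues in place about the given optimal path, using informal exchange arguments to show each leaf and each non-endpoint degree-$2$ spine vertex is visited once; you instead exhibit an explicit extremal walk (spine traversal with Euler tours of hanging subtrees) and close with a counting sandwich: the construction shows $m$ is at most the size of the forced set, the forced direction shows any optimal path has at least that many unbounded vertices, so equality holds and the unbounded set of \emph{any} optimal path is exactly the forced set. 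Your route buys rigor --- the ``exactly'' conclusion becomes airtight rather than resting on ``it is always optimal to\dots'' exchanges --- and your witness walk is precisely the DFS-based construction the paper later uses algorithmically; the paper's route is shorter and never needs a witness. One caution: the endpoint caveat you flag (a degree-$2$ non-leaf endpoint is forced to be revisited, so the clean characterization needs leaf endpoints) is real, and your one-line dispatch of it is essentially the paper's Lemma \ref{lemma4}, which the paper states and proves separately; the paper's own proof of the present lemma silently restricts its degree-$2$ analysis to non-endpoint vertices for exactly this reason, so your treatment is no less complete, but in a standalone write-up you should prove the extension step (prepending an off-spine neighbour of a non-leaf endpoint does not decrease the bounded count, repeat until the endpoint is a leaf) rather than gesture at a ``free odd crossing.''
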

\begin{proof}
 It follows from Theorem \ref{thm2} that vertices on $P$ with degree $\ge$ 3 are unbounded. Otherwise, every other vertex in $P$ has degree $1$ or $2$. If a vertex has degree $1$, it is a leaf and can always be bounded from Lemma \ref{lemma1}. Consider a non-endpoint vertex $v_x$ with degree 2. Every vertex excluding end vertices in a simple path are connected to two distinct vertices, and since $v_x$ is not an end vertex and is adjacent to two vertices, both of these vertices are in P. Before visiting $v_{x}$ for the first time, it is always optimal to visit all the vertices of $C_{v_{x-1}}$ in $G-v_x$. Any configuration that doesn't do so may be rearranged such that all instances of visits in this component are moved to a continuous segment. It follows that that $v_{x+1} \ne v_{x-1}$. However, since $v_x$ is now visited, all vertices of $C_{v_x}$ in $G-v_{x+1}$ are visited and it is unnecessary to ever visit a vertex in this component after traveling to $v_{x+1}$. Thus, $v_x$ is visited once and is bounded. 

It remains to show that the only bounded vertices not on $P$ are leaves. Recall that all leaves are bounded from Lemma \ref{lemma1}. Let $v$ be a non-leaf vertex not on $P$, $l$ be the vertex directly before $v$ in the m-Unbounded Hamiltonian Path, and $v_p$ be the most recently visited vertex in $P$. Since $v$ is not a leaf, it is optimal to continue to visit all the vertices of $C_v$ in $G-l$: any configuration that doesn't may be rearranged so that all instances of visits in this component are moved to a continuous segment. The m-Unbounded Hamiltonian Path must also return to a vertex on $P$ to eventually reach $v_k$, signifying that such a vertex must be part of $C_v$ in order for $v$ to not be revisited. Let $v_c$ be such a vertex with the smallest distance from $v$. Due to the nature of a tree, there exist simple paths from $v$ to $v_c$, $v_c$ to $v_p$, and $v_p$ to $v$ such that all paths excluding endpoints are mutually exclusive. As a result, there exists a cycle $(v,...,v_c,...,v_p,...,l)$, which contradicts the definition of a tree. Thus, there cannot exist a vertex of $P$ in the component. Instead, the only other way to reach a vertex in $P$ is to ``exit" the component through $v$. Since $v$ is visited more than once, it is unbounded. 
\end{proof}

\begin{lemma} \label{lemma4}
An m-Unbounded Hamiltonian Path may always start and end at leaves
\end{lemma}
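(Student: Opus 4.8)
The plan is to start from an arbitrary m-Unbounded Hamiltonian Path and, whenever one of its endpoints is not a leaf, to \emph{lengthen} the underlying simple path $P=(v_0,\dots,v_k)$ until that endpoint becomes a leaf, using Lemma~\ref{lemma3} to guarantee that the lengthening never creates a new unbounded vertex.

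First I would make explicit the count hidden in Lemma~\ref{lemma3}. For a fixed simple path $P$ between start and end, the unbounded vertices are precisely the vertices of tree-degree at least $3$ together with the non-leaf vertices lying off $P$; hence their number equals $\bigl|\{v:d_v\ge 3\}\bigr|+\bigl|\{v:d_v=2,\ v\notin P\}\bigr|$. The first term depends only on the tree, so it is invariant under any change of $P$; only the second term, which counts the degree-$2$ vertices that $P$ misses, can move.

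Next I would perform the extension. Assume the start $v_0$ is not a leaf, so $d_{v_0}\ge 2$ while $v_0$ meets $P$ only at $v_1$. Choosing a neighbour $w\neq v_1$ and walking away from $v_0$ inside the component $C_w$ of $G-v_0$ until a leaf $\ell$ is reached yields a simple path $(\ell,\dots,w,v_0)$ that touches $P$ only at $v_0$ (it stays inside $C_w$, which is disjoint from $v_1,\dots,v_k$), and this process terminates because $C_w$ is a finite subtree. Prepending gives a longer simple path $P'=(\ell,\dots,v_0,v_1,\dots,v_k)\supseteq P$ ending at the leaf $\ell$. Running the same construction at the opposite endpoint --- into a subtree disjoint from the first, since the two ends of $P$ sit in different branches --- produces a leaf-to-leaf path $P''\supseteq P$.

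Finally I would invoke the count: since $P''\supseteq P$ we have $\{v:d_v=2,\ v\notin P''\}\subseteq\{v:d_v=2,\ v\notin P\}$, so the walk built on $P''$ has at most as many unbounded vertices as the original; as the original already attains the minimum $m$, the new leaf-to-leaf path is itself m-Unbounded. The main obstacle I anticipate is not the geometric extension but the justification that enlarging $P$ corresponds to a genuinely realisable m-Unbounded walk: I must check that the characterisation of Lemma~\ref{lemma3} is achievable for $P''$ and, in particular, that reclassifying $v_0$ from endpoint to interior vertex and the extension's interior vertices from off-$P$ to on-$P$ never converts a bounded vertex into an unbounded one. Verifying these reclassifications by tree-degree --- degree $2$ versus degree $\ge 3$, with leaves covered by Lemma~\ref{lemma1} --- is the delicate step.
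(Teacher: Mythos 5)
Your proposal is correct and takes essentially the same route as the paper: the paper's proof likewise extends the path from a non-leaf endpoint into an off-path branch (one vertex at a time, repeated until the endpoint becomes a leaf, with finiteness from $|P|\le|G|$) and argues via the Lemma~\ref{lemma3} characterisation that the number of bounded vertices never decreases. Your explicit counting formula and one-shot walk to a leaf are just a more detailed rendering of that argument, and the achievability point you flag is equally implicit in the paper, where it is only handled informally by the DFS construction given after Lemma~\ref{lemma4}.
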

\begin{proof}
Let $v_0$ be an end that is not a leaf. Since $v_0$ has exactly one neighbour on $P$ and $d_{v_0} \ge 2$, there exists a neighbour $x$ of $v_0$ that is not on $P$. If $x$ is inserted at the beginning of $P$, the number of bounded vertices in this new simple path is $\ge P$. The number of such ``extensions" may be repeated until $d_{v_0} < 2$, signifying that $v_0$ is a leaf. A similar process applies to $v_k$ if it is also not a leaf. The number of extensions is finite since $|P| \le |G|$, otherwise there would be a cycle, contradicting the definition of a tree. 
\end{proof}

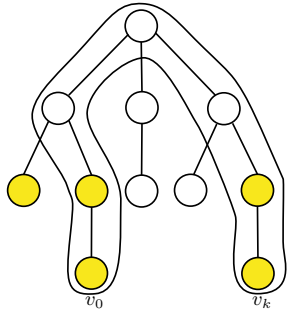
\begin{figure}[htbp] \label{figTree}
    \centering
    \tikzset{every picture/.style={line width=0.75pt}} 
\resizebox{4cm}{!}{
    \begin{tikzpicture}[x=0.75pt,y=0.75pt,yscale=-1,xscale=1]
\draw   (250.52,30.02) .. controls (250.52,24.68) and (254.85,20.36) .. (260.19,20.36) .. controls (265.53,20.36) and (269.86,24.68) .. (269.86,30.02) .. controls (269.86,35.36) and (265.53,39.69) .. (260.19,39.69) .. controls (254.85,39.69) and (250.52,35.36) .. (250.52,30.02) -- cycle ;
\draw   (200.52,80.02) .. controls (200.52,74.68) and (204.85,70.36) .. (210.19,70.36) .. controls (215.53,70.36) and (219.86,74.68) .. (219.86,80.02) .. controls (219.86,85.36) and (215.53,89.69) .. (210.19,89.69) .. controls (204.85,89.69) and (200.52,85.36) .. (200.52,80.02) -- cycle ;
\draw   (250.97,79.58) .. controls (250.97,74.24) and (255.29,69.91) .. (260.63,69.91) .. controls (265.97,69.91) and (270.3,74.24) .. (270.3,79.58) .. controls (270.3,84.92) and (265.97,89.24) .. (260.63,89.24) .. controls (255.29,89.24) and (250.97,84.92) .. (250.97,79.58) -- cycle ;
\draw   (300.3,80.24) .. controls (300.3,74.91) and (304.63,70.58) .. (309.97,70.58) .. controls (315.31,70.58) and (319.63,74.91) .. (319.63,80.24) .. controls (319.63,85.58) and (315.31,89.91) .. (309.97,89.91) .. controls (304.63,89.91) and (300.3,85.58) .. (300.3,80.24) -- cycle ;
\draw  [fill={rgb, 255:red, 248; green, 231; blue, 28 }  ,fill opacity=1 ] (179.72,129.84) .. controls (179.72,124.51) and (184.05,120.18) .. (189.39,120.18) .. controls (194.73,120.18) and (199.06,124.51) .. (199.06,129.84) .. controls (199.06,135.18) and (194.73,139.51) .. (189.39,139.51) .. controls (184.05,139.51) and (179.72,135.18) .. (179.72,129.84) -- cycle ;
\draw  [fill={rgb, 255:red, 248; green, 231; blue, 28 }  ,fill opacity=1 ] (220.66,130.29) .. controls (220.66,124.95) and (224.98,120.62) .. (230.32,120.62) .. controls (235.66,120.62) and (239.99,124.95) .. (239.99,130.29) .. controls (239.99,135.63) and (235.66,139.96) .. (230.32,139.96) .. controls (224.98,139.96) and (220.66,135.63) .. (220.66,130.29) -- cycle ;
\draw  [fill={rgb, 255:red, 248; green, 231; blue, 28 }  ,fill opacity=1 ] (220.52,180.11) .. controls (220.52,174.77) and (224.85,170.44) .. (230.19,170.44) .. controls (235.53,170.44) and (239.86,174.77) .. (239.86,180.11) .. controls (239.86,185.45) and (235.53,189.78) .. (230.19,189.78) .. controls (224.85,189.78) and (220.52,185.45) .. (220.52,180.11) -- cycle ;
\draw   (250.74,130.24) .. controls (250.74,124.91) and (255.07,120.58) .. (260.41,120.58) .. controls (265.75,120.58) and (270.08,124.91) .. (270.08,130.24) .. controls (270.08,135.58) and (265.75,139.91) .. (260.41,139.91) .. controls (255.07,139.91) and (250.74,135.58) .. (250.74,130.24) -- cycle ;
\draw   (280.24,129.74) .. controls (280.24,124.4) and (284.56,120.07) .. (289.9,120.07) .. controls (295.24,120.07) and (299.57,124.4) .. (299.57,129.74) .. controls (299.57,135.08) and (295.24,139.4) .. (289.9,139.4) .. controls (284.56,139.4) and (280.24,135.08) .. (280.24,129.74) -- cycle ;
\draw  [fill={rgb, 255:red, 248; green, 231; blue, 28 }  ,fill opacity=1 ] (320.52,129.74) .. controls (320.52,124.4) and (324.85,120.07) .. (330.19,120.07) .. controls (335.53,120.07) and (339.86,124.4) .. (339.86,129.74) .. controls (339.86,135.08) and (335.53,139.4) .. (330.19,139.4) .. controls (324.85,139.4) and (320.52,135.08) .. (320.52,129.74) -- cycle ;
\draw  [color={rgb, 255:red, 0; green, 0; blue, 0 }  ,draw opacity=1 ][fill={rgb, 255:red, 248; green, 231; blue, 28 }  ,fill opacity=1 ][line width=0.75]  (320.81,180.31) .. controls (320.81,174.97) and (325.14,170.64) .. (330.47,170.64) .. controls (335.81,170.64) and (340.14,174.97) .. (340.14,180.31) .. controls (340.14,185.65) and (335.81,189.97) .. (330.47,189.97) .. controls (325.14,189.97) and (320.81,185.65) .. (320.81,180.31) -- cycle ;
\draw    (252.43,35.14) -- (216.56,72) ;
\draw    (260.19,39.69) -- (260.63,69.91) ;
\draw    (268.78,34.22) -- (304.78,72) ;
\draw    (260.63,89.24) -- (260.41,120.58) ;
\draw    (305.67,89.56) -- (291.89,120) ;
\draw    (315.22,89.11) -- (330.19,120.07) ;
\draw    (330.19,139.4) -- (330.47,170.64) ;
\draw    (204.33,88) -- (189.39,120.18) ;
\draw    (216.56,87.33) -- (230.32,120.62) ;
\draw    (230.32,139.96) -- (230.19,170.44) ;
\draw   (232.71,192.57) .. controls (214.94,194.31) and (214.43,185.71) .. (217.29,158.57) .. controls (220.14,131.43) and (219,138) .. (215.57,127.14) .. controls (212.14,116.29) and (192.3,95.19) .. (194.71,82.29) .. controls (197.13,69.38) and (210.74,58.75) .. (226.56,40.89) .. controls (242.37,23.03) and (247.33,15.61) .. (261.33,16.5) .. controls (275.33,17.39) and (282.56,28.49) .. (298.11,43.56) .. controls (313.66,58.62) and (315.49,59.93) .. (322.33,71.78) .. controls (329.18,83.63) and (344.83,121.33) .. (345.59,122.71) .. controls (346.35,124.08) and (345.14,157.11) .. (345.82,164.59) .. controls (346.51,172.07) and (353,192.57) .. (330.14,192.57) .. controls (307.29,192.57) and (319.35,154.47) .. (316.88,145.29) .. controls (314.41,136.12) and (299.25,98.26) .. (293.89,89.78) .. controls (288.53,81.29) and (273.5,48.68) .. (261.44,49.33) .. controls (249.39,49.99) and (229.14,67.84) .. (227.67,81.78) .. controls (226.19,95.72) and (248.47,115.92) .. (247.89,136.67) .. controls (247.3,157.42) and (250.49,190.83) .. (232.71,192.57) -- cycle ;

\draw (225,193) node [anchor=north west][inner sep=0.75pt]   [align=left] {{\small $v_0$}};
\draw (325.5,193) node [anchor=north west][inner sep=0.75pt]   [align=left] {{\small $v_k$}};

\end{tikzpicture}
}
    \caption{Optimal configuration of bounded (shaded) vertices in a tree}
\end{figure}
Lemmas \ref{lemma3} and \ref{lemma4} lend a straightforward approach to find an m-Unbounded Hamiltonian Path. The algorithm will find a simple path with the largest number of vertices with degree $\le 2$. If the number of such vertices in the simple path is $k$, the exact number of unbounded vertices is $n-(nLeaves+k-2)$. The $-2$ in the second expression ensures the start and end vertices aren't overcounted, since $d_{v_0}, d_{v_k} \le 2$ from Lemma \ref{lemma4}.

Let $dp[v]$ represent the maximum number of bounded vertices on a simple path with $v$ as an endpoint and the other endpoint in the sub tree rooted at v (it is easy to prove that the second endpoint is always a leaf). $dp[v]$ can be calculated by taking the maximum dp[] value of its children and adding $1$ if $d_v$ $\le$ $2$, since every such path can be split into $v$ and a path beginning in the sub tree of one of its children.

For every simple path in a tree, there exists a vertex $l$ with minimum depth. If $l$ is fixed, the maximum number of bounded vertices $M_l$ over the simple paths containing $l$ is the sum of the two largest dp[] values of the children of $l$ with an additional unbounded vertex if $d_l$ $\le$ $2$. It follows that the final number of unbounded vertices in an m-Unbounded Hamiltonian Cycle can be found be taking the maximum $M_l$ for all possible vertices $l$. 

The code below finds an exact set of vertices that are unbounded by additionally storing the child of the vertex chosen in the dp array and utilizing backtracking. All values are initialized to zero, and if $mark[v]=1$, $v$ is bounded. 

\begin{algorithm}[!htbp]
\footnotesize
\DontPrintSemicolon
\SetKwInput{KwInput}{Input} 

\KwInput{Adjacency list $adj[N][]$}

\SetKwFunction{FMain}{Main}
\SetKwFunction{FDfs}{Dfs}
\SetKwProg{Fn}{void}{:}{}

\BlankLine
\Fn{\FDfs{$p$, $par$}}{
    M $\leftarrow$ (0, 0)\;
    \For{k in adj[p]}{
        \lIf{$k = par$}{continue}
        Dfs(k, p)\;
        dp[p] $\leftarrow$ max(dp[p], (dp[k].f, k))\;
        \uIf{$dp[k] > dp[M.f]$}{
            M.s $\leftarrow$ M.f\;
            M.f $\leftarrow$ k\;
        }
        \lElseIf{$dp[k] > dp[M.s]$}{M.s $\leftarrow$ k}
    }
    dp[p].f += (adj[p].size() $\le$ 2 ? 1 : 0)\;
    val $\leftarrow$ dp[M.f].f + dp[M.s].f + (adj[p].size() $\le$ 2 ? 1 : 0)\;
    \uIf{$val > res$}{
        fRoot $\leftarrow$ p\;
        F[0] $\leftarrow$ M.f, F[1] $\leftarrow$ M.s\;
        res $\leftarrow$ val\;
    }
}

\BlankLine
\Fn{\FMain{}}{
    \lIf{$N = 2$}{return}
    \For{$i\gets1$ \KwTo $N$}{
        \lIf{$adj[i].size > 1$}{s $\leftarrow$ i}
        \lElse{mark[i] $\leftarrow$ 1}
    }
    Dfs(s, -1)\;
    
    \BlankLine
    \lIf{$adj[fRoot].size<3$}{mark[fRoot] $\leftarrow$ 1}
    \For{$a\gets1$ \KwTo $2$}{
        x $\leftarrow$ F[a]\;
        \While{dp[x].s is not $0$}{
            \lIf{$adj[x].size() < 3$}{mark[x] $\leftarrow$ 1;}
            x $\leftarrow$ dp[x].s\;
        }
    }
}
\caption{m-Unbounded Hamiltonian Path for Trees}\label{alg:cutedge}
\end{algorithm}

Once the vertices on the path are determined, the m-Unbounded Hamiltonian Path can by constructed with a DFS beginning at $v_0$ that prioritizes neighbours not on $P$ before moving to the next vertex on $P$. The DFS terminates when $v_k$ is reached. The most costly operation is traversing the tree, so the time complexity is $O(N)$.

\section{Brute Force Algorithms} \label{section:bruteForce}
All approaches in this section will begin with an iteration over all subsets of vertices of the graph sorted in non-decreasing order of the size of the subset. Every instance of the iteration represents the set of unbounded vertices, and if it is determined that an m-Unbounded Hamiltonian Cycle exists, the iteration terminates. 

\subsection{Preliminary Approaches}
We first consider two brute force approaches utilized in the original HCP. One approach naively checks all permutations of vertices. However, this assumes that the path length is fixed at N: if unbounded vertices are included, the size of the k-Unbounded Hamiltonian Cycle and the number of times individual vertices are revisited are unknown, so this approach is not feasible. 

Another possible approach is backtracking. The search begins at some vertex and an unvisited or unbounded neighbour is continually added into the path. Let a vertex $v$ be a \textit{dead end} if all of its neighbours are on the current path. When a dead end is reached, the path reverts to the path before $v$ was added and this process repeats until the path can extend to a path that has not yet been reached. Such an approach could be implemented- in fact, the heuristic discussed in the next section could be considered a greatly optimized version of backtracking. One must be careful to not run into a never-ending cycle between consecutive unbounded vertices. A potential implementation could use a BFS/DFS to find all reachable unbounded vertices to extend a dead end. In comparison to the implementation for the original HCP which is maximally bounded by $O(N!)$, the search will be more costly since more options must be considered after a dead end is extended to an unbounded vertex. The final worst-case time complexity of a back-tracking algorithm is $O(2^{N}N!)$

\subsection{Unbounded DP Algorithm}
The Unbounded DP Algorithm is inspired by the Held-Karp Algorithm \cite{Held-Karp}, a well known dynamic programming approach to solve the TSP and HCP in $O(N^{2}2^N)$. The following description assumes a 0-indexed ordering.

Let $dp[i][j]$ represent whether there exists a k-Unbounded Hamiltonian Path beginning at $0$, ending at $j$, and containing only the vertices in bitmask $i$ (a vertex $v$ corresponds to a set bit with value $2^v$). Trivially, a graph with a single vertex always contains a k-Unbounded Hamiltonian Path so $dp[1][0]$ is initialized to true. Iterate on $i\in [0, 1,...,2^N-1]$. For vertices $k$ and $j$ in bitmask $i$, if $dp[i \bigoplus 2^j][k]$ is true and there exists an edge connecting $j$ and $k$, then $dp[i][j]$ is also true. 

This initial transition only accounts for instances when $j$ is visited exactly once. If $j$ is unbounded, then $dp[i][j]$ may also be reached from bitmask $i$ instead of $i \bigoplus 2^j$. Thus, a second transition utilizing a multisource BFS from the vertices that are marked true in the first transition is implemented. The BFS is only allowed to visit vertices in bitmask $i$ that are unbounded, otherwise a bounded vertex would be visited twice. For every vertex $v$ that the BFS visits, $dp[i][v]$ is marked as true. 

\begin{figure}[htbp] \label{figUnboundedDP}
    \centering

    \tikzset{every picture/.style={line width=0.75pt}} 
    \resizebox{10cm}{!}{
    \begin{tikzpicture}[x=0.75pt,y=0.75pt,yscale=-1,xscale=1]
    
\draw  [fill={rgb, 255:red, 248; green, 231; blue, 28 }  ,fill opacity=1 ] (200.04,214.32) .. controls (200.04,206.36) and (206.5,199.91) .. (214.46,199.91) .. controls (222.42,199.91) and (228.87,206.36) .. (228.87,214.32) .. controls (228.87,222.28) and (222.42,228.73) .. (214.46,228.73) .. controls (206.5,228.73) and (200.04,222.28) .. (200.04,214.32) -- cycle ;
\draw   (248.74,165.12) .. controls (248.74,157.16) and (255.2,150.71) .. (263.16,150.71) .. controls (271.12,150.71) and (277.57,157.16) .. (277.57,165.12) .. controls (277.57,173.08) and (271.12,179.53) .. (263.16,179.53) .. controls (255.2,179.53) and (248.74,173.08) .. (248.74,165.12) -- cycle ;
\draw  [line width=1.75]  (201.04,133.32) .. controls (201.04,125.36) and (207.5,118.91) .. (215.46,118.91) .. controls (223.42,118.91) and (229.87,125.36) .. (229.87,133.32) .. controls (229.87,141.28) and (223.42,147.73) .. (215.46,147.73) .. controls (207.5,147.73) and (201.04,141.28) .. (201.04,133.32) -- cycle ;
\draw   (139.44,125.22) .. controls (139.44,117.26) and (145.9,110.81) .. (153.86,110.81) .. controls (161.82,110.81) and (168.27,117.26) .. (168.27,125.22) .. controls (168.27,133.18) and (161.82,139.63) .. (153.86,139.63) .. controls (145.9,139.63) and (139.44,133.18) .. (139.44,125.22) -- cycle ;
\draw  [fill={rgb, 255:red, 248; green, 231; blue, 28 }  ,fill opacity=1 ][line width=1.75]  (208.44,64.72) .. controls (208.44,56.76) and (214.9,50.31) .. (222.86,50.31) .. controls (230.82,50.31) and (237.27,56.76) .. (237.27,64.72) .. controls (237.27,72.68) and (230.82,79.13) .. (222.86,79.13) .. controls (214.9,79.13) and (208.44,72.68) .. (208.44,64.72) -- cycle ;
\draw  [fill={rgb, 255:red, 248; green, 231; blue, 28 }  ,fill opacity=1 ] (270.94,225.55) .. controls (270.94,217.59) and (277.4,211.14) .. (285.36,211.14) .. controls (293.32,211.14) and (299.77,217.59) .. (299.77,225.55) .. controls (299.77,233.51) and (293.32,239.97) .. (285.36,239.97) .. controls (277.4,239.97) and (270.94,233.51) .. (270.94,225.55) -- cycle ;
\draw  [fill={rgb, 255:red, 248; green, 231; blue, 28 }  ,fill opacity=1 ] (119.44,194.72) .. controls (119.44,186.76) and (125.9,180.31) .. (133.86,180.31) .. controls (141.82,180.31) and (148.27,186.76) .. (148.27,194.72) .. controls (148.27,202.68) and (141.82,209.13) .. (133.86,209.13) .. controls (125.9,209.13) and (119.44,202.68) .. (119.44,194.72) -- cycle ;
\draw    (168.2,128.5) -- (201.04,133.32) ;
\draw    (215.46,147.73) -- (214.46,199.91) ;
\draw    (158.53,111.5) -- (210.2,72.77) ;
\draw    (147.87,138.17) -- (137.2,180.5) ;
\draw    (146.87,200.63) -- (200.04,214.32) ;
\draw    (226.93,206.7) -- (252.93,175.83) ;
\draw    (228.6,139.37) -- (253.93,154.5) ;
\draw    (228.27,219.17) -- (270.94,225.55) ;
\draw    (236.27,59.77) .. controls (268.93,39.77) and (328.27,199.7) .. (298.93,220.17) ;
\draw  [fill={rgb, 255:red, 248; green, 231; blue, 28 }  ,fill opacity=1 ][line width=1.75]  (440.04,213.99) .. controls (440.04,206.03) and (446.5,199.57) .. (454.46,199.57) .. controls (462.42,199.57) and (468.87,206.03) .. (468.87,213.99) .. controls (468.87,221.95) and (462.42,228.4) .. (454.46,228.4) .. controls (446.5,228.4) and (440.04,221.95) .. (440.04,213.99) -- cycle ;
\draw   (488.74,164.79) .. controls (488.74,156.83) and (495.2,150.37) .. (503.16,150.37) .. controls (511.12,150.37) and (517.57,156.83) .. (517.57,164.79) .. controls (517.57,172.75) and (511.12,179.2) .. (503.16,179.2) .. controls (495.2,179.2) and (488.74,172.75) .. (488.74,164.79) -- cycle ;
\draw  [line width=1.75]  (441.04,132.99) .. controls (441.04,125.03) and (447.5,118.57) .. (455.46,118.57) .. controls (463.42,118.57) and (469.87,125.03) .. (469.87,132.99) .. controls (469.87,140.95) and (463.42,147.4) .. (455.46,147.4) .. controls (447.5,147.4) and (441.04,140.95) .. (441.04,132.99) -- cycle ;
\draw   (379.44,124.89) .. controls (379.44,116.93) and (385.9,110.47) .. (393.86,110.47) .. controls (401.82,110.47) and (408.27,116.93) .. (408.27,124.89) .. controls (408.27,132.85) and (401.82,139.3) .. (393.86,139.3) .. controls (385.9,139.3) and (379.44,132.85) .. (379.44,124.89) -- cycle ;
\draw  [fill={rgb, 255:red, 248; green, 231; blue, 28 }  ,fill opacity=1 ][line width=1.75]  (448.44,64.39) .. controls (448.44,56.43) and (454.9,49.97) .. (462.86,49.97) .. controls (470.82,49.97) and (477.27,56.43) .. (477.27,64.39) .. controls (477.27,72.35) and (470.82,78.8) .. (462.86,78.8) .. controls (454.9,78.8) and (448.44,72.35) .. (448.44,64.39) -- cycle ;
\draw  [fill={rgb, 255:red, 248; green, 231; blue, 28 }  ,fill opacity=1 ][line width=1.75]  (510.94,225.22) .. controls (510.94,217.26) and (517.4,210.81) .. (525.36,210.81) .. controls (533.32,210.81) and (539.77,217.26) .. (539.77,225.22) .. controls (539.77,233.18) and (533.32,239.63) .. (525.36,239.63) .. controls (517.4,239.63) and (510.94,233.18) .. (510.94,225.22) -- cycle ;
\draw  [fill={rgb, 255:red, 248; green, 231; blue, 28 }  ,fill opacity=1 ][line width=1.75]  (359.44,194.39) .. controls (359.44,186.43) and (365.9,179.97) .. (373.86,179.97) .. controls (381.82,179.97) and (388.27,186.43) .. (388.27,194.39) .. controls (388.27,202.35) and (381.82,208.8) .. (373.86,208.8) .. controls (365.9,208.8) and (359.44,202.35) .. (359.44,194.39) -- cycle ;
\draw    (408.2,128.17) -- (441.04,132.99) ;
\draw [color={rgb, 255:red, 208; green, 2; blue, 27 }  ,draw opacity=1 ][line width=1.5]    (455.46,147.4) -- (454.46,199.57) ;
\draw    (398.53,111.17) -- (450.2,72.43) ;
\draw    (387.87,137.83) -- (377.2,180.17) ;
\draw [color={rgb, 255:red, 208; green, 2; blue, 27 }  ,draw opacity=1 ][line width=1.5]    (386.87,200.3) -- (440.04,213.99) ;
\draw    (466.93,206.37) -- (492.93,175.5) ;
\draw    (468.6,139.03) -- (493.93,154.17) ;
\draw    (468.27,218.83) -- (510.94,225.22) ;
\draw [color={rgb, 255:red, 208; green, 2; blue, 27 }  ,draw opacity=1 ][line width=1.5]    (476.27,59.43) .. controls (508.93,39.43) and (568.27,199.37) .. (538.93,219.83) ;
\draw   (325.53,142.22) -- (339.29,142.22) -- (339.29,138.4) -- (348.47,146.03) -- (339.29,153.67) -- (339.29,149.85) -- (325.53,149.85) -- cycle ;
    \end{tikzpicture}
    }
    \caption{Second transition using a multisource BFS along red edges. Shaded vertices are unbounded and bolded vertices are marked true}
\end{figure}
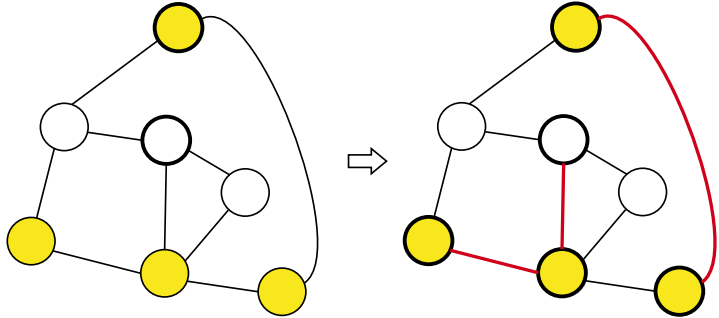

Recall that a k-Unbounded Hamiltonian Cycle is also a k-Unbounded Hamiltonian Path with adjacent end vertices. Importantly, a k-Unbounded Hamiltonian Cycle can always be rotated such that any vertex can be an end vertex. Since the path is fixed at vertex $0$, it suffices to check all $dp[2^N-1][j]$ for all neighbours $j$ of $0$.

The code below stores a pair of values of the previous state instead of a boolean in the dp array to recreate the cycle stored in $path$; $.f$ and $.s$ refer to the first and second values in a pair respectively; .push() inserts an element to the end and .pop() deletes the first element; dp values are intialized to (-1, -1); $ord$ contains integers from $0$ to $2^N-1$ sorted by the number of set bits; $UB[v]=1$ if $v$ is unbounded.

\newpage
\begin{algorithm}[htbp]
\DontPrintSemicolon
\small
\textbf{Input:} Adjacency list $adj[N][]$, adjacency matrix $A[N][N]$

\SetKwFunction{FMain}{Main}
\SetKwFunction{FBfs}{multiBfs}
\SetKwFunction{FCheck}{check}
\SetKwProg{Fn}{void}{:}{}
\SetKwProg{TF}{bool}{:}{}

\BlankLine

\Fn{\FBfs{$x$, $S$}}{
    \lFor{k in S}{
        Q.push(k)
    }
    \While{Q is not empty}{
        v $\leftarrow$ Q.front()\;
        Q.pop()\;
        
        \For{k in adj[v]}{
            nv $\leftarrow$ x $|$ $2^k$\; 
            \lIf{nv $\neq$ x OR dp[x][k].f $\neq$ -1 OR !UB[k]}{continue}
            dp[x][k] $\leftarrow$ (x, v)\;
            Q.push(k)\;
        }
    }
}
\BlankLine
\TF{\FCheck{}}{
    dp[1][0] $\leftarrow$ (0, 0)\;
    \For{$i\gets0$ \KwTo $2^N-1$}{
        v $\leftarrow$ [ ]\;
        \For{$j\gets0$ \KwTo $N-1$}{
            \lIf{i $\&$ $2^j$ is 0}{continue} 
            \For{$k\gets0$ \KwTo $N-1$}{
                \uIf{i $\&$ $2^k$ and A[j][k] and dp[i $\bigoplus 2^j$][k] $\neq$ (-1, -1)}{
                    dp[i][j] $\leftarrow$ (i $\bigoplus 2^j$, k)\;
                    v.push(j)\;
                    break\;
                } 
            }
        }
        Bfs(i, v)\;
    }
    
    \For{$i\gets0$ \KwTo $N$}{
        \uIf{A[0][i] and dp[ $2^N-1$][i] $\ne$ (-1, -1)}{
            V $\leftarrow$ i\;
            return true\;
        }
    }
    return false\;
}
\BlankLine
\Fn{\FMain{}}{
    \For{k in ord}{
        \lFor{$j\gets0$ \KwTo $N-1$}{UB[j] $\leftarrow$ 0}
        \For{$j\gets0$ \KwTo $N-1$}{
            \lIf{k $\&$ $2^j$}{UB[j] $\leftarrow$ 1}
        }
        \uIf{check()}{
            C $\leftarrow$ $2^N-1$\;
            \While{C $\ne$ 0}{
                path.push(V)\;
                tC $\leftarrow$ C, tV $\leftarrow$ V\;
                C $\leftarrow$ dp[tC][tV].f, V $\leftarrow$ dp[tC][tV].s\;
            }
            return\;
        }
    }
}
\caption{Unbounded DP Algorithm}\label{alg:unboundedDP}
\end{algorithm}
The iteration over subsets of $G$ in $ord$ contributes $O(2^N)$. The loop over all bitmasks $i$ from $0$ to $2^N-1$ contributes another factor of $O(2^N)$. Within this loop is the initial transition, which runs in $O(N^2)$ from the nested loops iterating over the bits of $i$, as well as the BFS, which runs in $O(E)$. The final time complexity is $O(2^N)O(2^N)O(N^2+E) = O(4^{N}N^2)$. $O(2^{N}N)$ memory is required to maintain the dp matrix.

\section{Conversions to k-Unbounded HCP}
The k-Unbounded Hamiltonian Cycle Problem can trivially be converted to the Hamiltonian Cycle Problem through the determination of a 0-Unbounded Hamiltonian Cycle. In this section, we explore the inverse relationship and propose two different constructions in which the HCP and ATSP can be applied to solve the k-Unbounded HCP. 

\begin{lemma} \label{lemma5}
For a given m-Unbounded Hamiltonian Cycle, there exists an m-Unbounded Hamiltonian Cycle with the same set of unbounded vertices such that every vertex appears no more than N times
\end{lemma}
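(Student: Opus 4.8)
The plan is to pass from the cyclic \emph{sequence} description of an $m$-Unbounded Hamiltonian Cycle to its underlying \emph{edge-multiset}. Given the cycle $W$, let $H$ be the multigraph on $V$ whose edges are exactly the edges of $G$ traversed by $W$, counted with multiplicity. Because $W$ is a single closed walk meeting every vertex, $H$ is connected and spanning and every vertex has even degree; conversely, $W$ is merely an Eulerian circuit of $H$. The crucial bookkeeping identity is that the number of times a vertex $v$ is visited equals $\deg_H(v)/2$, since each visit consumes exactly two edge-endpoints at $v$. Hence ``$v$ appears at most $N$ times'' is equivalent to ``$\deg_H(v)\le 2N$'', and the unbounded set is exactly $U=\{v:\deg_H(v)\ge 4\}$, every bounded vertex satisfying $\deg_H(v)=2$. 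The target statement thus becomes: among all connected, spanning, even multigraphs $H'$ on $V$ with edges drawn from $E(G)$, with $\deg_{H'}(v)=2$ on $V\setminus U$ and $\deg_{H'}(v)\ge 4$ on $U$, there is one of maximum degree at most $2N$.

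First I would fix a representative by choosing $H$ to minimize the number of edges $|E(H)|$ (equivalently the walk length $L$); a minimum exists because the admissible edge-counts form a nonempty set of positive integers. The heart of the argument is then a local exchange: a minimal $H$ can contain no \emph{removable closed subwalk}, i.e.\ no circuit whose deletion leaves the multigraph connected while still meeting every degree constraint. Concretely, a circuit using only vertices of $U$, each of which retains degree $\ge 4$ after deletion, could be excised to give a shorter admissible $H$, contradicting minimality. I would apply this first to digons: if some edge $vw$ has multiplicity $\ge 3$, its two surplus copies form a digon, and deleting it keeps $v$ and $w$ joined by a remaining copy, so minimality forces at least one of $v,w$ down to its degree floor. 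This pins down the only way an edge can carry multiplicity $\ge 3$.

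With multiplicities controlled, the degree bound should follow by counting against distinct neighbours. Since $G$ is simple, $v$ has at most $N-1$ neighbours, so if every incident edge had multiplicity at most $2$ we would already get $\deg_H(v)\le 2(N-1)<2N$, hence at most $N-1$ visits. The remaining task is to bound the contribution of the forced high-multiplicity edges, and here I would invoke that $m$ is the \emph{minimum} number of unbounded vertices: a vertex pushed to many visits through a high-multiplicity edge to some $w$ can only be blocked from reduction when $w$ is itself pinned at the floor $\deg_H(w)=4$, and a minimum-size unbounded set rules out the pathological accumulation (for instance many unbounded leaves at a single hub) that would otherwise let one vertex's degree exceed $2N$. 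Assembling these pieces yields $\deg_H(v)\le 2N$, i.e.\ at most $N$ visits, for every $v$, while the construction never changes which vertices have degree $\ge 4$, so the unbounded set — and thus $m$ — is preserved.

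The step I expect to be the main obstacle is precisely the exchange argument's global side-condition: deleting a circuit or digon must \emph{simultaneously} preserve connectivity of $H$ and keep \emph{every} unbounded vertex at degree $\ge 4$. Multiplicities forced by ``pendant-like'' unbounded vertices (vertices almost all of whose incident edges run to one hub) are the genuinely delicate case, and ruling out that such a hub accumulates more than $N$ visits is exactly where the minimality of the unbounded set, rather than merely its being fixed, must be used. If a clean minimality-based exchange proves awkward, a fallback is to argue directly on the walk: in a length-minimal walk every excursion between two successive visits of a fixed vertex is \emph{essential} (its deletion would erase some vertex or drop some unbounded vertex to a single visit), after which one charges each excursion to a distinct vertex-occurrence to bound the number of excursions by $N$.
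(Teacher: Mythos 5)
Your primary route stalls exactly where you flag it, and the flag does not fill the hole; it \emph{is} the hole. After the digon exchange you know only that an edge of multiplicity $\ge 3$ has an endpoint pinned at its floor, and this still permits, a priori, a hub $h$ whose $N-1$ incident edges each carry multiplicity up to $4$ into floor-pinned unbounded neighbours, i.e.\ $\deg_H(h)$ as large as $4(N-1)$, which exceeds $2N$ for all $N\ge 3$. The sentence ``a minimum-size unbounded set rules out the pathological accumulation'' is an assertion of the lemma at that point, not an argument. Moreover, your notion of deletability (connectivity plus \emph{every} degree floor preserved) is too weak to finish even in principle: the circuits that must be deleted are the excursions of the Eulerian circuit between consecutive visits of an over-visited vertex, and deleting such a circuit may drop an unbounded vertex from degree $4$ to degree $2$, so under your side-condition it is not deletable at all. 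The resolution is that minimality of $m$ makes the floors automatic: if deleting a circuit keeps the multigraph connected, spanning, and of minimum degree $\ge 2$, its Eulerian circuit is a closed spanning walk whose unbounded set is a subset of $U$ with at least $m$ elements, hence equal to $U$. You gesture at this but never operationalize it, and once you do, the multigraph layer contributes only notation — the deletable circuits are found, and the degree bound extracted, by arguing on the walk.

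That walk argument is your one-sentence fallback, and fleshed out it is precisely the paper's proof, with two repairs needed. Write the cycle as $v,P_1,v,P_2,\dots,v,P_k$, the $P_i$ being the subpaths strictly between consecutive visits of $v$. If every vertex of some $P_i$ already occurs in an earlier $P_j$, delete $P_i$ together with one visit of $v$: the walk stays closed and still visits every vertex, and — this is where minimality of $m$ enters, answering the worry you raise about deletions ``dropping some unbounded vertex to a single visit'' — the unbounded set cannot shrink, since a spanning closed walk with fewer than $m$ repeated vertices cannot exist. Second, the charging must be to distinct \emph{vertices}, not to ``vertex-occurrences'' as you write: charging to occurrences bounds the number of excursions by the walk length, which is circular. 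Once no excursion is deletable, each surviving $P_u$ contains a vertex appearing in no earlier excursion; such a first-appearance vertex can be claimed by only one excursion, so there are at most $N-1$ excursions and hence at most $N$ visits of $v$. Each deletion shortens the walk, so iterating over all offending vertices terminates. As submitted, then, the main argument has a genuine gap at its crucial step, and the correct proof is the contingency you did not develop.
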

\begin{proof}
Let there be a vertex $v$ that appears $k$ times in an m-Unbounded Hamiltonian Cycle $P$ for $k>N$. Consider the set $S=(P_1, P_2,...,P_k)$ that is composed of the paths strictly between consecutive instances of $v$ in $P$. If every vertex in a given $P_i$ has already appeared in a path $P_j$ for $j<i$, then the path formed from removing $P_i$ and the following instance of $v$ from P is also an m-Unbounded Hamiltonian Cycle, since every removed vertex appears in $P$ at least two times. If every such $P_i$ is removed from $S$ and the corresponding vertices are removed from $P$, for every path $P_u$, there exists at least one vertex that does not appear in any $P_j$ for $j<u$. It follows that there are at least $|S|+1$ distinct vertices in $P$ (including $v$), which is also bounded by $N$. Thus, $|S|\le N-1$, and since $v$ appears $|S|+1$ times in $P$, there exists an m-Unbounded Hamiltonian Cycle that contains no more than $N$ occurrences of $v$. The same process may be applied to all vertices that appear greater than $N$ times to construct an m-Unbounded Hamiltonian Cycle that satisfies the constraints. 
\end{proof}
\subsection{HCP Conversion}
Applying the HCP on this construction verifies whether a k-Unbounded Hamiltonian Cycle exists given a set of unbounded vertices. The construction works as follows: replace every unbounded vertex $u$ with $C_u$, a cycle of length $N$. For every vertex in $C_u$, add an edge to an adjacent vertex $k$, or if $k$ is unbounded, add an edge to every vertex in $C_k$.

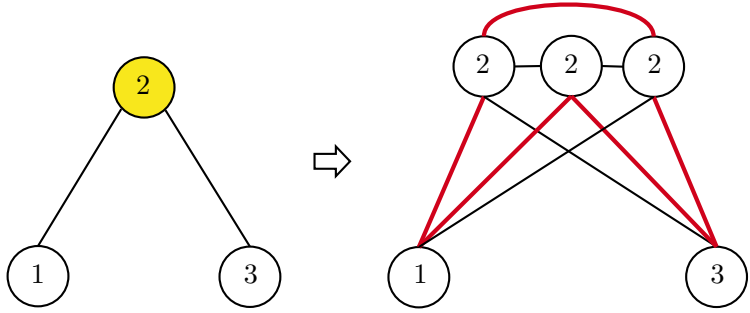
\begin{figure}[htbp] \label{figHCPConversion}
    \centering

    \tikzset{every picture/.style={line width=0.75pt}} 
    \resizebox{10cm}{!}{
    \begin{tikzpicture}[x=0.75pt,y=0.75pt,yscale=-1,xscale=1]
    
\draw  [fill={rgb, 255:red, 248; green, 231; blue, 28 }  ,fill opacity=1 ] (271.27,105.05) .. controls (271.27,97.13) and (277.68,90.72) .. (285.6,90.72) .. controls (293.52,90.72) and (299.93,97.13) .. (299.93,105.05) .. controls (299.93,112.97) and (293.52,119.38) .. (285.6,119.38) .. controls (277.68,119.38) and (271.27,112.97) .. (271.27,105.05) -- cycle ;
\draw   (321.15,195.03) .. controls (321.15,187.12) and (327.57,180.7) .. (335.48,180.7) .. controls (343.4,180.7) and (349.82,187.12) .. (349.82,195.03) .. controls (349.82,202.95) and (343.4,209.37) .. (335.48,209.37) .. controls (327.57,209.37) and (321.15,202.95) .. (321.15,195.03) -- cycle ;
\draw   (221.27,194.75) .. controls (221.27,186.83) and (227.68,180.42) .. (235.6,180.42) .. controls (243.52,180.42) and (249.93,186.83) .. (249.93,194.75) .. controls (249.93,202.67) and (243.52,209.08) .. (235.6,209.08) .. controls (227.68,209.08) and (221.27,202.67) .. (221.27,194.75) -- cycle ;
\draw    (274.85,115.33) -- (235.6,180.42) ;
\draw    (295.65,115.83) -- (335.48,180.7) ;
\draw   (365.93,136.63) -- (375.98,136.63) -- (375.98,133.1) -- (382.68,140.17) -- (375.98,147.24) -- (375.98,143.7) -- (365.93,143.7) -- cycle ;
\draw   (431.45,95.09) .. controls (431.45,87.18) and (437.87,80.76) .. (445.79,80.76) .. controls (453.7,80.76) and (460.12,87.18) .. (460.12,95.09) .. controls (460.12,103.01) and (453.7,109.43) .. (445.79,109.43) .. controls (437.87,109.43) and (431.45,103.01) .. (431.45,95.09) -- cycle ;
\draw   (541.09,195.09) .. controls (541.09,187.17) and (547.51,180.76) .. (555.43,180.76) .. controls (563.34,180.76) and (569.76,187.17) .. (569.76,195.09) .. controls (569.76,203) and (563.34,209.42) .. (555.43,209.42) .. controls (547.51,209.42) and (541.09,203) .. (541.09,195.09) -- cycle ;
\draw   (400.77,195.25) .. controls (400.77,187.33) and (407.18,180.92) .. (415.1,180.92) .. controls (423.02,180.92) and (429.43,187.33) .. (429.43,195.25) .. controls (429.43,203.17) and (423.02,209.58) .. (415.1,209.58) .. controls (407.18,209.58) and (400.77,203.17) .. (400.77,195.25) -- cycle ;
\draw [color={rgb, 255:red, 208; green, 2; blue, 27 }  ,draw opacity=1 ][line width=1.5]    (445.79,109.43) -- (415.1,180.92) ;
\draw    (445.79,109.43) -- (555.43,180.76) ;
\draw   (472.64,94.88) .. controls (472.64,86.96) and (479.06,80.54) .. (486.97,80.54) .. controls (494.89,80.54) and (501.31,86.96) .. (501.31,94.88) .. controls (501.31,102.79) and (494.89,109.21) .. (486.97,109.21) .. controls (479.06,109.21) and (472.64,102.79) .. (472.64,94.88) -- cycle ;
\draw   (511.43,95.19) .. controls (511.43,87.28) and (517.85,80.86) .. (525.77,80.86) .. controls (533.68,80.86) and (540.1,87.28) .. (540.1,95.19) .. controls (540.1,103.11) and (533.68,109.53) .. (525.77,109.53) .. controls (517.85,109.53) and (511.43,103.11) .. (511.43,95.19) -- cycle ;
\draw [color={rgb, 255:red, 208; green, 2; blue, 27 }  ,draw opacity=1 ][line width=1.5]    (486.97,109.21) -- (415.1,180.92) ;
\draw [draw opacity=0][line width=0.75]    (525.77,109.53) -- (415.1,180.92) ;
\draw [color={rgb, 255:red, 208; green, 2; blue, 27 }  ,draw opacity=1 ][line width=1.5]    (486.97,109.21) -- (555.43,180.76) ;
\draw [color={rgb, 255:red, 208; green, 2; blue, 27 }  ,draw opacity=1 ][line width=1.5]    (525.77,109.53) -- (555.43,180.76) ;
\draw    (460.12,95.09) -- (472.64,94.88) ;
\draw    (501.31,94.88) -- (511.43,95.19) ;
\draw [color={rgb, 255:red, 208; green, 2; blue, 27 }  ,draw opacity=1 ][line width=1.5]    (445.79,80.76) .. controls (445.4,60.6) and (525.69,60.6) .. (525.77,80.86) ;
\draw    (415.1,180.92) -- (525.77,109.53) ;

\draw (230.73,188) node [anchor=north west][inner sep=0.75pt]   [align=left] {1};
\draw (331.07,188) node [anchor=north west][inner sep=0.75pt]   [align=left] {3};
\draw (551.07,188) node [anchor=north west][inner sep=0.75pt]   [align=left] {3};
\draw (411,188) node [anchor=north west][inner sep=0.75pt]   [align=left] {1};
\draw (280.4,98) node [anchor=north west][inner sep=0.75pt]   [align=left] {2};
\draw (440.4,88) node [anchor=north west][inner sep=0.75pt]   [align=left] {2};
\draw (482.07,88) node [anchor=north west][inner sep=0.75pt]   [align=left] {2};
\draw (521.4,88) node [anchor=north west][inner sep=0.75pt]   [align=left] {2};

    \end{tikzpicture}
    }
    \caption{HCP Construction for $N=3$ with the shaded vertex unbounded}
\end{figure}

If the vertices in $C_u$ are labeled with the corresponding unbounded vertex, the Hamiltonian Cycle in which consecutive instances of the same vertex are reduced to a single vertex is a k-Unbounded Hamiltonian Cycle: the set of bounded vertices remains the same and every vertex appears at least once in this constructed path. Furthermore, any k-Unbounded Hamiltonian Cycle may be converted to a Hamiltonian Cycle in the construction. Let the k-Unbounded Hamiltonian Cycle formed after applying the process outlined in Lemma \ref{lemma5} be $P$, and let $C_u=(u_1, u_2,...,u_{N})$ for a given unbounded vertex $u$. We propose the following transformation: for every unbounded vertex $u$ that appears $m$ times in $P$, replace the ordered sequence of all $m$ instances of $u$ with the sequence $(u_1, u_2,...,u_m)$ and insert the sequence $(u_{m+1}, u_{m+2},...,u_{N})$ following $u_m$. This new path is a Hamiltonian Cycle since all edges exist and every vertex is visited in the construction. It follows that a k-Unbounded Hamiltonian Cycle exists if and only if a Hamiltonian Cycle exists in the construction.

An m-Unbounded Hamiltonian Cycle may be determined by utilizing the same brute force approach in Section \ref{section:bruteForce}- iterating over the subsets of vertices in non-decreasing order of size. Since the size of the construction can have $N^2$ vertices, if the Held-Karp Algorithm is used to identify a Hamiltonian Cycle, the worst-case time complexity is $O(2^{N+N^2}N^4)$.

\subsection{ATSP Conversion}
Let there exist an m-Unbounded Hamiltonian Cycle $P$ that satisfies the conditions outlined in Lemma \ref{lemma5} in a graph $G$. Consider the directed instance of $G$ in which every bidirectional edge is replaced with 2 opposing directed edges. We propose the following transformation for each vertex $v$. $v$ is replaced with the vertex set $\{v_{a_1}, v_{a_2},..., v_{a_N}, v_{b_1}, v_{b_2},..., v_{b_N}, v_{c_1}, v_{c_2},..., v_{c_N}\}$ of size $3N$. All outgoing directed edges originate from the vertex set $\{v_{c_1}, v_{c_2},..., v_{c_N}, v_{a_1}\}$ and all incoming directed edges end at the vertex set $\{v_{a_1}, v_{a_2},..., v_{a_N}, v_{c_N}\}$. More formally, for every neighbour $k$ of $v$ in $G$, there exist edges $v_{i}k_{j}$ for all $i \in \{c_1, c_2,..., c_N, a_1\}$ and $j \in \{a_1, a_2,..., a_N, c_N\}$. Furthermore, there exist the undirected (or two directed) edges $v_{a_i}v_{b_i}$, $v_{b_i}v_{c_i}$ for $1\le i\le N$ and $v_{a_i}v_{c_{i-1}}$ for $2\le i\le N$. Every edge has a weight of $0$ except for a single ``downward" directed edge in the form $v_{a_i}v_{b_i}$ or $v_{a_j}v_{b_j}$ with weight $1$.

\begin{figure}[htbp] \label{figATSPConversion}
    \centering

    \tikzset{every picture/.style={line width=0.75pt}} 
    \resizebox{10cm}{!}{
    \begin{tikzpicture}[x=0.75pt,y=0.75pt,yscale=-1,xscale=1]
\draw   (120.72,80.13) .. controls (120.72,69.52) and (129.33,60.91) .. (139.94,60.91) .. controls (150.56,60.91) and (159.17,69.52) .. (159.17,80.13) .. controls (159.17,90.75) and (150.56,99.36) .. (139.94,99.36) .. controls (129.33,99.36) and (120.72,90.75) .. (120.72,80.13) -- cycle ;
\draw   (120.06,150.53) .. controls (120.06,139.92) and (128.66,131.31) .. (139.28,131.31) .. controls (149.89,131.31) and (158.5,139.92) .. (158.5,150.53) .. controls (158.5,161.15) and (149.89,169.76) .. (139.28,169.76) .. controls (128.66,169.76) and (120.06,161.15) .. (120.06,150.53) -- cycle ;
\draw   (120.44,220.13) .. controls (120.44,209.52) and (129.04,200.91) .. (139.66,200.91) .. controls (150.27,200.91) and (158.88,209.52) .. (158.88,220.13) .. controls (158.88,230.75) and (150.27,239.36) .. (139.66,239.36) .. controls (129.04,239.36) and (120.44,230.75) .. (120.44,220.13) -- cycle ;
\draw   (200.99,80) .. controls (200.99,69.38) and (209.59,60.78) .. (220.21,60.78) .. controls (230.83,60.78) and (239.43,69.38) .. (239.43,80) .. controls (239.43,90.62) and (230.83,99.22) .. (220.21,99.22) .. controls (209.59,99.22) and (200.99,90.62) .. (200.99,80) -- cycle ;
\draw   (200.32,150.13) .. controls (200.32,139.52) and (208.93,130.91) .. (219.54,130.91) .. controls (230.16,130.91) and (238.77,139.52) .. (238.77,150.13) .. controls (238.77,160.75) and (230.16,169.36) .. (219.54,169.36) .. controls (208.93,169.36) and (200.32,160.75) .. (200.32,150.13) -- cycle ;
\draw   (200.32,220.53) .. controls (200.32,209.92) and (208.93,201.31) .. (219.54,201.31) .. controls (230.16,201.31) and (238.77,209.92) .. (238.77,220.53) .. controls (238.77,231.15) and (230.16,239.76) .. (219.54,239.76) .. controls (208.93,239.76) and (200.32,231.15) .. (200.32,220.53) -- cycle ;
\draw   (281.12,80.53) .. controls (281.12,69.92) and (289.73,61.31) .. (300.34,61.31) .. controls (310.96,61.31) and (319.57,69.92) .. (319.57,80.53) .. controls (319.57,91.15) and (310.96,99.76) .. (300.34,99.76) .. controls (289.73,99.76) and (281.12,91.15) .. (281.12,80.53) -- cycle ;
\draw   (280.32,150.53) .. controls (280.32,139.92) and (288.93,131.31) .. (299.54,131.31) .. controls (310.16,131.31) and (318.77,139.92) .. (318.77,150.53) .. controls (318.77,161.15) and (310.16,169.76) .. (299.54,169.76) .. controls (288.93,169.76) and (280.32,161.15) .. (280.32,150.53) -- cycle ;
\draw   (280.72,220.53) .. controls (280.72,209.92) and (289.33,201.31) .. (299.94,201.31) .. controls (310.56,201.31) and (319.17,209.92) .. (319.17,220.53) .. controls (319.17,231.15) and (310.56,239.76) .. (299.94,239.76) .. controls (289.33,239.76) and (280.72,231.15) .. (280.72,220.53) -- cycle ;
\draw   (360.72,80.13) .. controls (360.72,69.52) and (369.33,60.91) .. (379.94,60.91) .. controls (390.56,60.91) and (399.17,69.52) .. (399.17,80.13) .. controls (399.17,90.75) and (390.56,99.36) .. (379.94,99.36) .. controls (369.33,99.36) and (360.72,90.75) .. (360.72,80.13) -- cycle ;
\draw   (361.52,150.53) .. controls (361.52,139.92) and (370.13,131.31) .. (380.74,131.31) .. controls (391.36,131.31) and (399.97,139.92) .. (399.97,150.53) .. controls (399.97,161.15) and (391.36,169.76) .. (380.74,169.76) .. controls (370.13,169.76) and (361.52,161.15) .. (361.52,150.53) -- cycle ;
\draw   (361.18,220.53) .. controls (361.18,209.92) and (369.79,201.31) .. (380.4,201.31) .. controls (391.02,201.31) and (399.62,209.92) .. (399.62,220.53) .. controls (399.62,231.15) and (391.02,239.76) .. (380.4,239.76) .. controls (369.79,239.76) and (361.18,231.15) .. (361.18,220.53) -- cycle ;
\draw    (139.94,99.36) -- (139.28,131.31) ;
\draw    (138.99,169.76) -- (139.66,200.91) ;
\draw    (220.21,99.22) -- (219.54,130.91) ;
\draw    (219.54,169.36) -- (219.54,201.31) ;
\draw    (299.54,169.76) -- (299.94,201.31) ;
\draw    (379.94,99.36) -- (380.74,131.31) ;
\draw    (380.74,169.76) -- (380.4,201.31) ;
\draw    (209.07,95.57) -- (149.73,203.43) ;
\draw    (288.4,96.1) -- (229.73,203.77) ;
\draw    (367.07,94.1) -- (309.73,204.1) ;
\draw [fill={rgb, 255:red, 0; green, 0; blue, 0 }  ,fill opacity=1 ]   (120.16,30.6) -- (129.17,55.22) ;
\draw [shift={(130.2,58.04)}, rotate = 249.89] [fill={rgb, 255:red, 0; green, 0; blue, 0 }  ][line width=0.08]  [draw opacity=0] (8.93,-4.29) -- (0,0) -- (8.93,4.29) -- cycle    ;
\draw [fill={rgb, 255:red, 0; green, 0; blue, 0 }  ,fill opacity=1 ]   (140.6,30.44) -- (140.25,53.44) ;
\draw [shift={(140.2,56.44)}, rotate = 270.88] [fill={rgb, 255:red, 0; green, 0; blue, 0 }  ][line width=0.08]  [draw opacity=0] (8.93,-4.29) -- (0,0) -- (8.93,4.29) -- cycle    ;
\draw [fill={rgb, 255:red, 0; green, 0; blue, 0 }  ,fill opacity=1 ]   (160.38,30.6) -- (151.97,55.2) ;
\draw [shift={(151,58.04)}, rotate = 288.87] [fill={rgb, 255:red, 0; green, 0; blue, 0 }  ][line width=0.08]  [draw opacity=0] (8.93,-4.29) -- (0,0) -- (8.93,4.29) -- cycle    ;
\draw [fill={rgb, 255:red, 0; green, 0; blue, 0 }  ,fill opacity=1 ]   (200.16,30.27) -- (209.17,54.89) ;
\draw [shift={(210.2,57.71)}, rotate = 249.89] [fill={rgb, 255:red, 0; green, 0; blue, 0 }  ][line width=0.08]  [draw opacity=0] (8.93,-4.29) -- (0,0) -- (8.93,4.29) -- cycle    ;
\draw [fill={rgb, 255:red, 0; green, 0; blue, 0 }  ,fill opacity=1 ]   (220.6,30.11) -- (220.25,53.11) ;
\draw [shift={(220.2,56.11)}, rotate = 270.88] [fill={rgb, 255:red, 0; green, 0; blue, 0 }  ][line width=0.08]  [draw opacity=0] (8.93,-4.29) -- (0,0) -- (8.93,4.29) -- cycle    ;
\draw [fill={rgb, 255:red, 0; green, 0; blue, 0 }  ,fill opacity=1 ]   (240.38,30.27) -- (231.97,54.87) ;
\draw [shift={(231,57.71)}, rotate = 288.87] [fill={rgb, 255:red, 0; green, 0; blue, 0 }  ][line width=0.08]  [draw opacity=0] (8.93,-4.29) -- (0,0) -- (8.93,4.29) -- cycle    ;
\draw [fill={rgb, 255:red, 0; green, 0; blue, 0 }  ,fill opacity=1 ]   (280.16,29.6) -- (289.17,54.22) ;
\draw [shift={(290.2,57.04)}, rotate = 249.89] [fill={rgb, 255:red, 0; green, 0; blue, 0 }  ][line width=0.08]  [draw opacity=0] (8.93,-4.29) -- (0,0) -- (8.93,4.29) -- cycle    ;
\draw [fill={rgb, 255:red, 0; green, 0; blue, 0 }  ,fill opacity=1 ]   (300.6,29.44) -- (300.25,52.44) ;
\draw [shift={(300.2,55.44)}, rotate = 270.88] [fill={rgb, 255:red, 0; green, 0; blue, 0 }  ][line width=0.08]  [draw opacity=0] (8.93,-4.29) -- (0,0) -- (8.93,4.29) -- cycle    ;
\draw [fill={rgb, 255:red, 0; green, 0; blue, 0 }  ,fill opacity=1 ]   (320.38,29.6) -- (311.97,54.2) ;
\draw [shift={(311,57.04)}, rotate = 288.87] [fill={rgb, 255:red, 0; green, 0; blue, 0 }  ][line width=0.08]  [draw opacity=0] (8.93,-4.29) -- (0,0) -- (8.93,4.29) -- cycle    ;
\draw [fill={rgb, 255:red, 0; green, 0; blue, 0 }  ,fill opacity=1 ]   (360.2,29.6) -- (369.21,54.22) ;
\draw [shift={(370.24,57.04)}, rotate = 249.89] [fill={rgb, 255:red, 0; green, 0; blue, 0 }  ][line width=0.08]  [draw opacity=0] (8.93,-4.29) -- (0,0) -- (8.93,4.29) -- cycle    ;
\draw [fill={rgb, 255:red, 0; green, 0; blue, 0 }  ,fill opacity=1 ]   (380.64,29.44) -- (380.29,52.44) ;
\draw [shift={(380.24,55.44)}, rotate = 270.88] [fill={rgb, 255:red, 0; green, 0; blue, 0 }  ][line width=0.08]  [draw opacity=0] (8.93,-4.29) -- (0,0) -- (8.93,4.29) -- cycle    ;
\draw [fill={rgb, 255:red, 0; green, 0; blue, 0 }  ,fill opacity=1 ]   (400.42,29.6) -- (392.01,54.2) ;
\draw [shift={(391.04,57.04)}, rotate = 288.87] [fill={rgb, 255:red, 0; green, 0; blue, 0 }  ][line width=0.08]  [draw opacity=0] (8.93,-4.29) -- (0,0) -- (8.93,4.29) -- cycle    ;
\draw [fill={rgb, 255:red, 0; green, 0; blue, 0 }  ,fill opacity=1 ]   (370.09,237.29) -- (361.3,264.43) ;
\draw [shift={(360.37,267.29)}, rotate = 287.94] [fill={rgb, 255:red, 0; green, 0; blue, 0 }  ][line width=0.08]  [draw opacity=0] (8.93,-4.29) -- (0,0) -- (8.93,4.29) -- cycle    ;
\draw [fill={rgb, 255:red, 0; green, 0; blue, 0 }  ,fill opacity=1 ]   (380.4,239.76) -- (380.12,266.86) ;
\draw [shift={(380.09,269.86)}, rotate = 270.6] [fill={rgb, 255:red, 0; green, 0; blue, 0 }  ][line width=0.08]  [draw opacity=0] (8.93,-4.29) -- (0,0) -- (8.93,4.29) -- cycle    ;
\draw [fill={rgb, 255:red, 0; green, 0; blue, 0 }  ,fill opacity=1 ]   (390.33,237.4) -- (399.37,263.89) ;
\draw [shift={(400.33,266.73)}, rotate = 251.18] [fill={rgb, 255:red, 0; green, 0; blue, 0 }  ][line width=0.08]  [draw opacity=0] (8.93,-4.29) -- (0,0) -- (8.93,4.29) -- cycle    ;
\draw [fill={rgb, 255:red, 0; green, 0; blue, 0 }  ,fill opacity=1 ]   (290.6,237.57) -- (281.28,265.02) ;
\draw [shift={(280.31,267.86)}, rotate = 288.76] [fill={rgb, 255:red, 0; green, 0; blue, 0 }  ][line width=0.08]  [draw opacity=0] (8.93,-4.29) -- (0,0) -- (8.93,4.29) -- cycle    ;
\draw [fill={rgb, 255:red, 0; green, 0; blue, 0 }  ,fill opacity=1 ]   (299.94,239.76) -- (300.02,267.14) ;
\draw [shift={(300.03,270.14)}, rotate = 269.84] [fill={rgb, 255:red, 0; green, 0; blue, 0 }  ][line width=0.08]  [draw opacity=0] (8.93,-4.29) -- (0,0) -- (8.93,4.29) -- cycle    ;
\draw [fill={rgb, 255:red, 0; green, 0; blue, 0 }  ,fill opacity=1 ]   (309.74,237.57) -- (319.05,264.45) ;
\draw [shift={(320.03,267.29)}, rotate = 250.91] [fill={rgb, 255:red, 0; green, 0; blue, 0 }  ][line width=0.08]  [draw opacity=0] (8.93,-4.29) -- (0,0) -- (8.93,4.29) -- cycle    ;
\draw [fill={rgb, 255:red, 0; green, 0; blue, 0 }  ,fill opacity=1 ]   (210.03,237.57) -- (201.22,265.57) ;
\draw [shift={(200.31,268.43)}, rotate = 287.47] [fill={rgb, 255:red, 0; green, 0; blue, 0 }  ][line width=0.08]  [draw opacity=0] (8.93,-4.29) -- (0,0) -- (8.93,4.29) -- cycle    ;
\draw [fill={rgb, 255:red, 0; green, 0; blue, 0 }  ,fill opacity=1 ]   (219.54,239.76) -- (219.98,266.86) ;
\draw [shift={(220.03,269.86)}, rotate = 269.08] [fill={rgb, 255:red, 0; green, 0; blue, 0 }  ][line width=0.08]  [draw opacity=0] (8.93,-4.29) -- (0,0) -- (8.93,4.29) -- cycle    ;
\draw [fill={rgb, 255:red, 0; green, 0; blue, 0 }  ,fill opacity=1 ]   (230.03,236.43) -- (238.59,264.7) ;
\draw [shift={(239.46,267.57)}, rotate = 253.16] [fill={rgb, 255:red, 0; green, 0; blue, 0 }  ][line width=0.08]  [draw opacity=0] (8.93,-4.29) -- (0,0) -- (8.93,4.29) -- cycle    ;
\draw [fill={rgb, 255:red, 0; green, 0; blue, 0 }  ,fill opacity=1 ]   (130.26,237.46) -- (121.18,265.81) ;
\draw [shift={(120.27,268.67)}, rotate = 287.75] [fill={rgb, 255:red, 0; green, 0; blue, 0 }  ][line width=0.08]  [draw opacity=0] (8.93,-4.29) -- (0,0) -- (8.93,4.29) -- cycle    ;
\draw [fill={rgb, 255:red, 0; green, 0; blue, 0 }  ,fill opacity=1 ]   (139.66,239.36) -- (140,267.36) ;
\draw [shift={(140.04,270.36)}, rotate = 269.3] [fill={rgb, 255:red, 0; green, 0; blue, 0 }  ][line width=0.08]  [draw opacity=0] (8.93,-4.29) -- (0,0) -- (8.93,4.29) -- cycle    ;
\draw [fill={rgb, 255:red, 0; green, 0; blue, 0 }  ,fill opacity=1 ]   (149.46,236.43) -- (159.31,265.82) ;
\draw [shift={(160.27,268.67)}, rotate = 251.46] [fill={rgb, 255:red, 0; green, 0; blue, 0 }  ][line width=0.08]  [draw opacity=0] (8.93,-4.29) -- (0,0) -- (8.93,4.29) -- cycle    ;
\draw [fill={rgb, 255:red, 0; green, 0; blue, 0 }  ,fill opacity=1 ]   (431.3,199.1) -- (407.06,209.09) ;
\draw [shift={(404.29,210.23)}, rotate = 337.6] [fill={rgb, 255:red, 0; green, 0; blue, 0 }  ][line width=0.08]  [draw opacity=0] (8.93,-4.29) -- (0,0) -- (8.93,4.29) -- cycle    ;
\draw [fill={rgb, 255:red, 0; green, 0; blue, 0 }  ,fill opacity=1 ]   (432.28,219.52) -- (409.28,220.09) ;
\draw [shift={(406.28,220.16)}, rotate = 358.59] [fill={rgb, 255:red, 0; green, 0; blue, 0 }  ][line width=0.08]  [draw opacity=0] (8.93,-4.29) -- (0,0) -- (8.93,4.29) -- cycle    ;
\draw [fill={rgb, 255:red, 0; green, 0; blue, 0 }  ,fill opacity=1 ]   (432.91,239.29) -- (407.99,231.87) ;
\draw [shift={(405.12,231.02)}, rotate = 16.58] [fill={rgb, 255:red, 0; green, 0; blue, 0 }  ][line width=0.08]  [draw opacity=0] (8.93,-4.29) -- (0,0) -- (8.93,4.29) -- cycle    ;
\draw [fill={rgb, 255:red, 0; green, 0; blue, 0 }  ,fill opacity=1 ]   (122.5,70.85) -- (94.44,60.91) ;
\draw [shift={(91.62,59.91)}, rotate = 19.51] [fill={rgb, 255:red, 0; green, 0; blue, 0 }  ][line width=0.08]  [draw opacity=0] (8.93,-4.29) -- (0,0) -- (8.93,4.29) -- cycle    ;
\draw [fill={rgb, 255:red, 0; green, 0; blue, 0 }  ,fill opacity=1 ]   (120.32,80.19) -- (92.32,79.68) ;
\draw [shift={(89.32,79.62)}, rotate = 1.06] [fill={rgb, 255:red, 0; green, 0; blue, 0 }  ][line width=0.08]  [draw opacity=0] (8.93,-4.29) -- (0,0) -- (8.93,4.29) -- cycle    ;
\draw [fill={rgb, 255:red, 0; green, 0; blue, 0 }  ,fill opacity=1 ]   (122.94,90.07) -- (93.26,99.02) ;
\draw [shift={(90.39,99.89)}, rotate = 343.22] [fill={rgb, 255:red, 0; green, 0; blue, 0 }  ][line width=0.08]  [draw opacity=0] (8.93,-4.29) -- (0,0) -- (8.93,4.29) -- cycle    ;
\draw [fill={rgb, 255:red, 0; green, 0; blue, 0 }  ,fill opacity=1 ]   (296.36,99.48) -- (296.36,126.08) ;
\draw [shift={(296.36,129.08)}, rotate = 270] [fill={rgb, 255:red, 0; green, 0; blue, 0 }  ][line width=0.08]  [draw opacity=0] (10.72,-5.15) -- (0,0) -- (10.72,5.15) -- (7.12,0) -- cycle    ;
\draw [fill={rgb, 255:red, 0; green, 0; blue, 0 }  ,fill opacity=1 ]   (304.14,131.97) -- (304.14,105.26) ;
\draw [shift={(304.14,102.26)}, rotate = 90] [fill={rgb, 255:red, 0; green, 0; blue, 0 }  ][line width=0.08]  [draw opacity=0] (10.72,-5.15) -- (0,0) -- (10.72,5.15) -- (7.12,0) -- cycle    ;

\draw (130,73) node [anchor=north west][inner sep=0.75pt] [font=\large]    [align=left] {$\displaystyle v_{a_{1}}$};
\draw (210,73) node [anchor=north west][inner sep=0.75pt] [font=\large]    [align=left] {$\displaystyle v_{a_{2}}$};
\draw (290,73) node [anchor=north west][inner sep=0.75pt] [font=\large]     [align=left] {$\displaystyle v_{a_{3}}$};
\draw (370,73) node [anchor=north west][inner sep=0.75pt] [font=\large]    [align=left] {$\displaystyle v_{a_{4}}$};
\draw (130,143) node [anchor=north west][inner sep=0.75pt] [font=\large]    [align=left] {$\displaystyle v_{b_{1}}$};
\draw (210,143) node [anchor=north west][inner sep=0.75pt] [font=\large]    [align=left] {$\displaystyle v_{b_{2}}$};
\draw (290,143) node [anchor=north west][inner sep=0.75pt] [font=\large]    [align=left] {$\displaystyle v_{b_{3}}$};
\draw (370,143) node [anchor=north west][inner sep=0.75pt] [font=\large]    [align=left] {$\displaystyle v_{b_{4}}$};
\draw (130,213) node [anchor=north west][inner sep=0.75pt] [font=\large]    [align=left] {$\displaystyle v_{c_{1}}$};
\draw (210,213) node [anchor=north west][inner sep=0.75pt] [font=\large]    [align=left] {$\displaystyle v_{c_{2}}$};
\draw (290,213) node [anchor=north west][inner sep=0.75pt] [font=\large]    [align=left] {$\displaystyle v_{c_{3}}$};
\draw (370,213) node [anchor=north west][inner sep=0.75pt] [font=\large]  [align=left] {$\displaystyle v_{c_{4}}$};
\draw (286,106.16) node [anchor=north west][inner sep=0.75pt]  [font=\scriptsize] [align=left] {$\displaystyle 1$};

    \end{tikzpicture}
    }
    \caption{ATSP transformation of $v$ for $N=4$}
\end{figure}
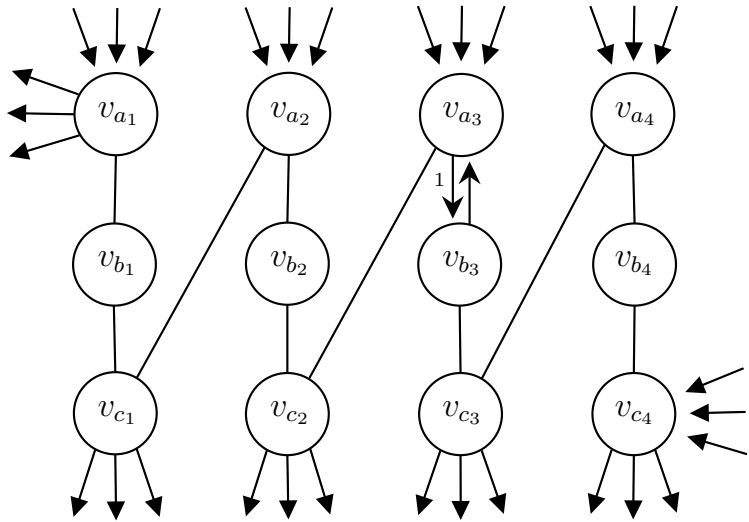

If the construction is entered through an incoming directed edge to $v_{c_N}$, every other vertex in the construction must be visited before it is exited (otherwise there will exist some $v_{b_i}$ that is either never visited or becomes a dead end). Additionally, this sequence of consecutive visits $(v_{c_N}, v_{b_N}, v_{a_N}, v_{c_{N-1}},..., v_{b_1}, v_{a_1})$ does not add to the weight of the tour since this path travels ``up" along $v_{c_i}v_{b_i}$ or $v_{b_i}v_{a_i}$. Thus, entering the construction from $v_{c_N}$ corresponds to visiting $v$ exactly once. 

Alternatively, if the construction is visited through $v_{a_i}$, the path must continue vertically downward along $(v_{b_i}, v_{c_i})$. At this point, the path may either leave the construction through an outgoing directed edge or move right along the edge $v_{c_i}v_{a_{i+1}}$. The former action represents a single instance of $v$ in $P$, and the latter action allows every vertex in the construction to be visited even if $v$ appears less than $N$ times in $P$. Every downward directed edge must be used to visit every vertex in the construction, so the contributed weight is $1$. Thus, this scenario corresponds to $v$ being unbounded as the construction can be visited up to $N$ times for every vertex $v_{a_i}$ for $1\le i\le N$.

It follows that the weight of the optimal ATSP tour is $m$. Given the tour, the m-Unbounded Hamiltonian Cycle may be reconstructed by reducing consecutive instances of the same vertex to a single vertex. The transformation has a total of $N^2$ vertices and $2E(N+1)^2+N^2(3N-1)$ edges.
\section{Unbounded Heuristic}
\subsection{Methods}
\subsubsection{Rotations and Cycle Extensions}
Pósa \cite{Posa-Rotation} describes a method called a \textit{rotation} that can be used to reroute a path from a dead end. Figure 2 illustrates this process: the end vertex of the path is shifted from $6$ to $3$ by traveling along the edge from $2$ to $6$.

\newpage
\begin{figure}[htbp] \label{fig2}
    \centering

    \tikzset{every picture/.style={line width=0.75pt}} 
        \resizebox{5cm}{!}{
    \begin{tikzpicture}[x=0.75pt,y=0.75pt,yscale=-1,xscale=1]
    
    \draw   (251.54,60.33) .. controls (251.54,55.14) and (255.74,50.94) .. (260.93,50.94) .. controls (266.11,50.94) and (270.31,55.14) .. (270.31,60.33) .. controls (270.31,65.51) and (266.11,69.71) .. (260.93,69.71) .. controls (255.74,69.71) and (251.54,65.51) .. (251.54,60.33) -- cycle ;
    \draw   (210.97,60.61) .. controls (210.97,55.43) and (215.17,51.23) .. (220.36,51.23) .. controls (225.54,51.23) and (229.74,55.43) .. (229.74,60.61) .. controls (229.74,65.8) and (225.54,70) .. (220.36,70) .. controls (215.17,70) and (210.97,65.8) .. (210.97,60.61) -- cycle ;
    \draw   (170.97,60.33) .. controls (170.97,55.14) and (175.17,50.94) .. (180.36,50.94) .. controls (185.54,50.94) and (189.74,55.14) .. (189.74,60.33) .. controls (189.74,65.51) and (185.54,69.71) .. (180.36,69.71) .. controls (175.17,69.71) and (170.97,65.51) .. (170.97,60.33) -- cycle ;
    \draw   (291.54,60.61) .. controls (291.54,55.43) and (295.74,51.23) .. (300.93,51.23) .. controls (306.11,51.23) and (310.31,55.43) .. (310.31,60.61) .. controls (310.31,65.8) and (306.11,70) .. (300.93,70) .. controls (295.74,70) and (291.54,65.8) .. (291.54,60.61) -- cycle ;
    \draw   (331.83,60.61) .. controls (331.83,55.43) and (336.03,51.23) .. (341.21,51.23) .. controls (346.4,51.23) and (350.6,55.43) .. (350.6,60.61) .. controls (350.6,65.8) and (346.4,70) .. (341.21,70) .. controls (336.03,70) and (331.83,65.8) .. (331.83,60.61) -- cycle ;
    \draw   (371.26,60.61) .. controls (371.26,55.43) and (375.46,51.23) .. (380.64,51.23) .. controls (385.83,51.23) and (390.03,55.43) .. (390.03,60.61) .. controls (390.03,65.8) and (385.83,70) .. (380.64,70) .. controls (375.46,70) and (371.26,65.8) .. (371.26,60.61) -- cycle ;
    \draw    (189.74,60.33) -- (210.97,60.61) ;
    \draw    (229.74,60.61) -- (250.97,60.9) ;
    \draw    (270.31,60.33) -- (291.54,60.61) ;
    \draw    (310.6,60.33) -- (331.83,60.61) ;
    \draw    (350.03,60.33) -- (371.26,60.61) ;
    \draw [dashed, color={rgb, 255:red, 0; green, 0; blue, 0 }  ,draw opacity=1 ][line width=0.75]    (220.36,51.23) .. controls (220.57,10.86) and (380.29,11.14) .. (380.64,51.23) ;
    \draw   (279.14,109.19) -- (283.25,109.19) -- (283.25,94.79) -- (291.46,94.79) -- (291.46,109.19) -- (295.57,109.19) -- (287.36,118.79) -- cycle ;
    \draw   (252.11,180.04) .. controls (252.11,174.86) and (256.32,170.66) .. (261.5,170.66) .. controls (266.68,170.66) and (270.89,174.86) .. (270.89,180.04) .. controls (270.89,185.23) and (266.68,189.43) .. (261.5,189.43) .. controls (256.32,189.43) and (252.11,185.23) .. (252.11,180.04) -- cycle ;
    \draw   (211.54,180.33) .. controls (211.54,175.14) and (215.74,170.94) .. (220.93,170.94) .. controls (226.11,170.94) and (230.31,175.14) .. (230.31,180.33) .. controls (230.31,185.51) and (226.11,189.71) .. (220.93,189.71) .. controls (215.74,189.71) and (211.54,185.51) .. (211.54,180.33) -- cycle ;
    \draw   (171.54,180.04) .. controls (171.54,174.86) and (175.74,170.66) .. (180.93,170.66) .. controls (186.11,170.66) and (190.31,174.86) .. (190.31,180.04) .. controls (190.31,185.23) and (186.11,189.43) .. (180.93,189.43) .. controls (175.74,189.43) and (171.54,185.23) .. (171.54,180.04) -- cycle ;
    \draw   (292.11,180.33) .. controls (292.11,175.14) and (296.32,170.94) .. (301.5,170.94) .. controls (306.68,170.94) and (310.89,175.14) .. (310.89,180.33) .. controls (310.89,185.51) and (306.68,189.71) .. (301.5,189.71) .. controls (296.32,189.71) and (292.11,185.51) .. (292.11,180.33) -- cycle ;
    \draw   (332.4,180.33) .. controls (332.4,175.14) and (336.6,170.94) .. (341.79,170.94) .. controls (346.97,170.94) and (351.17,175.14) .. (351.17,180.33) .. controls (351.17,185.51) and (346.97,189.71) .. (341.79,189.71) .. controls (336.6,189.71) and (332.4,185.51) .. (332.4,180.33) -- cycle ;
    \draw   (371.83,180.33) .. controls (371.83,175.14) and (376.03,170.94) .. (381.21,170.94) .. controls (386.4,170.94) and (390.6,175.14) .. (390.6,180.33) .. controls (390.6,185.51) and (386.4,189.71) .. (381.21,189.71) .. controls (376.03,189.71) and (371.83,185.51) .. (371.83,180.33) -- cycle ;
    \draw    (190.31,180.04) -- (211.54,180.33) ;
    \draw [dashed]   (230.31,180.33) -- (251.54,180.61) ;
    \draw    (270.89,180.04) -- (292.11,180.33) ;
    \draw    (311.17,180.04) -- (332.4,180.33) ;
    \draw    (350.6,180.04) -- (371.83,180.33) ;
    \draw [color={rgb, 255:red, 0; green, 0; blue, 0 }  ,draw opacity=1 ][line width=0.75]    (220.93,170.94) .. controls (221.14,130.57) and (380.86,130.86) .. (381.21,170.94) ;
    
    \draw (176,55) node [anchor=north west][inner sep=0.75pt]   [align=left] {{\footnotesize 1}};
    \draw (216,55) node [anchor=north west][inner sep=0.75pt]   [align=left] {{\footnotesize 2}};
    \draw (256.5,55) node [anchor=north west][inner sep=0.75pt]   [align=left] {{\footnotesize 3}};
    \draw (296,55) node [anchor=north west][inner sep=0.75pt]   [align=left] {{\footnotesize 4}};
    \draw (337,55) node [anchor=north west][inner sep=0.75pt]   [align=left] {{\footnotesize 5}};
    \draw (376,55) node [anchor=north west][inner sep=0.75pt]   [align=left] {{\footnotesize 6}};
    \draw (176.5,175) node [anchor=north west][inner sep=0.75pt]   [align=left] {{\footnotesize 1}};
    \draw (216.5,175) node [anchor=north west][inner sep=0.75pt]   [align=left] {{\footnotesize 2}};
    \draw (256.5,175) node [anchor=north west][inner sep=0.75pt]   [align=left] {{\footnotesize 3}};
    \draw (296.5,175) node [anchor=north west][inner sep=0.75pt]   [align=left] {{\footnotesize 4}};
    \draw (336.5,175) node [anchor=north west][inner sep=0.75pt]   [align=left] {{\footnotesize 5}};
    \draw (376.5,175) node [anchor=north west][inner sep=0.75pt]   [align=left] {{\footnotesize 6}};
    \end{tikzpicture}
    }
    \caption{Pósa  Rotation}
\end{figure}
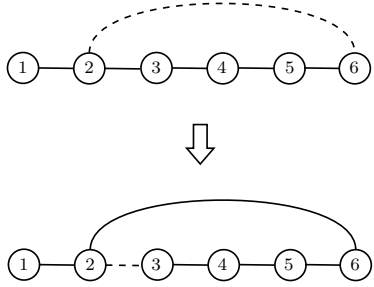

More formally, if there is a path $P$ = $(v_0, v_1,..., v_k)$, $k < N$ in a graph $G$ and there exists a neighbour $v_i$ of $v_k$ in $P$, then $(v_0, v_1,..., v_i, v_k, v_{k-1},..., v_{i+1})$ is a valid path in $G$. A similar idea may be applied to $v_0$. Note that rotations also function when unbounded vertices exist in $P$. A rotation does not affect the number of times a vertex appears, so the number of unbounded vertices remains constant. Furthermore, a rotation will not create consecutive visits of the same vertex since $v_i \ne v_k$ because there are no self-loops. Rotations are the basis of a great number of HCP heuristics due to their ability to extend to a vertex with unvisited neighbours.

Another extension technique is the \textit{cycle extension}. Let $v_0$ and $v_k$ be adjacent dead ends. There exists a vertex $x$ adjacent to a vertex $v_i$ where $0<i<k$, otherwise the graph would not be connected. It follows that the path $(v_{i-1},..., v_0, v_k,..., v_{i+1}, v_i, x)$ with size $|P|+1$ can be created. The justification for cycle extensions with unbounded vertices is very similar to that of rotations.

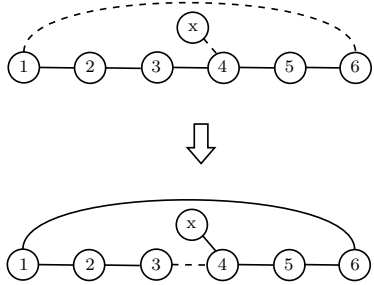
\begin{figure}[htbp]
    \centering
    \tikzset{every picture/.style={line width=0.75pt}} 
        \resizebox{5cm}{!}{
    \begin{tikzpicture}[x=0.75pt,y=0.75pt,yscale=-1,xscale=1]
    
    \draw   (251.54,60.33) .. controls (251.54,55.14) and (255.74,50.94) .. (260.93,50.94) .. controls (266.11,50.94) and (270.31,55.14) .. (270.31,60.33) .. controls (270.31,65.51) and (266.11,69.71) .. (260.93,69.71) .. controls (255.74,69.71) and (251.54,65.51) .. (251.54,60.33) -- cycle ;
    \draw   (210.97,60.61) .. controls (210.97,55.43) and (215.17,51.23) .. (220.36,51.23) .. controls (225.54,51.23) and (229.74,55.43) .. (229.74,60.61) .. controls (229.74,65.8) and (225.54,70) .. (220.36,70) .. controls (215.17,70) and (210.97,65.8) .. (210.97,60.61) -- cycle ;
    \draw   (170.97,60.33) .. controls (170.97,55.14) and (175.17,50.94) .. (180.36,50.94) .. controls (185.54,50.94) and (189.74,55.14) .. (189.74,60.33) .. controls (189.74,65.51) and (185.54,69.71) .. (180.36,69.71) .. controls (175.17,69.71) and (170.97,65.51) .. (170.97,60.33) -- cycle ;
    \draw   (291.54,60.61) .. controls (291.54,55.43) and (295.74,51.23) .. (300.93,51.23) .. controls (306.11,51.23) and (310.31,55.43) .. (310.31,60.61) .. controls (310.31,65.8) and (306.11,70) .. (300.93,70) .. controls (295.74,70) and (291.54,65.8) .. (291.54,60.61) -- cycle ;
    \draw   (331.83,60.61) .. controls (331.83,55.43) and (336.03,51.23) .. (341.21,51.23) .. controls (346.4,51.23) and (350.6,55.43) .. (350.6,60.61) .. controls (350.6,65.8) and (346.4,70) .. (341.21,70) .. controls (336.03,70) and (331.83,65.8) .. (331.83,60.61) -- cycle ;
    \draw   (371.26,60.61) .. controls (371.26,55.43) and (375.46,51.23) .. (380.64,51.23) .. controls (385.83,51.23) and (390.03,55.43) .. (390.03,60.61) .. controls (390.03,65.8) and (385.83,70) .. (380.64,70) .. controls (375.46,70) and (371.26,65.8) .. (371.26,60.61) -- cycle ;
    \draw    (189.74,60.33) -- (210.97,60.61) ;
    \draw    (229.74,60.61) -- (250.97,60.9) ;
    \draw    (270.31,60.33) -- (291.54,60.61) ;
    \draw    (310.6,60.33) -- (331.83,60.61) ;
    \draw    (350.03,60.33) -- (371.26,60.61) ;
    \draw [dashed, color={rgb, 255:red, 0; green, 0; blue, 0 }  ,draw opacity=1 ][line width=0.75]    (180.36,50.94) .. controls (180.57,10.57) and (380.29,11.14) .. (380.64,51.23) ;
    \draw   (279.14,109.19) -- (283.25,109.19) -- (283.25,94.79) -- (291.46,94.79) -- (291.46,109.19) -- (295.57,109.19) -- (287.36,118.79) -- cycle ;
    \draw   (272.97,36.11) .. controls (272.97,30.93) and (277.17,26.73) .. (282.36,26.73) .. controls (287.54,26.73) and (291.74,30.93) .. (291.74,36.11) .. controls (291.74,41.3) and (287.54,45.5) .. (282.36,45.5) .. controls (277.17,45.5) and (272.97,41.3) .. (272.97,36.11) -- cycle ;
    \draw [dashed]   (288.71,42.57) -- (297,52.29) ;
    \draw   (250.97,180.33) .. controls (250.97,175.14) and (255.17,170.94) .. (260.36,170.94) .. controls (265.54,170.94) and (269.74,175.14) .. (269.74,180.33) .. controls (269.74,185.51) and (265.54,189.71) .. (260.36,189.71) .. controls (255.17,189.71) and (250.97,185.51) .. (250.97,180.33) -- cycle ;
    \draw   (210.4,180.61) .. controls (210.4,175.43) and (214.6,171.23) .. (219.79,171.23) .. controls (224.97,171.23) and (229.17,175.43) .. (229.17,180.61) .. controls (229.17,185.8) and (224.97,190) .. (219.79,190) .. controls (214.6,190) and (210.4,185.8) .. (210.4,180.61) -- cycle ;
    \draw   (170.4,180.33) .. controls (170.4,175.14) and (174.6,170.94) .. (179.79,170.94) .. controls (184.97,170.94) and (189.17,175.14) .. (189.17,180.33) .. controls (189.17,185.51) and (184.97,189.71) .. (179.79,189.71) .. controls (174.6,189.71) and (170.4,185.51) .. (170.4,180.33) -- cycle ;
    \draw   (290.97,180.61) .. controls (290.97,175.43) and (295.17,171.23) .. (300.36,171.23) .. controls (305.54,171.23) and (309.74,175.43) .. (309.74,180.61) .. controls (309.74,185.8) and (305.54,190) .. (300.36,190) .. controls (295.17,190) and (290.97,185.8) .. (290.97,180.61) -- cycle ;
    \draw   (331.26,180.61) .. controls (331.26,175.43) and (335.46,171.23) .. (340.64,171.23) .. controls (345.83,171.23) and (350.03,175.43) .. (350.03,180.61) .. controls (350.03,185.8) and (345.83,190) .. (340.64,190) .. controls (335.46,190) and (331.26,185.8) .. (331.26,180.61) -- cycle ;
    \draw   (370.69,180.61) .. controls (370.69,175.43) and (374.89,171.23) .. (380.07,171.23) .. controls (385.26,171.23) and (389.46,175.43) .. (389.46,180.61) .. controls (389.46,185.8) and (385.26,190) .. (380.07,190) .. controls (374.89,190) and (370.69,185.8) .. (370.69,180.61) -- cycle ;
    \draw    (189.17,180.33) -- (210.4,180.61) ;
    \draw    (229.17,180.61) -- (250.4,180.9) ;
    \draw [dashed]   (269.74,180.33) -- (290.97,180.61) ;
    \draw    (310.03,180.33) -- (331.26,180.61) ;
    \draw    (349.46,180.33) -- (370.69,180.61) ;
    \draw [color={rgb, 255:red, 0; green, 0; blue, 0 }  ,draw opacity=1 ][line width=0.75]    (179.79,170.94) .. controls (180,130.57) and (379.71,131.14) .. (380.07,171.23) ;
    \draw   (272.4,156.11) .. controls (272.4,150.93) and (276.6,146.73) .. (281.79,146.73) .. controls (286.97,146.73) and (291.17,150.93) .. (291.17,156.11) .. controls (291.17,161.3) and (286.97,165.5) .. (281.79,165.5) .. controls (276.6,165.5) and (272.4,161.3) .. (272.4,156.11) -- cycle ;
    \draw    (288.14,162.57) -- (296.43,172.29) ;
    
    \draw (176,55) node [anchor=north west][inner sep=0.75pt]   [align=left] {{\footnotesize 1}};
    \draw (216,55) node [anchor=north west][inner sep=0.75pt]   [align=left] {{\footnotesize 2}};
    \draw (256,55) node [anchor=north west][inner sep=0.75pt]   [align=left] {{\footnotesize 3}};
    \draw (296,55) node [anchor=north west][inner sep=0.75pt]   [align=left] {{\footnotesize 4}};
    \draw (336.5,55) node [anchor=north west][inner sep=0.75pt]   [align=left] {{\footnotesize 5}};
    \draw (376,55) node [anchor=north west][inner sep=0.75pt]   [align=left] {{\footnotesize 6}};
    \draw (278,32) node [anchor=north west][inner sep=0.75pt]  [font=\footnotesize] [align=left] {{\footnotesize x}};
    \draw (176,175) node [anchor=north west][inner sep=0.75pt]   [align=left] {{\footnotesize 1}};
    \draw (216,175) node [anchor=north west][inner sep=0.75pt]   [align=left] {{\footnotesize 2}};
    \draw (256,175) node [anchor=north west][inner sep=0.75pt]   [align=left] {{\footnotesize 3}};
    \draw (296,175) node [anchor=north west][inner sep=0.75pt]   [align=left] {{\footnotesize 4}};
    \draw (336,175) node [anchor=north west][inner sep=0.75pt]   [align=left] {{\footnotesize 5}};
    \draw (376,175) node [anchor=north west][inner sep=0.75pt]   [align=left] {{\footnotesize 6}};
    \draw (277,152) node [anchor=north west][inner sep=0.75pt]  [font=\footnotesize] [align=left] {{\footnotesize x}};

    \end{tikzpicture}
    }
    \caption{Cycle Extension}
    \label{fig3}
\end{figure}

A combination of rotations and subsequently checking for cycle extensions has been shown to be effective in finding Hamiltonian Paths and Cycles in undirected graphs \cite{HAM} \cite{Sparse-HAM}. 

\subsubsection{Dijkstra's Algorithm}
As mentioned in Section \ref{BFS}, Dijkstra's is a greedy, single-source shortest path algorithm utilized in weighted graphs. Below is a description of its implementation.

Let $d[v]$ represent the current distance from the source $s$ to a vertex $v$ with all its indices except for $s$ (which is set to $0$) initialized to an arbitrarily large constant. Additionally, let $Q$ contain the current set of vertices that the BFS is considering to travel to with $s$ as the only vertex in $Q$ initially. At every instance of the BFS, a vertex $u$ with $d[u] = \text{min}_{v\in Q}(d[v])$ is selected and deleted from $Q$.  Denote $w_{ab}$ as the edge weight between vertices $a$ and $b$. For all neighbours $k$ of $u$, if $d[u] + w_{uk} < d[k]$, then $k$ is added to $Q$ and $d[k]$ is updated. The process terminates once $Q$ is empty.

The implementation greatly resembles the BFS, with the primary differences being the use of a Fibonacci heap in $Q$ to find the current minimum distance and the inclusion of edge weights other than 1. Once a vertex $v$ is popped from $Q$, $d[v]$ is finalized and it is unnecessary to consider another iteration at $v$, hence the inclusion of a visited array. The use of a Fibonacci heap or a similar data structure to maintain a non-decreasing sequence of distances in $Q$ results in a time complexity of $O(E\log(N))$.

\subsubsection{0-1 BFS}
If the edge weights in a graph are limited to $0$ and $1$, a more efficient variation of Dijkstra's called the ``0-1 BFS" can be implemented.

Consider the set of variables used in the previous section. Let $Q=(v_0, v_1,..., v_m)$ such that $d[v_i] \le d[v_{i+1}]$, $0\le i<m$ and $d[v_m]-d[v_0]\le 1$. Since Q is sorted in non-decreasing order, it is optimal for $u$ to be $v_0$. Recall that for an adjacent vertex $k$, $d[k]$ may be updated to $d[u]+w_{uk}$ before being added to $Q$. If $w_{uk}=0$, adding $k$ to the front of $Q$ maintains a non-decreasing sequence by ``replacing" $v$ with a vertex having an identical distance. Otherwise, $w_{uk}=1$ and $d[k]=d[u]+1=d[v_0]+1 \ge d[v_m]$. Thus, adding $k$ to the back of $Q$ maintains a non-decreasing sequence.

$Q$ begins with a single vertex $s$, which is trivially non-decreasing and having $d[s]-d[s]=0\le 1$. As a result, it is always possible to maintain a sorted $Q$ through a series of constant-time operations, eliminating the need to use a Fibonacci Heap that requires a logarithmic factor. A visited array is unnecessary since a vertex can only appear in $Q$ twice. Thus, the time complexity is reduced to $O(E)$.

\begin{algorithm}[!htbp]
\DontPrintSemicolon
\SetKwInput{KwInput}{Input} 

\KwInput{Adjacency list $adj[N][]$}
\BlankLine

\SetKwFunction{FMain}{Main}
\SetKwFunction{fDijk}{0-1BFS}
\SetKwProg{Fn}{void}{:}{}

\Fn{\fDijk{$s$}}{

    \While{Q is not empty}{
        u $\leftarrow$ Q.front()\;
        Q.pop()\;
        \BlankLine
        
        \For{k in adj[v]}{
            \uIf{$d[u] + w_{uk} < d[k]$}{
                d[k] $\leftarrow$ $d[u] + w_{uk}$\;
                \lIf{$w_{uk}$ is 0}{Q.pushFront(k)}
                \lElse{Q.pushBack(k)}
            }
        }
    }
}
\caption{0-1 BFS}
\end{algorithm}
It is worth mentioning that with slight modifications, 0-1 BFS can be generalized to graphs with edge weights bounded by $C$, where the time complexity is $O(EC)$. This is known as Dial's Algorithm, and it is unnecessary for the scope of the heuristic.

\subsubsection{Cut Vertices} \label{section:cutVertices}
As shown in theorem \ref{thm1}, every cut vertex must be unbounded for a k-Unbounded Hamiltonian cycle to exist. The number of cut vertices serves as a rough estimate of the accuracy of the heuristic in a Section \ref{section:randomGraphs} due to its ability to serve as a lower bound for the number of required unbounded vertices. It is important to note that making all cut vertices unbounded does not necessarily guarantee the existence of a k-Unbounded Hamiltonian Cycle, so the number of cut vertices is not a tight lower bound. 

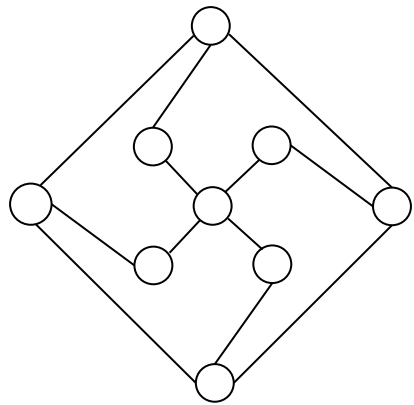
\begin{figure}[htbp] \label{figNoCut}
    \centering
        \tikzset{every picture/.style={line width=0.75pt}} 

        \begin{tikzpicture}[x=0.75pt,y=0.75pt,yscale=-1,xscale=1]
        
        \draw   (328.37,86.18) .. controls (328.37,80.95) and (332.61,76.71) .. (337.84,76.71) .. controls (343.06,76.71) and (347.3,80.95) .. (347.3,86.18) .. controls (347.3,91.41) and (343.06,95.64) .. (337.84,95.64) .. controls (332.61,95.64) and (328.37,91.41) .. (328.37,86.18) -- cycle ;
        \draw   (328.65,145.97) .. controls (328.65,140.74) and (332.89,136.51) .. (338.11,136.51) .. controls (343.34,136.51) and (347.58,140.74) .. (347.58,145.97) .. controls (347.58,151.2) and (343.34,155.44) .. (338.11,155.44) .. controls (332.89,155.44) and (328.65,151.2) .. (328.65,145.97) -- cycle ;
        \draw   (387.97,145.36) .. controls (387.97,140.14) and (392.21,135.9) .. (397.44,135.9) .. controls (402.66,135.9) and (406.9,140.14) .. (406.9,145.36) .. controls (406.9,150.59) and (402.66,154.83) .. (397.44,154.83) .. controls (392.21,154.83) and (387.97,150.59) .. (387.97,145.36) -- cycle ;
        \draw   (358.2,116.05) .. controls (358.2,110.82) and (362.44,106.59) .. (367.66,106.59) .. controls (372.89,106.59) and (377.13,110.82) .. (377.13,116.05) .. controls (377.13,121.28) and (372.89,125.51) .. (367.66,125.51) .. controls (362.44,125.51) and (358.2,121.28) .. (358.2,116.05) -- cycle ;
        \draw   (357.29,25.48) .. controls (357.29,20.25) and (361.52,16.01) .. (366.75,16.01) .. controls (371.98,16.01) and (376.21,20.25) .. (376.21,25.48) .. controls (376.21,30.71) and (371.98,34.94) .. (366.75,34.94) .. controls (361.52,34.94) and (357.29,30.71) .. (357.29,25.48) -- cycle ;
        \draw   (266.6,115.16) .. controls (266.6,109.44) and (271.24,104.8) .. (276.96,104.8) .. controls (282.68,104.8) and (287.31,109.44) .. (287.31,115.16) .. controls (287.31,120.88) and (282.68,125.51) .. (276.96,125.51) .. controls (271.24,125.51) and (266.6,120.88) .. (266.6,115.16) -- cycle ;
        \draw   (387.63,85.54) .. controls (387.63,80.31) and (391.87,76.07) .. (397.09,76.07) .. controls (402.32,76.07) and (406.56,80.31) .. (406.56,85.54) .. controls (406.56,90.76) and (402.32,95) .. (397.09,95) .. controls (391.87,95) and (387.63,90.76) .. (387.63,85.54) -- cycle ;
        \draw [line width=0.75]    (281.8,105.6) -- (358.2,30.4) ;
        \draw    (391,92.8) -- (374.2,108.8) ;
        \draw    (359.29,205.08) -- (279.4,125.2) ;
        \draw    (346.2,139.6) -- (361.4,123.2) ;
        \draw    (360.2,110.4) -- (344.2,93.2) ;
        \draw    (337.84,76.71) -- (366.75,34.94) ;
        \draw    (375.8,29.6) -- (457.15,106.81) ;
        \draw    (457.15,125.74) -- (378.21,205.08) ;
        \draw    (375.4,122.4) -- (392.6,138) ;
        \draw   (359.29,205.08) .. controls (359.29,199.85) and (363.52,195.61) .. (368.75,195.61) .. controls (373.98,195.61) and (378.21,199.85) .. (378.21,205.08) .. controls (378.21,210.31) and (373.98,214.54) .. (368.75,214.54) .. controls (363.52,214.54) and (359.29,210.31) .. (359.29,205.08) -- cycle ;
        \draw   (447.69,116.28) .. controls (447.69,111.05) and (451.92,106.81) .. (457.15,106.81) .. controls (462.38,106.81) and (466.61,111.05) .. (466.61,116.28) .. controls (466.61,121.51) and (462.38,125.74) .. (457.15,125.74) .. controls (451.92,125.74) and (447.69,121.51) .. (447.69,116.28) -- cycle ;
        \draw    (406.56,85.54) -- (447.69,116.28) ;
        \draw    (397.44,154.83) -- (368.75,195.61) ;
        \draw    (328.65,145.97) -- (287.31,115.16) ;
        \end{tikzpicture}
    \caption{Graph without cut vertices but no Hamiltonian Path or Cycle}
\end{figure}

\subsection{The Heuristic}
The Unbounded Heuristic can be summarized by the following steps:

\begin{enumerate}
    \item Mark every cut vertex as unbounded 
    \item Begin at a start vertex with the highest degree
    \item Select neighbouring vertices to add to the path until both end vertices are dead ends
    \item If the path is not a k-Unbounded Hamiltonian Path, reroute an end vertex to a vertex that is not a dead end
    \item Repeat steps 3 and 4 until the path is a k-Unbounded Hamiltonian Path
    \item Convert the k-Unbounded Hamiltonian Path into a k-Unbounded Hamiltonian Cycle
\end{enumerate}

\subsubsection{Marking Cut Vertices as Unbounded}
Since every cut vertex must be unbounded, it is possible to preemptively mark every cut vertex as unbounded to allow for a greater number of options throughout earlier stages of the algorithm. The most straightforward approach to checking whether a vertex is a cut-vertex is by removing it from the graph and running a BFS/DFS to determine whether the remaining graph is connected.

In the implementation below, curCut is the vertex that is being considered and the check function returns true if there is more than one component after curCut is removed. To identify all cut vertices, the check function is run over all vertices. 

\begin{algorithm}[!htbp]
\DontPrintSemicolon
\SetKwInput{KwInput}{Input} 

\KwInput{Adjacency list $adj[N][]$, $curCut$}
\BlankLine

\SetKwFunction{FDfs}{Dfs}
\SetKwFunction{FCheck}{check}
\SetKwProg{Fn}{void}{:}{}
\SetKwProg{TF}{bool}{:}{}

\Fn{\FDfs{$v$}}{
    vis[v] $\leftarrow$ true\;
    \For{k in adj[v]}{
        \uIf{$k \ne $ curCut and !vis[k]}{Dfs(k)\;} 
    }
}
\BlankLine

\TF{\FCheck{curCut}}{
    inc  $\leftarrow$ 0\;
    \For{$i\gets1$ \KwTo $N$}{
        \uIf{i $\ne$ curCut and !vis[i]}{
            Dfs(i)\;
            inc++\;
            \lIf{inc $>$ 1}{return true}
        }
    }
    return false\;
}

\caption{Cut-Vertex Check}
\end{algorithm}

\subsubsection{Greedy Vertex Selection}
A vertex with maximum degree is initially chosen to increase the chances of extending the path if a dead end is reached by extending from the start vertex- this will also give more options for rotations and unbounded vertices described in the next section. Furthermore, the lowest degree neighbour is chosen at every instance. Intuitively, choosing the neighbour with the least number of edges to unvisited vertices maximizes the number of ``usable" edges, or edges connecting two unvisited vertices.

\newpage
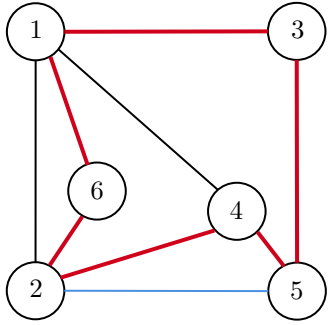
\begin{figure}[htbp] 
    \centering
        \tikzset{every picture/.style={line width=0.75pt}} 

        \begin{tikzpicture}[x=0.75pt,y=0.75pt,yscale=-1,xscale=1]
        
\draw   (280.65,205.19) .. controls (280.65,197.26) and (287.08,190.84) .. (295,190.84) .. controls (302.93,190.84) and (309.36,197.26) .. (309.36,205.19) .. controls (309.36,213.12) and (302.93,219.55) .. (295,219.55) .. controls (287.08,219.55) and (280.65,213.12) .. (280.65,205.19) -- cycle ;
\draw   (280.7,74.91) .. controls (280.7,66.98) and (287.12,60.56) .. (295.05,60.56) .. controls (302.98,60.56) and (309.41,66.98) .. (309.41,74.91) .. controls (309.41,82.84) and (302.98,89.26) .. (295.05,89.26) .. controls (287.12,89.26) and (280.7,82.84) .. (280.7,74.91) -- cycle ;
\draw   (410.95,74.71) .. controls (410.95,66.79) and (417.38,60.36) .. (425.31,60.36) .. controls (433.23,60.36) and (439.66,66.79) .. (439.66,74.71) .. controls (439.66,82.64) and (433.23,89.07) .. (425.31,89.07) .. controls (417.38,89.07) and (410.95,82.64) .. (410.95,74.71) -- cycle ;
\draw   (411.13,205.38) .. controls (411.13,197.45) and (417.56,191.03) .. (425.49,191.03) .. controls (433.42,191.03) and (439.84,197.45) .. (439.84,205.38) .. controls (439.84,213.31) and (433.42,219.73) .. (425.49,219.73) .. controls (417.56,219.73) and (411.13,213.31) .. (411.13,205.38) -- cycle ;
\draw   (381.07,165.23) .. controls (381.07,157.3) and (387.5,150.87) .. (395.43,150.87) .. controls (403.36,150.87) and (409.78,157.3) .. (409.78,165.23) .. controls (409.78,173.16) and (403.36,179.58) .. (395.43,179.58) .. controls (387.5,179.58) and (381.07,173.16) .. (381.07,165.23) -- cycle ;
\draw   (311.32,155.05) .. controls (311.32,147.12) and (317.74,140.69) .. (325.67,140.69) .. controls (333.6,140.69) and (340.02,147.12) .. (340.02,155.05) .. controls (340.02,162.97) and (333.6,169.4) .. (325.67,169.4) .. controls (317.74,169.4) and (311.32,162.97) .. (311.32,155.05) -- cycle ;
\draw    (295.05,89.26) -- (295,190.84) ;
\draw [color={rgb, 255:red, 74; green, 144; blue, 226 }  ,draw opacity=1 ]   (309.36,205.19) -- (411.13,205.38) ;
\draw [color={rgb, 255:red, 208; green, 2; blue, 27 }  ,draw opacity=1 ][line width=1.5]    (309.41,74.91) -- (410.95,74.71) ;
\draw [color={rgb, 255:red, 208; green, 2; blue, 27 }  ,draw opacity=1 ][fill={rgb, 255:red, 208; green, 2; blue, 27 }  ,fill opacity=1 ][line width=1.5]    (425.31,89.07) -- (425.49,191.03) ;
\draw    (306.56,83.89) -- (386.11,154.56) ;
\draw [color={rgb, 255:red, 208; green, 2; blue, 27 }  ,draw opacity=1 ][line width=1.5]    (405.75,175.63) -- (419,192.88) ;
\draw [color={rgb, 255:red, 208; green, 2; blue, 27 }  ,draw opacity=1 ][line width=1.5]    (302.05,192.38) -- (318.3,167.88) ;
\draw [color={rgb, 255:red, 208; green, 2; blue, 27 }  ,draw opacity=1 ][line width=1.5]    (307.5,198.63) -- (384.25,174.88) ;
\draw [color={rgb, 255:red, 208; green, 2; blue, 27 }  ,draw opacity=1 ][line width=1.5]    (302.4,87.5) -- (321,141.9) ;

\draw (290.5,68) node [anchor=north west][inner sep=0.75pt]   [align=left] {1};
\draw (421,68) node [anchor=north west][inner sep=0.75pt]   [align=left] {3};
\draw (420.5,199.5) node [anchor=north west][inner sep=0.75pt]   [align=left] {5};
\draw (290.5,198) node [anchor=north west][inner sep=0.75pt]   [align=left] {2};
\draw (321,149) node [anchor=north west][inner sep=0.75pt]   [align=left] {6};
\draw (390.5,159) node [anchor=north west][inner sep=0.75pt]   [align=left] {4};
        \end{tikzpicture}
    \caption{Low degree selection}
    \label{figLowDegree}
\end{figure}

Consider the graph shown in Figure \ref{figLowDegree}. If the path $P$ is $(1, 3, 5)$, then $P$ can extend to $2$ or $4$. $4$ is adjacent to $\{2\}$ and has degree $1$ while $2$ is adjacent to $\{4, 6\}$ and has degree 2. With the low degree selection process, $P$ would extend to $4$ and proceed to find the Hamiltonian Cycle $(1, 3, 5, 4, 2, 6)$. If instead $P$ was extended to $2$, $P$ would reach a dead end after traveling to $6$ or $4$ due to a lack of usable edges.  

If a dead end is reached, the start vertex is checked: if it is not a dead end, then the path is reversed and the greedy vertex selection continues. 

\subsubsection{Rerouting}
If both end vertices are dead ends and the path is not a k-Unbounded Hamiltonian Path, the path will be rerouted to an unvisited vertex through a series of rotations, cycle extensions, and assignments of unbounded vertices. The objective is to minimize the number of newly assigned unbounded vertices used to reroute the path. 

This section proposes a novel idea to escape a dead end when unbounded vertices are allowed. Consider the subgraph of $G$ composed of the vertices and edges of the path $P = (v_0, v_1,..., v_k)$. A new, weighted graph may be formed such that all edges from the end vertices to other vertices on the path $v_i$, $0<i<k$ have a weight of $1$ if $v_i$ is bounded and a weight of $0$ if $v_i$ is unbounded. A weight corresponds to whether or not moving along an edge contributes a new unbounded vertex. $v_i$ can be prepended to $P$ if $v_i$ is adjacent to $v_0$ and appended to $P$ if $v_i$ is adjacent to $v_k$, effectively serving to replace endpoints. 
\begin{figure}[htbp] \label{figRoute1}
    \centering
        \tikzset{every picture/.style={line width=0.75pt}} 

        \begin{tikzpicture}[x=0.75pt,y=0.75pt,yscale=-1,xscale=1]
        
\draw   (210.53,130.15) .. controls (210.53,124.88) and (214.8,120.61) .. (220.07,120.61) .. controls (225.34,120.61) and (229.61,124.88) .. (229.61,130.15) .. controls (229.61,135.42) and (225.34,139.69) .. (220.07,139.69) .. controls (214.8,139.69) and (210.53,135.42) .. (210.53,130.15) -- cycle ;
\draw   (260.31,130.15) .. controls (260.31,124.88) and (264.58,120.61) .. (269.85,120.61) .. controls (275.12,120.61) and (279.39,124.88) .. (279.39,130.15) .. controls (279.39,135.42) and (275.12,139.69) .. (269.85,139.69) .. controls (264.58,139.69) and (260.31,135.42) .. (260.31,130.15) -- cycle ;
\draw   (310.53,130.15) .. controls (310.53,124.88) and (314.8,120.61) .. (320.07,120.61) .. controls (325.34,120.61) and (329.61,124.88) .. (329.61,130.15) .. controls (329.61,135.42) and (325.34,139.69) .. (320.07,139.69) .. controls (314.8,139.69) and (310.53,135.42) .. (310.53,130.15) -- cycle ;
\draw   (360.53,129.93) .. controls (360.53,124.66) and (364.8,120.39) .. (370.07,120.39) .. controls (375.34,120.39) and (379.61,124.66) .. (379.61,129.93) .. controls (379.61,135.2) and (375.34,139.47) .. (370.07,139.47) .. controls (364.8,139.47) and (360.53,135.2) .. (360.53,129.93) -- cycle ;
\draw  [fill={rgb, 255:red, 248; green, 231; blue, 28 }  ,fill opacity=1 ] (410.22,129.93) .. controls (410.22,124.66) and (414.49,120.39) .. (419.76,120.39) .. controls (425.03,120.39) and (429.31,124.66) .. (429.31,129.93) .. controls (429.31,135.2) and (425.03,139.47) .. (419.76,139.47) .. controls (414.49,139.47) and (410.22,135.2) .. (410.22,129.93) -- cycle ;
\draw   (460.44,130.15) .. controls (460.44,124.88) and (464.72,120.61) .. (469.99,120.61) .. controls (475.26,120.61) and (479.53,124.88) .. (479.53,130.15) .. controls (479.53,135.42) and (475.26,139.69) .. (469.99,139.69) .. controls (464.72,139.69) and (460.44,135.42) .. (460.44,130.15) -- cycle ;
\draw   (510.67,130.15) .. controls (510.67,124.88) and (514.94,120.61) .. (520.21,120.61) .. controls (525.48,120.61) and (529.75,124.88) .. (529.75,130.15) .. controls (529.75,135.42) and (525.48,139.69) .. (520.21,139.69) .. controls (514.94,139.69) and (510.67,135.42) .. (510.67,130.15) -- cycle ;
\draw   (560.89,129.93) .. controls (560.89,124.66) and (565.16,120.39) .. (570.43,120.39) .. controls (575.7,120.39) and (579.97,124.66) .. (579.97,129.93) .. controls (579.97,135.2) and (575.7,139.47) .. (570.43,139.47) .. controls (565.16,139.47) and (560.89,135.2) .. (560.89,129.93) -- cycle ;
\draw    (229.61,130.15) -- (260.31,130.15) ;
\draw    (279.39,130.15) -- (310.53,130.15) ;
\draw    (329.61,130.15) -- (360.53,129.93) ;
\draw    (379.61,129.93) -- (410.22,129.93) ;
\draw    (429.31,129.93) -- (460.44,130.15) ;
\draw    (479.53,130.15) -- (510.67,130.15) ;
\draw    (529.75,130.15) -- (560.89,129.93) ;
\draw [color={rgb, 255:red, 74; green, 144; blue, 226 }  ,draw opacity=1 ]   (220.07,120.61) .. controls (220.2,90.6) and (320.33,90.11) .. (320.07,120.61) ;
\draw [color={rgb, 255:red, 76; green, 42; blue, 12 }  ,draw opacity=1 ]   (269.85,120.61) .. controls (270.27,80.09) and (570.27,80.27) .. (570.43,120.39) ;
\draw [color={rgb, 255:red, 245; green, 166; blue, 35 }  ,draw opacity=1 ]   (220.07,120.61) .. controls (220.33,69.89) and (420.11,69.89) .. (419.76,120.39) ;
\draw (319.2,73) node [anchor=north west][inner sep=0.75pt]   [align=left] {{\scriptsize 0}};
\draw (416.4,80) node [anchor=north west][inner sep=0.75pt]   [align=left] {{\scriptsize 1}};
\draw (269.2,88) node [anchor=north west][inner sep=0.75pt]   [align=left] {{\scriptsize 1}};
\draw (241,120) node [anchor=north west][inner sep=0.75pt]   [align=left] {{\scriptsize 1}};
\draw (542.8,120) node [anchor=north west][inner sep=0.75pt]   [align=left] {{\scriptsize 1}};
\draw (213.5,126) node [anchor=north west][inner sep=0.75pt]  [font=\tiny] [align=left] {$\displaystyle v_{0}$};
\draw (263.5,126) node [anchor=north west][inner sep=0.75pt]  [font=\tiny] [align=left] {$\displaystyle v_{1}$};
\draw (313.5,126) node [anchor=north west][inner sep=0.75pt]  [font=\tiny] [align=left] {$\displaystyle v_{2}$};
\draw (363.5,126) node [anchor=north west][inner sep=0.75pt]  [font=\tiny] [align=left] {$\displaystyle v_{3}$};
\draw (413,126) node [anchor=north west][inner sep=0.75pt]  [font=\tiny] [align=left] {$\displaystyle v_{4}$};
\draw (463,126) node [anchor=north west][inner sep=0.75pt]  [font=\tiny] [align=left] {$\displaystyle v_{5}$};
\draw (513.5,126) node [anchor=north west][inner sep=0.75pt]  [font=\tiny] [align=left] {$\displaystyle v_{6}$};
\draw (564.5,126) node [anchor=north west][inner sep=0.75pt]  [font=\tiny] [align=left] {$\displaystyle v_{7}$};

        \end{tikzpicture}
    \caption{New graph with assigned edge weights. The shaded vertex is unbounded.}
\end{figure}
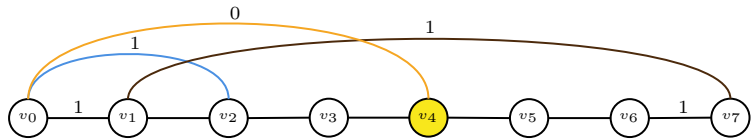

Furthermore, if a vertex can become an endpoint through a single rotation, then an edge with weight $0$ may be added between this vertex and the endpoint of rotation. Recall that if there exists an edge from $v_i$ to $v_k$, $v_{i+1}$ may become an endpoint by reversing the order of visits from $i+1$ to $k$. Similarly, an edge connecting $v_0$ and $v_i$ can be rotated about the start vertex to reroute the start vertex to $v_{i-1}$

\begin{figure}[htbp] \label{figRoute0.5}
    \centering
        \tikzset{every picture/.style={line width=0.75pt}} 

        \begin{tikzpicture}[x=0.75pt,y=0.75pt,yscale=-1,xscale=1]
        
\draw   (210.53,130.15) .. controls (210.53,124.88) and (214.8,120.61) .. (220.07,120.61) .. controls (225.34,120.61) and (229.61,124.88) .. (229.61,130.15) .. controls (229.61,135.42) and (225.34,139.69) .. (220.07,139.69) .. controls (214.8,139.69) and (210.53,135.42) .. (210.53,130.15) -- cycle ;
\draw   (260.31,130.15) .. controls (260.31,124.88) and (264.58,120.61) .. (269.85,120.61) .. controls (275.12,120.61) and (279.39,124.88) .. (279.39,130.15) .. controls (279.39,135.42) and (275.12,139.69) .. (269.85,139.69) .. controls (264.58,139.69) and (260.31,135.42) .. (260.31,130.15) -- cycle ;
\draw   (310.53,130.15) .. controls (310.53,124.88) and (314.8,120.61) .. (320.07,120.61) .. controls (325.34,120.61) and (329.61,124.88) .. (329.61,130.15) .. controls (329.61,135.42) and (325.34,139.69) .. (320.07,139.69) .. controls (314.8,139.69) and (310.53,135.42) .. (310.53,130.15) -- cycle ;
\draw   (360.53,129.93) .. controls (360.53,124.66) and (364.8,120.39) .. (370.07,120.39) .. controls (375.34,120.39) and (379.61,124.66) .. (379.61,129.93) .. controls (379.61,135.2) and (375.34,139.47) .. (370.07,139.47) .. controls (364.8,139.47) and (360.53,135.2) .. (360.53,129.93) -- cycle ;
\draw  [fill={rgb, 255:red, 248; green, 231; blue, 28}  ,fill opacity=1 ] (410.22,129.93) .. controls (410.22,124.66) and (414.49,120.39) .. (419.76,120.39) .. controls (425.03,120.39) and (429.31,124.66) .. (429.31,129.93) .. controls (429.31,135.2) and (425.03,139.47) .. (419.76,139.47) .. controls (414.49,139.47) and (410.22,135.2) .. (410.22,129.93) -- cycle ;
\draw   (460.44,130.15) .. controls (460.44,124.88) and (464.72,120.61) .. (469.99,120.61) .. controls (475.26,120.61) and (479.53,124.88) .. (479.53,130.15) .. controls (479.53,135.42) and (475.26,139.69) .. (469.99,139.69) .. controls (464.72,139.69) and (460.44,135.42) .. (460.44,130.15) -- cycle ;
\draw   (510.67,130.15) .. controls (510.67,124.88) and (514.94,120.61) .. (520.21,120.61) .. controls (525.48,120.61) and (529.75,124.88) .. (529.75,130.15) .. controls (529.75,135.42) and (525.48,139.69) .. (520.21,139.69) .. controls (514.94,139.69) and (510.67,135.42) .. (510.67,130.15) -- cycle ;
\draw   (560.89,129.93) .. controls (560.89,124.66) and (565.16,120.39) .. (570.43,120.39) .. controls (575.7,120.39) and (579.97,124.66) .. (579.97,129.93) .. controls (579.97,135.2) and (575.7,139.47) .. (570.43,139.47) .. controls (565.16,139.47) and (560.89,135.2) .. (560.89,129.93) -- cycle ;
\draw    (229.61,130.15) -- (260.31,130.15) ;
\draw    (279.39,130.15) -- (310.53,130.15) ;
\draw    (329.61,130.15) -- (360.53,129.93) ;
\draw    (379.61,129.93) -- (410.22,129.93) ;
\draw    (429.31,129.93) -- (460.44,130.15) ;
\draw    (479.53,130.15) -- (510.67,130.15) ;
\draw    (529.75,130.15) -- (560.89,129.93) ;
\draw [color={rgb, 255:red, 74; green, 144; blue, 226 }  ,draw opacity=1 ]   (220.07,120.61) .. controls (220.2,90.6) and (320.33,90.11) .. (320.07,120.61) ;
\draw [color={rgb, 255:red, 76; green, 42; blue, 12 }  ,draw opacity=1 ]   (269.85,120.61) .. controls (270.27,80.09) and (570.27,80.27) .. (570.43,120.39) ;
\draw [color={rgb, 255:red, 245; green, 166; blue, 35 }  ,draw opacity=1 ]   (220.07,120.61) .. controls (220.33,69.89) and (420.11,69.89) .. (419.76,120.39) ;
\draw [dashed, color={rgb, 255:red, 74; green, 144; blue, 226 }  ,draw opacity=1 ]   (220.07,139.69) .. controls (220.25,160.13) and (270,160.13) .. (269.85,139.69) ;
\draw [dashed, color={rgb, 255:red, 245; green, 166; blue, 35 }  ,draw opacity=1 ]   (220.07,139.69) .. controls (220.5,180.13) and (370,180.13) .. (370.07,139.47) ;
\draw [dashed, color={rgb, 255:red, 76; green, 42; blue, 12 }  ,draw opacity=1 ]   (320.07,139.69) .. controls (320,170.63) and (570.5,170.63) .. (570.43,139.47) ;
\draw (319.2,73) node [anchor=north west][inner sep=0.75pt]   [align=left] {{\scriptsize 0}};
\draw (416.4,80) node [anchor=north west][inner sep=0.75pt]   [align=left] {{\scriptsize 1}};
\draw (269.2,88) node [anchor=north west][inner sep=0.75pt]   [align=left] {{\scriptsize 1}};
\draw (242,156) node [anchor=north west][inner sep=0.75pt]  [font=\scriptsize] [align=left] {0};
\draw (289.2,171) node [anchor=north west][inner sep=0.75pt]   [align=left] {{\scriptsize 0}};
\draw (441.2,165) node [anchor=north west][inner sep=0.75pt]   [align=left] {{\scriptsize 0}};
\draw (241,120) node [anchor=north west][inner sep=0.75pt]   [align=left] {{\scriptsize 1}};
\draw (542.8,120) node [anchor=north west][inner sep=0.75pt]   [align=left] {{\scriptsize 1}};
\draw (213.5,126) node [anchor=north west][inner sep=0.75pt]  [font=\tiny] [align=left] {$\displaystyle v_{0}$};
\draw (263.5,126) node [anchor=north west][inner sep=0.75pt]  [font=\tiny] [align=left] {$\displaystyle v_{1}$};
\draw (313.5,126) node [anchor=north west][inner sep=0.75pt]  [font=\tiny] [align=left] {$\displaystyle v_{2}$};
\draw (363.5,126) node [anchor=north west][inner sep=0.75pt]  [font=\tiny] [align=left] {$\displaystyle v_{3}$};
\draw (413,126) node [anchor=north west][inner sep=0.75pt]  [font=\tiny] [align=left] {$\displaystyle v_{4}$};
\draw (463,126) node [anchor=north west][inner sep=0.75pt]  [font=\tiny] [align=left] {$\displaystyle v_{5}$};
\draw (513.5,126) node [anchor=north west][inner sep=0.75pt]  [font=\tiny] [align=left] {$\displaystyle v_{6}$};
\draw (564.5,126) node [anchor=north west][inner sep=0.75pt]  [font=\tiny] [align=left] {$\displaystyle v_{7}$};

        \end{tikzpicture}
    \caption{New graph with rotations (dashed lines) added}
\end{figure}
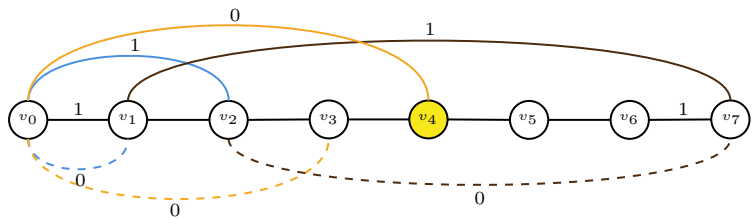

If a vertex can become an endpoint by following a sequence of edges, then the number of new unbounded vertices required to make this vertex an endpoint is the sum of the weights of the edges. This idea motivates running a shortest path algorithm to find a vertex with the lowest distance with degree greater than 0. The algorithm is guaranteed to find such a vertex since the edges connecting $P$, $v_{0}v_{1}, v_{1}v_{2},..., v_{k-1}v_{k}$, ensure that any vertex on $P$ can be reached. This method is not perfect, as once a vertex is visited by the shortest path algorithm, it cannot be visited again. Since rotations dynamically alter the order of the path, a different series of rotations may lead to a different set of edges. However, the majority of the edges are ``fixed" since they don't use rotations and are part of the original graph. This method allows for an effective blend of rotations, traveling to already unbounded vertices, and making bounded vertices unbounded while maintaining an optimal time complexity.

An optimization utilized in the heuristic is preemptively checking for the presence of a cycle at every arrangement. If there exists $v_i$ for $1 \le i \le k-2$ such that edges $v_{i}v_{k}$ and $v_{i+1}v_{0}$ exist, then a cycle can be formed. Alternatively, this can be configured as a rotation followed by a normal cycle extension with adjacent end vertices. Since vertices cannot be revisited in the 0-1 BFS, a rotation that induces a normal cycle extension may not follow through, whereas a preemptive cycle check will be able to detect the cycle. An implementation can be found in Appendix \ref{appendix:cycleCheck}, in which a preemptive cycle check is utilized in every state of the BFS.

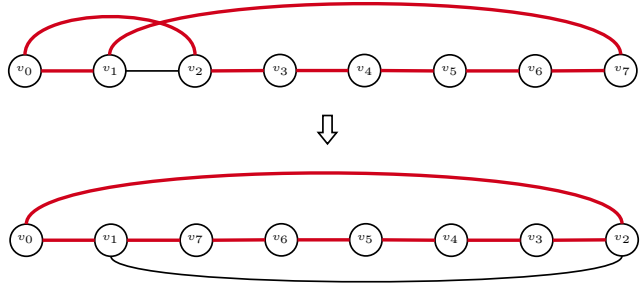
\begin{figure}[htbp] \label{figRoute2}
    \centering
        \tikzset{every picture/.style={line width=0.75pt}} 
\resizebox{8.5cm}{!}{
        \begin{tikzpicture}[x=0.75pt,y=0.75pt,yscale=-1,xscale=1]
\draw   (220.13,70.15) .. controls (220.13,64.88) and (224.4,60.61) .. (229.67,60.61) .. controls (234.94,60.61) and (239.21,64.88) .. (239.21,70.15) .. controls (239.21,75.42) and (234.94,79.69) .. (229.67,79.69) .. controls (224.4,79.69) and (220.13,75.42) .. (220.13,70.15) -- cycle ;
\draw   (269.91,70.15) .. controls (269.91,64.88) and (274.18,60.61) .. (279.45,60.61) .. controls (284.72,60.61) and (288.99,64.88) .. (288.99,70.15) .. controls (288.99,75.42) and (284.72,79.69) .. (279.45,79.69) .. controls (274.18,79.69) and (269.91,75.42) .. (269.91,70.15) -- cycle ;
\draw   (320.13,70.15) .. controls (320.13,64.88) and (324.4,60.61) .. (329.67,60.61) .. controls (334.94,60.61) and (339.21,64.88) .. (339.21,70.15) .. controls (339.21,75.42) and (334.94,79.69) .. (329.67,79.69) .. controls (324.4,79.69) and (320.13,75.42) .. (320.13,70.15) -- cycle ;
\draw   (370.13,69.93) .. controls (370.13,64.66) and (374.4,60.39) .. (379.67,60.39) .. controls (384.94,60.39) and (389.21,64.66) .. (389.21,69.93) .. controls (389.21,75.2) and (384.94,79.47) .. (379.67,79.47) .. controls (374.4,79.47) and (370.13,75.2) .. (370.13,69.93) -- cycle ;
\draw   (419.82,69.93) .. controls (419.82,64.66) and (424.09,60.39) .. (429.36,60.39) .. controls (434.63,60.39) and (438.91,64.66) .. (438.91,69.93) .. controls (438.91,75.2) and (434.63,79.47) .. (429.36,79.47) .. controls (424.09,79.47) and (419.82,75.2) .. (419.82,69.93) -- cycle ;
\draw   (470.04,70.15) .. controls (470.04,64.88) and (474.32,60.61) .. (479.59,60.61) .. controls (484.86,60.61) and (489.13,64.88) .. (489.13,70.15) .. controls (489.13,75.42) and (484.86,79.69) .. (479.59,79.69) .. controls (474.32,79.69) and (470.04,75.42) .. (470.04,70.15) -- cycle ;
\draw   (520.27,70.15) .. controls (520.27,64.88) and (524.54,60.61) .. (529.81,60.61) .. controls (535.08,60.61) and (539.35,64.88) .. (539.35,70.15) .. controls (539.35,75.42) and (535.08,79.69) .. (529.81,79.69) .. controls (524.54,79.69) and (520.27,75.42) .. (520.27,70.15) -- cycle ;
\draw   (570.49,69.93) .. controls (570.49,64.66) and (574.76,60.39) .. (580.03,60.39) .. controls (585.3,60.39) and (589.57,64.66) .. (589.57,69.93) .. controls (589.57,75.2) and (585.3,79.47) .. (580.03,79.47) .. controls (574.76,79.47) and (570.49,75.2) .. (570.49,69.93) -- cycle ;
\draw [color={rgb, 255:red, 208; green, 2; blue, 27 }  ,draw opacity=1 ][line width=1.5]    (239.21,70.15) -- (269.91,70.15) ;
\draw    (288.99,70.15) -- (320.13,70.15) ;
\draw [color={rgb, 255:red, 208; green, 2; blue, 27 }  ,draw opacity=1 ][line width=1.5]    (339.21,70.15) -- (370.13,69.93) ;
\draw [color={rgb, 255:red, 208; green, 2; blue, 27 }  ,draw opacity=1 ][line width=1.5]    (389.21,69.93) -- (419.82,69.93) ;
\draw [color={rgb, 255:red, 208; green, 2; blue, 27 }  ,draw opacity=1 ][line width=1.5]    (438.91,69.93) -- (470.04,70.15) ;
\draw [color={rgb, 255:red, 208; green, 2; blue, 27 }  ,draw opacity=1 ][line width=1.5]    (489.13,70.15) -- (520.27,70.15) ;
\draw [color={rgb, 255:red, 208; green, 2; blue, 27 }  ,draw opacity=1 ][line width=1.5]    (539.35,70.15) -- (570.49,69.93) ;
\draw [color={rgb, 255:red, 208; green, 2; blue, 27 }  ,draw opacity=1 ][line width=1.5]    (229.67,60.61) .. controls (229.8,30.6) and (329.93,30.11) .. (329.67,60.61) ;
\draw [color={rgb, 255:red, 208; green, 2; blue, 27 }  ,draw opacity=1 ][line width=1.5]    (279.45,60.61) .. controls (279.87,20.09) and (579.87,20.27) .. (580.03,60.39) ;
\draw   (220.93,170.55) .. controls (220.93,165.28) and (225.2,161.01) .. (230.47,161.01) .. controls (235.74,161.01) and (240.01,165.28) .. (240.01,170.55) .. controls (240.01,175.82) and (235.74,180.09) .. (230.47,180.09) .. controls (225.2,180.09) and (220.93,175.82) .. (220.93,170.55) -- cycle ;
\draw   (270.71,170.55) .. controls (270.71,165.28) and (274.98,161.01) .. (280.25,161.01) .. controls (285.52,161.01) and (289.79,165.28) .. (289.79,170.55) .. controls (289.79,175.82) and (285.52,180.09) .. (280.25,180.09) .. controls (274.98,180.09) and (270.71,175.82) .. (270.71,170.55) -- cycle ;
\draw   (320.93,170.55) .. controls (320.93,165.28) and (325.2,161.01) .. (330.47,161.01) .. controls (335.74,161.01) and (340.01,165.28) .. (340.01,170.55) .. controls (340.01,175.82) and (335.74,180.09) .. (330.47,180.09) .. controls (325.2,180.09) and (320.93,175.82) .. (320.93,170.55) -- cycle ;
\draw   (370.93,170.33) .. controls (370.93,165.06) and (375.2,160.79) .. (380.47,160.79) .. controls (385.74,160.79) and (390.01,165.06) .. (390.01,170.33) .. controls (390.01,175.6) and (385.74,179.87) .. (380.47,179.87) .. controls (375.2,179.87) and (370.93,175.6) .. (370.93,170.33) -- cycle ;
\draw   (420.62,170.33) .. controls (420.62,165.06) and (424.89,160.79) .. (430.16,160.79) .. controls (435.43,160.79) and (439.71,165.06) .. (439.71,170.33) .. controls (439.71,175.6) and (435.43,179.87) .. (430.16,179.87) .. controls (424.89,179.87) and (420.62,175.6) .. (420.62,170.33) -- cycle ;
\draw   (470.84,170.55) .. controls (470.84,165.28) and (475.12,161.01) .. (480.39,161.01) .. controls (485.66,161.01) and (489.93,165.28) .. (489.93,170.55) .. controls (489.93,175.82) and (485.66,180.09) .. (480.39,180.09) .. controls (475.12,180.09) and (470.84,175.82) .. (470.84,170.55) -- cycle ;
\draw   (521.07,170.55) .. controls (521.07,165.28) and (525.34,161.01) .. (530.61,161.01) .. controls (535.88,161.01) and (540.15,165.28) .. (540.15,170.55) .. controls (540.15,175.82) and (535.88,180.09) .. (530.61,180.09) .. controls (525.34,180.09) and (521.07,175.82) .. (521.07,170.55) -- cycle ;
\draw   (571.29,170.33) .. controls (571.29,165.06) and (575.56,160.79) .. (580.83,160.79) .. controls (586.1,160.79) and (590.37,165.06) .. (590.37,170.33) .. controls (590.37,175.6) and (586.1,179.87) .. (580.83,179.87) .. controls (575.56,179.87) and (571.29,175.6) .. (571.29,170.33) -- cycle ;
\draw [color={rgb, 255:red, 208; green, 2; blue, 27 }  ,draw opacity=1 ][line width=1.5]    (240.01,170.55) -- (270.71,170.55) ;
\draw [color={rgb, 255:red, 208; green, 2; blue, 27 }  ,draw opacity=1 ][line width=1.5]    (289.79,170.55) -- (320.93,170.55) ;
\draw [color={rgb, 255:red, 208; green, 2; blue, 27 }  ,draw opacity=1 ][line width=1.5]    (340.01,170.55) -- (370.93,170.33) ;
\draw [color={rgb, 255:red, 208; green, 2; blue, 27 }  ,draw opacity=1 ][line width=1.5]    (390.01,170.33) -- (420.62,170.33) ;
\draw [color={rgb, 255:red, 208; green, 2; blue, 27 }  ,draw opacity=1 ][line width=1.5]    (439.71,170.33) -- (470.84,170.55) ;
\draw [color={rgb, 255:red, 208; green, 2; blue, 27 }  ,draw opacity=1 ][line width=1.5]    (489.93,170.55) -- (521.07,170.55) ;
\draw [color={rgb, 255:red, 208; green, 2; blue, 27 }  ,draw opacity=1 ][line width=1.5]    (540.15,170.55) -- (571.29,170.33) ;
\draw [color={rgb, 255:red, 0; green, 0; blue, 0 }  ,draw opacity=1 ][line width=0.75]    (280.25,180.09) .. controls (280.14,200.14) and (580.14,200.43) .. (580.83,179.87) ;
\draw [color={rgb, 255:red, 208; green, 2; blue, 27 }  ,draw opacity=1 ][line width=1.5]    (230.47,161.01) .. controls (230.89,120.49) and (580.67,120.67) .. (580.83,160.79) ;
\draw   (402.2,106.68) -- (404.9,106.68) -- (404.9,96.6) -- (410.3,96.6) -- (410.3,106.68) -- (413,106.68) -- (407.6,113.4) -- cycle ;

\draw (223,66) node [anchor=north west][inner sep=0.75pt]  [font=\tiny] [align=left] {$\displaystyle v_{0}$};
\draw (273.5,66) node [anchor=north west][inner sep=0.75pt]  [font=\tiny] [align=left] {$\displaystyle v_{1}$};
\draw (323.5,66) node [anchor=north west][inner sep=0.75pt]  [font=\tiny] [align=left] {$\displaystyle v_{2}$};
\draw (373.5,66) node [anchor=north west][inner sep=0.75pt]  [font=\tiny] [align=left] {$\displaystyle v_{3}$};
\draw (423,66) node [anchor=north west][inner sep=0.75pt]  [font=\tiny] [align=left] {$\displaystyle v_{4}$};
\draw (473,66) node [anchor=north west][inner sep=0.75pt]  [font=\tiny] [align=left] {$\displaystyle v_{5}$};
\draw (523.5,66) node [anchor=north west][inner sep=0.75pt]  [font=\tiny] [align=left] {$\displaystyle v_{6}$};
\draw (574.5,66) node [anchor=north west][inner sep=0.75pt]  [font=\tiny] [align=left] {$\displaystyle v_{7}$};

\draw (223.5,166) node [anchor=north west][inner sep=0.75pt]  [font=\tiny] [align=left] {$\displaystyle v_{0}$};
\draw (273.5,166) node [anchor=north west][inner sep=0.75pt]  [font=\tiny] [align=left] {$\displaystyle v_{1}$};
\draw (323.5,166) node [anchor=north west][inner sep=0.75pt]  [font=\tiny] [align=left] {$\displaystyle v_{7}$};
\draw (373.5,166) node [anchor=north west][inner sep=0.75pt]  [font=\tiny] [align=left] {$\displaystyle v_{6}$};
\draw (423.5,166) node [anchor=north west][inner sep=0.75pt]  [font=\tiny] [align=left] {$\displaystyle v_{5}$};
\draw (473.5,166) node [anchor=north west][inner sep=0.75pt]  [font=\tiny] [align=left] {$\displaystyle v_{4}$};
\draw (523.5,166) node [anchor=north west][inner sep=0.75pt]  [font=\tiny] [align=left] {$\displaystyle v_{3}$};
\draw (574.5,166) node [anchor=north west][inner sep=0.75pt]  [font=\tiny] [align=left] {$\displaystyle v_{2}$};

        \end{tikzpicture}
}
    \caption{Preemptive cycle extension}
\end{figure}

The only possible edge weights are $0$ and $1$, so a 0-1 BFS can be run instead of Dijkstra's. The distance array stores both the distance and path to a vertex which are initialized to $0$ and $P$ for $v_0$ and $v_k$. Due to the dynamic nature of rotations, a visited array is included so that vertices are not revisited. If the preemptive cycle check finds a cycle, the path is normalized so that the end vertices are neighbours and an adjacent vertex to the path with minimal degree is returned. As soon as a vertex with positive degree or a cycle is found, the BFS can immediately exit because vertices are visited in increasing order of distance. An implementation can be found in Appendix \ref{appendix:rerouting}. 

\subsubsection{k-Unbounded Hamiltonian Path to Cycle}
If the end vertices of the path are not adjacent, it is necessary to convert the path into a k-Unbounded Hamiltonian Cycle. One approach is to adjust the rerouting process by fixing the start vertex and adjust only the last vertex. 

Instead, we introduce an idea that utilizes 0-1 BFS but with ``nodes" that are represented by a 2-d distance matrix $D[ ][ ]$, where $D[i][j]$ is the smallest number of new unbounded vertices needed for $i$ and $j$ to be endpoints. Instead of fixing an end of the path, both ends may be changed. This method allows for a considerably larger number of arrangements to be considered, as there exist $N^2$ instead of $N$ states in the BFS tree. Thus, the chance of a k-Unbounded Hamiltonian Cycle being encountered with a lower distance is increased. The preemptive cycle check from the rerouting algorithm is maintained and it is able to determine the existence of a k-Unbounded Hamiltonian Cycle. An implementation can be found in Appendix \ref{appendix:pathToCycle}.

\subsection{Time Complexity and Memory}
Marking cut vertices unbounded requires a DFS that could traverse the entire graph, and this DFS is called $N$ times, resulting in a time complexity of $O(NE)$. It is worth mentioning that there exists an $O(E)$ implementation of locating all cut vertices in a graph involving the creation of a DFS Tree and keeping track of when vertices are visited. However, in the context of the Unbounded Heuristic, both implementations have a negligibly small run time. The greedy vertex selection traverses the graph with a time complexity of $O(E)$. Reroute can visit up to $N$ vertices where every visit iterates over a path. Since Reroute can be called up to $N$ times, the time complexity is $O(N^2M)$, where M is the final length of the path. Converting from a k-Unbounded Hamiltonian Path to Cycle utilizes a BFS that can visit $N^2$ nodes where every visit iterates over a path, leading to a time complexity of $O(N^2M)$.

Theoretically, $M$ is bounded by $N^2$ since rerouting can add up to $N$ vertices to the path. Thus, the worst case time complexity is $O(NE+E+N^2M+N^2M)=O(N^4)$. However, in practice, it is nearly impossible for $M$ to be this large because the number of dead ends encountered is fairly small and the rerouting process is able to escape a dead end by visiting a small number of vertices; all trials had a path length lower than $7N$ in extremely sparse graphs and $2N$ in every other graph. This leads to an effective time complexity of $O(N^3)$.

The bulk of the memory will be allocated to storing paths in the 2-D distance matrix when converting to a k-Unbounded Hamiltonian Cycle. The heuristic can visit up to $N^2$ nodes, resulting in a memory of $O(N^2M)$, or a worst case memory of $O(N^4)$ and effective memory of $O(N^3)$. If the conversion to a k-Unbounded Hamiltonian Cycle is instead processed by calling reroute, the effective memory is $O(N^2)$.

\subsection{Experiments}
We consider two versions of the heuristic: the Unbounded Heuristic, which uses the implementation in Appendix \ref{appendix:pathToCycle} to convert to a k-Unbounded Hamiltonian Cycle, and the Fast Unbounded Heuristic, which uses the implementation in Appendix \ref{appendix:rerouting} to convert to a k-Unbounded Hamiltonian Cycle. All trials were conducted in C++ using an Intel Pentium Silver J5005 processor in an environment with 2GB RAM. It is worth mentioning that this CPU has comparatively low processing capabilities to the CPUs used to measure performance on the HCP heuristics, which may result in inflated runtimes.  

\subsubsection{Random Graphs} \label{section:randomGraphs}
The only method that is known to invariably find an m-Unbounded Hamiltonian Cycle is a brute-force algorithm. However, these approaches cannot feasibly be run on large graphs, resulting in the trials being conducted on graphs with 20 vertices. In Table \ref{table:N20}, 9 different average degrees are tested and each number represents the average over 1000 randomly generated, connected graphs. Average degree is tested instead of a uniform distribution to create more difficult instances and to allow for non-integer degrees. For each degree, the Unbounded and Fast Unbounded Heuristics are run over the same set of graphs and the Unbounded DP Algorithm is used to calculate the number of unbounded vertices needed in each graph.
\begin{table}[htbp]
\centering
\resizebox{11cm}{!}{
\begin{tabular}{|c|c|c|c|c|c|}
\hline
\begin{tabular}[c]{@{}c@{}}Average\\ Degree\end{tabular} & \begin{tabular}[c]{@{}c@{}}Unbounded \\ Heuristic\\ Vertices\end{tabular} & \begin{tabular}[c]{@{}c@{}}Fast\\ Unbounded\\ Heuristic\\ Vertices\end{tabular} & \begin{tabular}[c]{@{}c@{}}Unbounded \\ Vertices\\ Needed\end{tabular} & \begin{tabular}[c]{@{}c@{}}Unbounded \\ Heuristic\\ Difference\end{tabular} & \begin{tabular}[c]{@{}c@{}}Fast \\ Unbounded\\ Heuristic\\ Difference\end{tabular} \\ \hline
2                                                        & 10.458                                                                    & 10.458                                                                          & 10.403                                                                 & 0.055                                                                       & 0.055                                                                              \\
2.5                                                      & 5.811                                                                     & 5.955                                                                           & 5.743                                                                  & 0.068                                                                       & 0.212                                                                              \\
3                                                        & 3.409                                                                     & 3.601                                                                           & 3.338                                                                  & 0.071                                                                       & 0.263                                                                              \\
3.5                                                      & 1.979                                                                     & 2.136                                                                           & 1.942                                                                  & 0.037                                                                       & 0.194                                                                              \\
4                                                        & 1.072                                                                     & 1.200                                                                           & 1.057                                                                  & 0.015                                                                       & 0.143                                                                              \\
4.5                                                      & 0.526                                                                     & 0.587                                                                           & 0.524                                                                  & 0.002                                                                       & 0.063                                                                              \\
5                                                        & 0.290                                                                     & 0.330                                                                           & 0.290                                                                  & 0.000                                                                       & 0.040                                                                              \\
5.5                                                      & 0.152                                                                     & 0.164                                                                           & 0.152                                                                  & 0.000                                                                       & 0.012                                                                              \\
6                                                        & 0.081                                                                     & 0.088                                                                           & 0.081                                                                  & 0.000                                                                       & 0.007                                                                              \\ \hline
\end{tabular}
}
\caption{Accuracy comparison of Unbounded Heuristic and Fast Unbounded Heuristic ($N=20$)}
\label{table:N20}
\end{table}

As the average degree increases, $m$ decreases exponentially. Furthermore, the Unbounded Heuristic performs increasingly well compared to the Fast Unbounded Heuristic as the average degree increases. Both heuristics perform the worst around an average degree of 3, with the Unbounded Heuristic using $\approx 0.36\%$ extra unbounded vertices and the fast Unbounded Heuristic using $\approx 1.32\%$ extra unbounded vertices. However, accuracy improves in average degrees greater than 3, and the Unbounded Heuristic could correctly identify m-Unbounded Cycles in every graph with average degree 5 or larger. 

Additional tests were conducted at $N=500$ to gain a better understanding of performance on larger graphs. Both heuristics are tested on 10 different average degrees with the numbers representing the average over 5000 randomly generated, connected graphs for each degree. Since a brute-force algorithm is not able to run on such large graphs, the number of cut vertices is included to serve as a rough estimate for the accuracy of the heuristic. However, as previously discussed in Section \ref{section:cutVertices}, this number is not necessarily a tight lower bound for the number of unbounded vertices and is not nearly as representative of the accuracy of the heuristics as Table \ref{table:N20}. The heuristics were tested on different sets of graphs, which is the reason for discrepancies such as the number of unbounded vertices being larger in the Unbounded Heuristic compared to the Fast Unbounded Heuristic in some instances.

\begin{table}[htbp]
\centering
\resizebox{11cm}{!}{
\begin{tabular}{|c|c|c|c|c|c|}
\hline
Average Degree & Unbounded Vertices & Cut Vertices & Path Length & Time (ms) \\ \hline
2              & 310.825            & 305.636      & 1530        & 1807      \\
2.5            & 146.71             & 130.88       & 752         & 728       \\
3              & 86.46              & 73.85        & 643         & 534       \\
3.5            & 50.11              & 42.98        & 579         & 366       \\
4              & 28.67              & 25.68        & 543         & 230       \\
4.5            & 16.40              & 15.33        & 522         & 178       \\
5              & 9.73               & 9.37         & 512         & 127       \\
5.5            & 5.65               & 5.54         & 507         & 105       \\
6              & 3.42               & 3.38         & 504         & 103       \\
6.5            & 2.04               & 2.03         & 502         & 107       \\ \hline
\end{tabular}
}
\caption{Unbounded Heuristic performance ($N=500$)}
\end{table}

\begin{table}[htbp]
\centering
\resizebox{11cm}{!}{
\begin{tabular}{|c|c|c|c|c|c|}
\hline
Average Degree & Unbounded Vertices & Cut Vertices & Path Length & Time (ms) \\ \hline
2              & 310.54             & 305.40       & 989         & 291       \\
2.5            & 147.36             & 130.83       & 727         & 180       \\
3              & 87.11              & 73.94        & 637         & 132       \\
3.5            & 50.72              & 42.99        & 579         & 112       \\
4              & 29.09              & 25.71        & 543         & 87       \\
4.5            & 16.67              & 15.33        & 523         & 79       \\
5              & 9.74               & 9.26         & 513         & 66       \\
5.5            & 5.75               & 5.58         & 507         & 64       \\
6              & 3.39               & 3.33         & 504         & 59       \\
6.5            & 2.05               & 2.02         & 502         & 61       \\ \hline
\end{tabular}
}
\caption{Fast Unbounded Heuristic performance ($N=500$)}
\end{table}

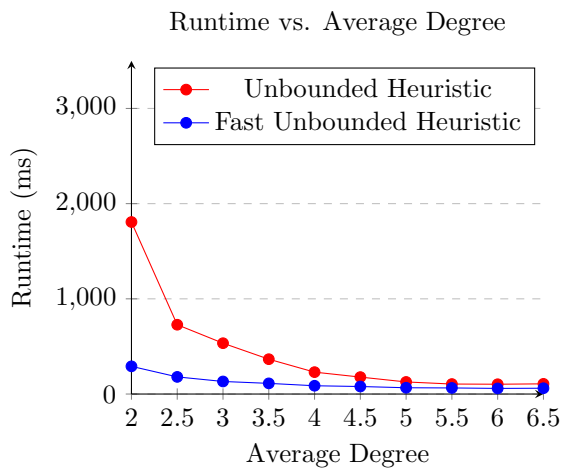
\begin{figure}[htbp]
\centering
\begin{tikzpicture}
\begin{axis}[
title=Runtime vs. Average Degree,
xlabel={Average Degree}, 
ylabel={Runtime (ms)}, 
xmin=2, xmax=6.5, 
ymin=0, ymax=3500, 
xtick={2, 2.5, 3, 3.5, 4, 4.5, 5, 5.5, 6, 6.5},
axis lines=left,
ymajorgrids=true,
grid style=dashed
]
\addplot[
    color=red,
    mark=*,
    ]
    coordinates {
    (2, 1807)(2.5, 728)(3, 534)(3.5, 366)(4, 230)(4.5, 178)(5, 127)(5.5, 105)(6, 103)(6.5,107)
    };

\addplot[
    color=blue,
    mark=*,
    ]
    coordinates {
    (2, 291)(2.5, 180)(3, 132)(3.5, 112)(4, 87)(4.5, 79)(5, 66)(5.5, 64)(6, 59)(6.5,61)
    };

\legend{Unbounded Heuristic, Fast Unbounded Heuristic}

\end{axis}
\end{tikzpicture}
\caption{Graph comparing runtimes ($N=500$)}
\end{figure}

\newpage
Both the runtimes and the performance gap between the two heuristics decrease as the average degree increases. As the number of edges increases, the 2-dimensional 0-1 BFS of the Unbounded Heuristic is able to find the m-Unbounded Hamiltonian Cycle with a smaller distance, and thus, considers a smaller number of states. Unsurprisingly, the number of unbounded vertices increases by a factor of around 25 as the graph size is increased from 20 to 500, showing how performance is independent of size and is instead influenced by the average degree.
\subsubsection{TSPLIB}
The Unbounded Heuristic can also function as a Hamiltonian Cycle heuristic, as a 0-Unbounded Hamiltonian Cycle is synonymous with a Hamiltonian Cycle. The TSPLIB dataset \cite{TSPLIB} contains $9$ instances of graphs with Hamiltonian Cycles. These graphs have served as a benchmark for a number of state-of-the-art HCP heuristics and are noteworthy for their large sizes of up to 5000 vertices and sparsity. Both Unbounded Heuristics were able to solve all instances and the runtimes are compared to the Concorde TSP Solver \cite{Concorde}, the Snakes and Ladders Heuristic \cite{SnakesLadders}, and HybridHAM \cite{HybridHAM}.
\begin{table}[htbp]
\centering
\resizebox{11cm}{!}{
\begin{tabular}{|c|c|ccccc|}
\hline
\multirow{2}{*}{Name} & \multirow{2}{*}{No. of Vertices} & \multicolumn{5}{c|}{Running Time (sec)}                                                                                                                                                                                                                                                                                    \\ \cline{3-7} 
                      &                                  & \multicolumn{1}{c|}{Concorde} & \multicolumn{1}{c|}{\begin{tabular}[c]{@{}c@{}}Snakes and \\ Ladders Heuristic\end{tabular}} & \multicolumn{1}{c|}{Hybrid Ham} & \multicolumn{1}{c|}{\begin{tabular}[c]{@{}c@{}}Unbounded \\ Heuristic\end{tabular}} & \begin{tabular}[c]{@{}c@{}}Fast Unbounded\\  Heuristic\end{tabular} \\ \hline
alb1000               & 1000                             & \multicolumn{1}{c|}{4.95}     & \multicolumn{1}{c|}{0.1}                                                                     & \multicolumn{1}{c|}{0.2656}     & \multicolumn{1}{c|}{0.223}                                                          & 0.056                                                               \\
alb2000               & 2000                             & \multicolumn{1}{c|}{7.30}     & \multicolumn{1}{c|}{0.8}                                                                     & \multicolumn{1}{c|}{1.4375}     & \multicolumn{1}{c|}{0.961}                                                          & 0.806                                                               \\
alb3000a              & 3000                             & \multicolumn{1}{c|}{9.56}     & \multicolumn{1}{c|}{3.44}                                                                    & \multicolumn{1}{c|}{2.7656}     & \multicolumn{1}{c|}{8.131}                                                          & 2.713                                                               \\
alb3000b              & 3000                             & \multicolumn{1}{c|}{9.94}     & \multicolumn{1}{c|}{3.64}                                                                    & \multicolumn{1}{c|}{1.5781}     & \multicolumn{1}{c|}{3.660}                                                          & 3.184                                                               \\
alb3000c              & 3000                             & \multicolumn{1}{c|}{9.95}     & \multicolumn{1}{c|}{4.31}                                                                    & \multicolumn{1}{c|}{1.8438}     & \multicolumn{1}{c|}{5.787}                                                          & 1.934                                                               \\
alb3000d              & 3000                             & \multicolumn{1}{c|}{10.14}    & \multicolumn{1}{c|}{4.03}                                                                    & \multicolumn{1}{c|}{1.6406}     & \multicolumn{1}{c|}{3.544}                                                          & 2.460                                                               \\
alb3000e              & 3000                             & \multicolumn{1}{c|}{10.44}    & \multicolumn{1}{c|}{4.29}                                                                    & \multicolumn{1}{c|}{1.6719}     & \multicolumn{1}{c|}{3.676}                                                          & 2.615                                                               \\
alb4000               & 4000                             & \multicolumn{1}{c|}{13.45}    & \multicolumn{1}{c|}{13.89}                                                                   & \multicolumn{1}{c|}{3.0625}     & \multicolumn{1}{c|}{7.017}                                                          & 3.771                                                               \\
alb5000               & 5000                             & \multicolumn{1}{c|}{17.24}    & \multicolumn{1}{c|}{14.12}                                                                   & \multicolumn{1}{c|}{8.9844}     & \multicolumn{1}{c|}{13.571}                                                         & 7.092                                                               \\ \hline
\end{tabular}
}
\caption{Comparison of runtimes on TSPLIB graphs}
\end{table}

The Fast Unbounded Heuristic performs quicker than the Concorde and the Snakes and Ladders Heuristics and is comparable to HybridHAM. The Unbounded Heuristic takes several factors longer than the Fast Unbounded Heuristic but is still able to solve all instances relatively quickly. Interestingly, despite the Unbounded Heuristic having memory requirements of up to $O(N^3)$, the Unbounded Heuristic is able to run on graphs with over 1000 nodes. This highlights the dynamic nature of the memory, and the quicker the conversion from a k-Unbounded Hamiltonian Path to Cycle takes, the less memory used.

\subsubsection{FHCP Challenge Set}
Another dataset of graphs containing Hamiltonian Cycles is the FHCP Challenge Set \cite{FHCP}. The FHCP graphs are unique in the sense that they are specifically designed to resist existing heuristics through a complex underlying structure. During the one year period that the challenge was running, 16 of the 1001 instances remained unsolved and only two teams were able to solve over half of the instances. In addition, the challenge was not limited to the performance of a single program and the top performing teams utilized multiple algorithms that took advantage of properties of the graph over many months.

Out of the first 250 instances, the Unbounded Heuristic found 88 Hamiltonian Cycles, the Fast Unbounded Heuristic found 11 Hamiltonian Cycles, and each heuristic found the same set of 131 Hamiltonian Paths. In comparison, HybridHAM found 13 Hamiltonian Cycles and 75 Hamiltonian Paths in the first 250 instances. The conversion from a Hamiltonian Path to a Hamiltonian Cycle implemented in Appendix \ref{appendix:pathToCycle} worked exceptionally well and was able to convert 88 of the 131 Hamiltonian Paths to Hamiltonian Cycles. The Fast Unbounded Heuristic was able to run up to hundreds of times faster than both the Unbounded Heuristic and HybridHAM while marking less than 1$\%$ of the vertices unbounded in the vast majority of graphs. As Table \ref{tableFHCP} demonstrates, the Unbounded Heuristic was able to reduce the number of unbounded vertices by a notable amount compared to the Fast Unbounded Heuristic throughout many instances.

Due to the recency of the FHCP Challenge Set, many other established heuristics either have not been tested on this set or do not have tests publicly available. The lack of published data and the similarities of HybridHAM as a cubic time complexity heuristic that utilizes rotations are the primary reasons HybridHAM is used as the main point of reference. 

\begin{table}[htbp]
\begin{minipage}{.5\linewidth}
\resizebox{6.5cm}{!}{
\begin{tabular}{|c|c|c|c|c|c|}
\hline
Graph \# & \begin{tabular}[c]{@{}c@{}}No. of \\ Vertices\end{tabular} & \begin{tabular}[c]{@{}c@{}}Unbounded \\ Heuristic\\ Vertices\end{tabular} & \begin{tabular}[c]{@{}c@{}}Unbounded\\  Heuristic\\  Time (ms)\end{tabular} & \begin{tabular}[c]{@{}c@{}}Fast \\ Unbounded\\ Heuristic\\  Vertices\end{tabular} & \begin{tabular}[c]{@{}c@{}}Fast\\  Unbounded\\ Heuristic\\  Time (ms)\end{tabular} \\ \hline
1        & 66                                                         & 1                                                                         & 25                                                                          & 1                                                                           & 0                                                                               \\
2        & 70                                                         & 0                                                                         & 1                                                                           & 1                                                                           & 0                                                                               \\
3        & 78                                                         & 0                                                                         & 19                                                                          & 1                                                                           & 0                                                                               \\
4        & 84                                                         & 0                                                                         & 29                                                                          & 2                                                                           & 1                                                                               \\
5        & 90                                                         & 1                                                                         & 46                                                                          & 1                                                                           & 1                                                                               \\
6        & 94                                                         & 0                                                                         & 7                                                                           & 1                                                                           & 1                                                                               \\
7        & 102                                                        & 3                                                                         & 64                                                                          & 3                                                                           & 1                                                                               \\
8        & 108                                                        & 0                                                                         & 96                                                                          & 2                                                                           & 1                                                                               \\
9        & 114                                                        & 1                                                                         & 96                                                                          & 1                                                                           & 0                                                                               \\
10       & 118                                                        & 0                                                                         & 51                                                                          & 1                                                                           & 1                                                                               \\
11       & 126                                                        & 1                                                                         & 109                                                                         & 1                                                                           & 1                                                                               \\
12       & 132                                                        & 0                                                                         & 93                                                                          & 2                                                                           & 2                                                                               \\
13       & 138                                                        & 1                                                                         & 140                                                                         & 1                                                                           & 0                                                                               \\
14       & 138                                                        & 1                                                                         & 135                                                                         & 1                                                                           & 1                                                                               \\
15       & 150                                                        & 1                                                                         & 172                                                                         & 1                                                                           & 0                                                                               \\
16       & 156                                                        & 0                                                                         & 173                                                                         & 2                                                                           & 2                                                                               \\
17       & 162                                                        & 1                                                                         & 235                                                                         & 1                                                                           & 0                                                                               \\
18       & 166                                                        & 1                                                                         & 3                                                                           & 3                                                                           & 1                                                                               \\
19       & 170                                                        & 3                                                                         & 7                                                                           & 4                                                                           & 6                                                                               \\
20       & 174                                                        & 1                                                                         & 272                                                                         & 1                                                                           & 1                                                                               \\
21       & 180                                                        & 0                                                                         & 217                                                                         & 2                                                                           & 2                                                                               \\
22       & 186                                                        & 1                                                                         & 358                                                                         & 1                                                                           & 1                                                                               \\
23       & 190                                                        & 0                                                                         & 116                                                                         & 1                                                                           & 2                                                                               \\
24       & 198                                                        & 1                                                                         & 92                                                                          & 3                                                                           & 2                                                                               \\
25       & 204                                                        & 1                                                                         & 627                                                                         & 2                                                                           & 1                                                                               \\
26       & 210                                                        & 1                                                                         & 695                                                                         & 1                                                                           & 2                                                                               \\
27       & 214                                                        & 1                                                                         & 3                                                                           & 1                                                                           & 1                                                                               \\
28       & 222                                                        & 1                                                                         & 577                                                                         & 1                                                                           & 1                                                                               \\
29       & 228                                                        & 0                                                                         & 271                                                                         & 2                                                                           & 2                                                                               \\
30       & 234                                                        & 1                                                                         & 702                                                                         & 1                                                                           & 2                                                                               \\
31       & 238                                                        & 1                                                                         & 5                                                                           & 1                                                                           & 1                                                                               \\
32       & 246                                                        & 1                                                                         & 839                                                                         & 1                                                                           & 1                                                                               \\
33       & 252                                                        & 0                                                                         & 661                                                                         & 2                                                                           & 3                                                                               \\
34       & 258                                                        & 1                                                                         & 3                                                                           & 1                                                                           & 1                                                                               \\
35       & 262                                                        & 0                                                                         & 46                                                                          & 1                                                                           & 2                                                                               \\
36       & 270                                                        & 1                                                                         & 940                                                                         & 1                                                                           & 1                                                                               \\
37       & 276                                                        & 0                                                                         & 951                                                                         & 2                                                                           & 2                                                                               \\
38       & 282                                                        & 3                                                                         & 1108                                                                        & 3                                                                           & 3                                                                               \\
39       & 286                                                        & 1                                                                         & 51                                                                          & 3                                                                           & 2                                                                               \\
40       & 294                                                        & 1                                                                         & 1456                                                                        & 1                                                                           & 1                                                                               \\
41       & 300                                                        & 0                                                                         & 1317                                                                        & 2                                                                           & 4                                                                               \\
42       & 306                                                        & 1                                                                         & 1438                                                                        & 1                                                                           & 2                                                                               \\
43       & 310                                                        & 0                                                                         & 72                                                                          & 1                                                                           & 2                                                                               \\
44       & 318                                                        & 1                                                                         & 6                                                                           & 1                                                                           & 2                                                                               \\
45       & 324                                                        & 6                                                                         & 99                                                                          & 7                                                                           & 16                                                                              \\
46       & 330                                                        & 1                                                                         & 1865                                                                        & 1                                                                           & 3                                                                               \\
47       & 334                                                        & 6                                                                         & 87                                                                          & 11                                                                          & 14                                                                              \\
48       & 338                                                        & 6                                                                         & 2053                                                                        & 7                                                                           & 22                                                                              \\
49       & 342                                                        & 1                                                                         & 581                                                                         & 3                                                                           & 4                                                                               \\
50       & 348                                                        & 0                                                                         & 2478                                                                        & 1                                                                           & 4                                                                               \\ \hline
\end{tabular}
}
\end{minipage}%
\begin{minipage}{.5\linewidth}
\resizebox{6.5cm}{!}{
\begin{tabular}{|c|c|c|c|c|c|}
\hline
Graph \# & \begin{tabular}[c]{@{}c@{}}No. of \\ Vertices\end{tabular} & \begin{tabular}[c]{@{}c@{}}Unbounded \\ Heuristic\\ Vertices\end{tabular} & \begin{tabular}[c]{@{}c@{}}Unbounded\\  Heuristic\\  Time (ms)\end{tabular} & \begin{tabular}[c]{@{}c@{}}Fast \\ Unbounded\\ Heuristic\\  Vertices\end{tabular} & \begin{tabular}[c]{@{}c@{}}Fast\\  Unbounded\\ Heuristic\\  Time (ms)\end{tabular} \\ \hline
51       & 354                                                        & 1                                                                         & 2945                                                                        & 1                                                                           & 4                                                                               \\
52       & 358                                                        & 0                                                                         & 595                                                                         & 1                                                                           & 3                                                                               \\
53       & 366                                                        & 1                                                                         & 2662                                                                        & 1                                                                           & 2                                                                               \\
54       & 372                                                        & 4                                                                         & 22                                                                          & 6                                                                           & 10                                                                              \\
55       & 378                                                        & 2                                                                         & 1800                                                                        & 3                                                                           & 6                                                                               \\
56       & 382                                                        & 12                                                                        & 1472                                                                        & 17                                                                          & 28                                                                              \\
57       & 390                                                        & 1                                                                         & 3533                                                                        & 1                                                                           & 3                                                                               \\
58       & 396                                                        & 0                                                                         & 3096                                                                        & 2                                                                           & 5                                                                               \\
59       & 400                                                        & 0                                                                         & 8                                                                           & 0                                                                           & 3                                                                               \\
60       & 402                                                        & 3                                                                         & 2409                                                                        & 3                                                                           & 7                                                                               \\
61       & 406                                                        & 2                                                                         & 3416                                                                        & 3                                                                           & 5                                                                               \\
62       & 408                                                        & 5                                                                         & 2234                                                                        & 7                                                                           & 44                                                                              \\
63       & 414                                                        & 2                                                                         & 2258                                                                        & 3                                                                           & 7                                                                               \\
64       & 416                                                        & 0                                                                         & 1466                                                                        & 3                                                                           & 13                                                                              \\
65       & 419                                                        & 0                                                                         & 4340                                                                        & 2                                                                           & 8                                                                               \\
66       & 426                                                        & 1                                                                         & 9                                                                           & 1                                                                           & 4                                                                               \\
67       & 430                                                        & 1                                                                         & 16                                                                          & 1                                                                           & 8                                                                               \\
68       & 438                                                        & 1                                                                         & 5183                                                                        & 1                                                                           & 3                                                                               \\
69       & 444                                                        & 0                                                                         & 4802                                                                        & 2                                                                           & 10                                                                              \\
70       & 450                                                        & 3                                                                         & 19                                                                          & 3                                                                           & 14                                                                              \\
71       & 454                                                        & 0                                                                         & 47                                                                          & 1                                                                           & 4                                                                               \\
72       & 460                                                        & 0                                                                         & 12                                                                          & 0                                                                           & 4                                                                               \\
73       & 462                                                        & 1                                                                         & 6422                                                                        & 1                                                                           & 7                                                                               \\
74       & 462                                                        & 12                                                                        & 138                                                                         & 13                                                                          & 34                                                                              \\
75       & 468                                                        & 0                                                                         & 4404                                                                        & 2                                                                           & 10                                                                              \\
76       & 471                                                        & 1                                                                         & 5895                                                                        & 2                                                                           & 19                                                                              \\
77       & 474                                                        & 20                                                                        & 3461                                                                        & 25                                                                          & 41                                                                              \\
78       & 478                                                        & 0                                                                         & 359                                                                         & 1                                                                           & 7                                                                               \\
79       & 480                                                        & 0                                                                         & 11                                                                          & 0                                                                           & 5                                                                               \\
80       & 486                                                        & 10                                                                        & 1553                                                                        & 21                                                                          & 54                                                                              \\
81       & 492                                                        & 7                                                                         & 840                                                                         & 10                                                                          & 24                                                                              \\
82       & 496                                                        & 2                                                                         & 5546                                                                        & 3                                                                           & 22                                                                              \\
83       & 498                                                        & 1                                                                         & 11                                                                          & 1                                                                           & 5                                                                               \\
84       & 500                                                        & 0                                                                         & 11                                                                          & 0                                                                           & 6                                                                               \\
85       & 502                                                        & 0                                                                         & 1647                                                                        & 1                                                                           & 5                                                                               \\
86       & 503                                                        & 1                                                                         & 6650                                                                        & 3                                                                           & 49                                                                              \\
87       & 507                                                        & 12                                                                        & 3054                                                                        & 15                                                                          & 76                                                                              \\
88       & 507                                                        & 1                                                                         & 251                                                                         & 2                                                                           & 62                                                                              \\
89       & 510                                                        & 1                                                                         & 18                                                                          & 1                                                                           & 9                                                                               \\
90       & 510                                                        & 0                                                                         & 14                                                                          & 0                                                                           & 5                                                                               \\
91       & 516                                                        & 0                                                                         & 6038                                                                        & 2                                                                           & 11                                                                              \\
92       & 522                                                        & 12                                                                        & 3168                                                                        & 21                                                                          & 77                                                                              \\
93       & 526                                                        & 2                                                                         & 5348                                                                        & 3                                                                           & 14                                                                              \\
94       & 534                                                        & 1                                                                         & 10972                                                                       & 1                                                                           & 12                                                                              \\
95       & 540                                                        & 0                                                                         & 8525                                                                        & 2                                                                           & 12                                                                              \\
96       & 540                                                        & 0                                                                         & 15                                                                          & 0                                                                           & 5                                                                               \\
97       & 546                                                        & 1                                                                         & 11777                                                                       & 1                                                                           & 10                                                                              \\
98       & 546                                                        & 23                                                                        & 927                                                                         & 25                                                                          & 90                                                                              \\
99       & 550                                                        & 1                                                                         & 736                                                                         & 2                                                                           & 7                                                                               \\
100      & 558                                                        & 1                                                                         & 13108                                                                       & 1                                                                           & 4                                                                               \\ \hline
\end{tabular}
}
\end{minipage}%
\caption{Performance of Unbounded and Fast Unbounded Heuristics on the first 100 instances of the FHCP Challenge Set}
\label{tableFHCP}
\end{table}

\section{Conclusions}
In this paper we explore a novel variant of the Hamiltonian Cycle Problem in which the objective is to find an m-Unbounded Hamiltonian Cycle. We first consider the problem on the only two non-Hamiltonian instances of the generalized Petersen graphs. We introduce a sufficient condition for the existence of a 1-Unbounded Hamiltonian Cycle and use it to show that $m=1$ for every hypohamiltonian graph, including the first set of instances. For the second set of instances, we prove that $m=2$ for $n=8$ and $2\le m\le 3$ for $n>8$. Additionally, we discuss linear-time algorithms to find an m-Unbounded Hamiltonian Cycle or Path in trees: finding an m-Unbounded Hamiltonian Path turned out to be quite a bit more complicated than a cycle. We then consider the problem in general graphs and show that even brute-force approaches are non-trivial. We propose the Unbounded DP Algorithm, a brute-force dynamic programming algorithm inspired by the Held-Karp Algorithm to find an m-Unbounded Hamiltonian Cycle with a worst case time complexity of $O(4^{N}N^2)$ and memory $O(2^{N}N)$. We also propose two constructions: the first is a construction in which the HCP may be applied to verify the existence of a k-Unbounded Hamiltonian Cycle given a set of unbounded vertices, and the second is a construction in which the ATSP may be applied to directly solve the k-Unbounded HCP. However, there are notable tradeoffs to these conversions, including an increase of the size of the graph by a factor of N as well as a brute force approach having to be adopted in the HCP conversion, both of which drastically increase runtime.

Finally, we propose two deterministic heuristics with an effective cubic time complexity: the Unbounded Heuristic and the Fast Unbounded Heuristic. The goal of the heuristics is to find an m-Unbounded Hamiltonian Cycle, or if this is not possible, get as close as possible to doing so in a significantly shorter time than a brute force approach. Table \ref{table:N20} shows that the Unbounded Heuristic is able to find m-Unbounded Hamiltonian Cycles with a very low margin of error in random graphs, with the Fast Unbounded Heuristic performing worse but still reasonably well. The Fast Unbounded Heuristic can perform up to six times faster on average than the Unbounded Heuristic in sparse graphs. The heuristics are also shown to be competitive Hamiltonian Cycle solvers. The Unbounded Heuristic could find Hamiltonian Cycles in all TSPLIB instances of up to 5000 nodes and 88 of the first 250 instances of the very challenging FHCP graphs. In comparison, HybridHAM, an HCP heuristic with similarities in both runtime and implementation, could only find Hamiltonian Cycles in 13 of the first 250 instances. The novel idea of converting a k-Unbounded Hamiltonian Path to Cycle using a 0-1 BFS with up to $N^2$ states performed surprisingly well and was able to convert over two-thirds of the Hamiltonian Paths to Hamiltonian Cycles in the FHCP graphs. Thus, it is very likely that if a preprocessing phase is implemented to first find a Hamiltonian Path, the Unbounded Heuristic would be able to solve many more instances of the FHCP challenge set. The Fast Unbounded Heuristic is comparable to HybridHAM in general graphs, but is able to consistently run hundreds of times faster in the harder instances of the FHCP set with a similar accuracy. 

\subsection{Future Work}
The concept of unbounded vertices in paths is unexplored, so there are vast possibilities for research on this topic. In the context of m-Unbounded Hamiltonian Cycles, we present several avenues for future research. 

In the second set of instances of the non-Hamiltonian generalized Petersen graphs, we establish that $2\le m\le 3$ when $n>8$. However, tests utilizing the algorithms discussed in this paper strongly suggest that $m=3$ for these instances. It remains to prove that there does not exist a 2-Unbounded Hamiltonian Cycle when $n>8$.

There is the possibility that there exists an algorithm with a better time complexity than the Unbounded DP Algorithm that does not begin with the costly ``brute force" approach of iterating over all subsets of vertices. There may also be a Monte Carlo algorithm similar to those utilized in the HCP that could be applied to the k-Unbounded HCP.

Although we propose a transformation to the ATSP, we have not run any trials involving a designated solver. It would be interesting to see a runtime comparison of an algorithm such as the Concorde TSP Solver on this construction compared to the heuristics proposed in the paper. We also suspect that there exists an m-Unbounded Hamiltonian Cycle such that the number of instances a vertex $v$ appears is bounded by a function of $d_v$. If such a relationship exists, the size of the TSP construction would be greatly reduced in sparse graphs.

No trials on both heuristics have found a path length of an m-Unbounded Hamiltonian Cycle greater than $7N$. It remains to prove some linear upper bound for the final path length of the heuristic. Experimental results have also shown that the difference between $m$ and the number of cut-vertices $c$ in a graph increases as the average degree decreases, and there may be a formal relationship between $m-c$ and the average degree.

The heuristics presented in the paper are deterministic in nature, and accuracy may further be improved by incorporating randomisation techniques or considering multiple start vertices at the cost of increased runtime.  

Lastly, further exploration of k-Unbounded Cycles is encouraged in special graphs outside of those discussed in the paper.

\section{Acknowledgements}
I would like to thank Jesse Stern for his guidance throughout the writing of the paper. I would also like to give a special acknowledgement to Michael Haythorpe, both for his providing of the FHCP Challenge set and his willingness to discuss the content of the paper and suggest further improvements.

\printbibliography

\newpage
\begin{appendices}

\section{Preemptive Cycle Check}
\label{appendix:cycleCheck}
\begin{algorithm}[!htbp] 
\DontPrintSemicolon
\SetKwInput{KwInput}{Input} 
\SetKwFunction{fIsCycle}{isCycle}
\SetKwProg{Fn}{void}{:}{}
\SetKwProg{Int}{int}{:}{}

\Int{\fIsCycle{v, isLast}}{
    s $\leftarrow$ v[0], e $\leftarrow$ v.back()\;
    
    \For{$i\gets1$ \KwTo v.size()-2}{
        \uIf{A[v[i]][e] and A[s][v[i+1]]}{
            reverse(v, i+1 to end)\;
            break\;
        } 
    }
    \BlankLine
    e $\leftarrow$ v.back()\;
    \lIf{!A[s][e]}{ return -1}
    \lIf{isLast}{return 1}
    \BlankLine
    \tcp{Find lowest degree neighbour and adjust path accordingly}
    nv=0, cur=0\;
    \For{$i\gets1$ \KwTo v.size()-1}{
        \For{k in adj[v[i]]}{
            \lIf{vis[k]}{continue}
            \uIf{deg[k] $<$ (nv is 0 ? INF : deg[nv])}{
                nv $\leftarrow$ k\;
                cur $\leftarrow$ i\;
            }
        }
    }
    \BlankLine
    \uIf{cur is 0}{
        v.push(v[0])\;
        v.erase(v.begin())\;
    }
    \Else{
        reverse(v, 0 to cur-1)\;
        reverse(v, cur to end)\;
    }
    return nv\;
}
\end{algorithm}

\newpage

\section{Rerouting}
\label{appendix:rerouting}

\begin{algorithm}[!htbp] \label{algReroute}
\small
\DontPrintSemicolon
\SetKwInput{KwInput}{Input} 
\SetNoFillComment
\SetKwFunction{fReroute}{Reroute}
\SetKwProg{Fn}{void}{:}{}
\SetKwProg{Int}{int}{:}{}

\Int{\fReroute{path, isLast}}{
   vis[path[0]] $\leftarrow$ true\;
  Q.pushFront(path.back())\;
  \While{Q is not empty}{
     v $\leftarrow$ Q.front()\;
     Q.popFront()\;
     \BlankLine
     \uIf{!isLast and deg[v] $>$ 0}{return v}
     \lIf{vis[v]}{continue}
     vis[v] $\leftarrow$ true\;
     \BlankLine
     dist $\leftarrow$ d[v].f, cur $\leftarrow$ d[v].s\;
     F $\leftarrow$ cur[0], L $\leftarrow$ cur.back()\;
     \BlankLine
     
     cyc $\leftarrow$ isCycle(cur, isLast)\;
     \uIf{cyc is not -1}{
        path $\leftarrow$ cur\;
        return cyc\;
     }
     \BlankLine
      \For{$i\gets1$ \KwTo cur.size()-1}{
        nv $\leftarrow$ cur[i]\;
        \uIf{dist $<$ d[nv].f}{
            \tcp{Traveling to an already unbounded vertex}
            \uIf{i $\ge$ 1 and i $<$ cur.size()-1 and UB[nv]}{
                \uIf{!isLast and A[f][nv]}{
                    cur.pushFront(nv)\;
                    d[nv] $\leftarrow$ (dist, cur), q.pushFront(nv)\;
                    cur.popFront();
                    continue\;
                }
                \ElseIf{A[nv][L]}{
                    cur.pushBack(nv)\;
                    d[nv] $\leftarrow$ (dist, cur), q.pushFront(nv)\;
                    cur.popBack()\;
                    continue\;
                }
            }
            \tcp{Rotating around last vertex}
            \uIf{i $\ge$ 2 and i $<$ cur.size()-1 A[cur[i-1]][L]}{
                reverse(cur, i to end)\;
                d[nv] $\leftarrow$ (dist, cur), q.pushFront(nv)\;
                reverse(cur, i to end)\;
                continue\;
            } 
            \tcp{Rotating around start vertex}
            \uIf{!isLast and i $\ge$ 1 and i $<$ cur.size()-2 and A[cur[i+1]][F]}{
                reverse(cur, start to i)\;
                d[nv] $\leftarrow$ (dist, cur), q.pushFront(nv)\;
                reverse(cur, start to i)\;
                continue\;
            }
        }
        \tcp{Traveling to bounded vertex and making it unbounded}
        \uIf{dist+1 $<$ d[nv].f and i $\ge$ 1 and i $<$ cur.size()-1}{
            \uIf{!isLast and A[F][nv]}{
                reverse(cur), cur.pushBack(nv)\;
                d[nv] $\leftarrow$ (dist+1, cur), q.pushBack(nv)\;
                cur.popBack(), reverse(cur)\;
            }
            \ElseIf{A[nv][L]}{
                cur.pushBack(nv)\;
                d[nv] $\leftarrow$ (dist+1, cur), q.pushBack(nv)\;
                cur.popBack()\;
            }
        }
      }
  }
}
\end{algorithm}
isLast is true when reroute is used to convert from a k-Unbounded Hamiltonian Path to Cycle
\newpage
\section{Converting k-Unbounded Hamiltonian Path to Cycle}
\label{appendix:pathToCycle}

\begin{algorithm}[!htbp] \label{algReroute2}
\footnotesize
\DontPrintSemicolon
\SetKwInput{KwInput}{Input} 
\SetNoFillComment
\SetKwFunction{fpathToCycle}{pathToCycle}
\SetKwProg{Fn}{void}{:}{}

\Fn{\fpathToCycle{path}}{
  F $\leftarrow$ min(path[0], path.back()), L $\leftarrow$ max(path[0], path.back())\;
  D[F][L] $\leftarrow$ (0, path)\;
  Q.pushFront(min(path[0], path.back()), max(path[0], path.back()))\;
  \While{Q is not empty}{
     p $\leftarrow$ Q.front()\;
     Q.popFront()\;
     \BlankLine
     \lIf{vis[p.f][p.s]}{continue}
     vis[p.f][p.s] $\leftarrow$ 1\;
     dist $\leftarrow$ D[p.f][p.s].f, cur $\leftarrow$ D[p.f][p.s].s\;
     F $\leftarrow$ cur[0], L  $\leftarrow$ cur.back()\;
     \BlankLine
     cyc $\leftarrow$ isCycle(cur, true)\;
     \uIf{cyc is not -1}{
        path $\leftarrow$ cur\;
        return\;
     }
     \tcc{Traveling to an already unbounded vertex}
     \For{$i\gets1$ \KwTo cur.size()-2}{
        nv $\leftarrow$ cur[i]\;
        \lIf{!UB[nv]}{continue}
        \uIf{A[F][nv] and dist $<$ D[min(L, nv)][max(L, nv)].f}{
            cur.pushFront(nv)\;
            tP $\leftarrow$ (min(L, nv), max(L, nv))\;
            D[tP.f][tP.s] $\leftarrow$ (dist, cur), Q.pushFront(tP.f, tP.s)\;
            cur.popFront()\;
        }
        \ElseIf{A[nv][L] and dist $<$ D[min(F, nv)][max(F, nv)].f}{
            cur.pushBack(nv)\;
            tP $\leftarrow$ (min(F, nv), max(F, nv))\;
            D[tP.f][tP.s] $\leftarrow$ (dist, cur), Q.pushFront(tP.f, tP.s)\;
            cur.popBack()\;
        }
     }
     \tcc{Rotating around last vertex}
     \For{$i\gets2$ \KwTo cur.size()-2}{
        nv $\leftarrow$ cur[i], P $\leftarrow$ cur[i-1]\;
        \uIf{A[P][L] and dist $<$ D[min(F, nv)][max(F, nv)].f}{
            reverse(cur, i to end)\;
            tP $\leftarrow$ (min(F, nv), max(F, nv))\;
            D[tP.f][tP.s] $\leftarrow$ (dist, cur), Q.pushFront(tP.f, tP.s)\;
            reverse(cur, i to end)\;
        }
     }
     \tcc{Rotating around start vertex}
     \For{$i\gets1$ \KwTo cur.size()-3}{
        nv $\leftarrow$ cur[i], W $\leftarrow$ cur[i+1]\;
        \uIf{A[W][F] and dist $<$ D[min(L, nv)][max(L, nv)].f}{
            reverse(cur, start to i)\;
            tP $\leftarrow$ (min(L, nv), max(L, nv))\;
            D[tP.f][tP.s] $\leftarrow$ (dist, cur), Q.pushFront(tP.f, tP.s)\;
            reverse(cur, start to i)\;
        }
     }
     \tcc{Traveling to bounded vertex and making it unbounded}
     \For{$i\gets1$ \KwTo cur.size()-2}{
        nv $\leftarrow$ cur[i]\;
        \lIf{UB[nv]}{continue}
        \uIf{A[F][nv] and dist+1 $<$ D[min(L, nv)][max(L, nv)].f}{
            cur.pushFront(nv)\;
            tP $\leftarrow$ (min(L, nv), max(L, nv))\;
            D[tP.f][tP.s] $\leftarrow$ (dist+1, cur), Q.pushBack(tP.f, tP.s)\;
            cur.popFront()\;
        }
        \ElseIf{A[nv][L] and dist+1 $<$ D[min(F, nv)][max(F, nv)].f}{
            cur.pushBack(nv)\;
            tP $\leftarrow$ (min(F, nv), max(F, nv))\;
            D[tP.f][tP.s] $\leftarrow$ (dist+1, cur), Q.pushFront(tP.f, tP.s)\;
            cur.popBack()\;
        }
     }
  }
}

\end{algorithm}

To save memory, only $D[i][j]$ where $i\le j$ are considered, hence the use of the min and max operators.

\end{appendices}

\end{document}